\newcommand{\ZZ}{{\mathbb Z}}
\newcommand{\RR}{{\mathbb R}}
\newcommand{\nn}{{\nonumber}}
\newtheorem{remark}{Remark}[section]
\newcommand{\D}{\left\langle D_0 \right\rangle}
\newcommand{\DV}{\left\langle D_{V_0} \right\rangle}
\newcommand{\eps}{\epsilon}
\title{Scattering, homogenization and interface effects for oscillatory potentials with strong singularities} 
 \author{Vincent Duch\^ene\thanks{\'Equipe EDP, DMA - \'Ecole Normale Supe\'rieure 
 45, rue d'Ulm, 75230 Paris Cedex 05 - France.}
     \and Michael I. Weinstein\thanks{Department of Applied Physics and Applied Mathematics, Columbia University 
 200 S. W. Mudd, 500 W. 120th St., New York City, NY 10027, USA.}}
\begin{document}

\maketitle

\begin{abstract} 
We study one-dimensional scattering for 
a decaying potential with rapid periodic oscillations and strong localized singularities. In particular, we consider
the Schr\"odinger equation
\begin{equation}
 H_\epsilon\ \psi \equiv\ \left(\ -\partial_x^2+V_0(x)+q\left(x,x/\epsilon\right)\ \right)\psi=k^2\psi
\nn\end{equation} 
for $k\in\RR$ and $\epsilon\ll 1$. Here, $q(\cdot, y+1)=q(\cdot,y)$, has mean zero and $|V_0(x)+q(x,\cdot)|\to0$ as $|x|\to\infty$.
The distorted plane waves of $H_\epsilon$ are solutions of the form: $e_{V^\epsilon\pm}(x;k) = e^{\pm ikx}+u^s_\pm(x;k)$, $u^s_\pm$ outgoing as $|x|\to\infty$.
 We derive their $\epsilon$ small asymptotic behavior, from which the asymptotic behavior of scattering quantities such as the transmission coefficient, $t^\epsilon(k)$, follow.
 
 Let $t_0^{hom}(k)$ denote the homogenized transmission coefficient associated with the average potential $V_0$. 
If the potential is smooth, then classical homogenization theory
gives asymptotic expansions of, for example, distorted plane waves, and transmission and reflection coefficients. 
Singularities of $V_0$ or discontinuities of $q_\epsilon$ are ``interfaces'' across which a solution must satisfy 
 interface conditions (continuity or jump conditions). To satisfy these conditions it is necessary to introduce  {\it interface correctors},
 which are highly oscillatory in $\epsilon$.

 Our theory admits potentials which have discontinuities in the microstructure, $q_\epsilon(x)$ as well as strong singularities 
 in the background potential, $V_0(x)$. A consequence of our main results is that $t^\epsilon(k)-t_0^{hom}(k)$, the error in the homogenized transmission coefficient is (i) ${\mathcal O}(\epsilon^2)$ if $q_\epsilon$ is continuous and
  (ii) ${\mathcal O}(\epsilon)$ if $q_\epsilon$ has discontinuities.
Moreover, in the discontinuous case the correctors are highly oscillatory in $\epsilon$, {\it i.e.} $\sim \exp({2\pi i\frac{\nu}{\epsilon}})$, for $\epsilon\ll1$. Thus a first order corrector is not well-defined since $\epsilon^{-1}\left(t^\epsilon(k)-t_0^{hom}(k)\right)$ does not have a limit as $\epsilon\to0$. This expression may have limits which depend on the particular sequence through which $\epsilon$ tends to zero.  The analysis is based on a (pre-conditioned) Lippman-Schwinger equation, introduced in~\cite{GW:05}.
\end{abstract}

\begin{keywords} 
Schr\"odinger operator, transmission coefficient, scattering theory, interface effects, microstructure, homogenization
\end{keywords}

\begin{AMS}
35J10, 35P25, 35B40, 35B27 
\end{AMS}

\pagestyle{myheadings}
\thispagestyle{plain}
\markboth{V. DUCH\^ENE AND M. I. WEINSTEIN}{SCATTERING FOR OSCILLATORY POTENTIALS WITH STRONG SINGULARITIES}

%INTRODUCTION
\section{Introduction}\label{sec:introduction}

An important method for computing the effective properties of highly oscillatory media
 is the method of homogenization. The goal of homogenization is to approximate a highly oscillatory medium, described by a differential equation with oscillatory coefficients, by 
 an approximate and homogeneous medium, described by a ``homogenized'' differential equation with constant or slowly varying coefficients. In its regime of validity, the homogenized differential equation (i) predicts effective properties which are approximately those of the heterogeneous medium and (ii) is, by comparison with the full problem, much simpler to study either analytically or by numerical simulation.
 
 While the homogenized limit can often be obtained by a formal multiple scale expansion or by variational methods~\cite{BLP,JKO,Allaire:08,Tartar:09}, these expansions are typically valid in the {\it bulk medium}, away from boundaries, discontinuities or more singular sets of coefficients. 
  Indeed, solutions to elliptic operators with oscillatory coefficients on bounded domains have been shown to require boundary layer correctors, which are sensitive to the 
 manner in which the microstructure meets a boundary~\cite{Santosa-Vogelius,Moskow-Vogelius-1,Allaire-Amar:99, Gerard-VaretMasmoudi:08,Gerard-VaretMasmoudi:10} or interface 
 ~\cite{HPShen:07}. Furthermore, the importance of correctors to homogenization due to interface effects, boundary layers {\it etc.} is explored analytically and computationally, in the context of accurate estimation of scattering resonances in~\cite{GW:03,GW:05}. 
 
 In this article we study the scattering problem for 
 the one-dimensional time-independent Schr\"odinger equation
\begin{equation}
\label{schro}
\left(\ H_\epsilon - k^2\ \right)\ \psi\ \equiv\ \left(\ -\frac{d^2}{d x^2}+V^\epsilon(x)-k^2\ \right)\psi(x)=0.
\end{equation}
The potential, $V^\epsilon(x)=V_0(x)+q\left(x,x/\epsilon\right)$, is the sum of 
 a slowly varying part
 with smooth and singular components, $V_0=V_{reg}+V_{sing}$, 
 and a rapidly oscillatory part, $q_\epsilon(x)=q\left(x,x/\epsilon\right),\ \epsilon\ll1$. $V^\epsilon(x)$ is assumed to decay to zero as $x$ tends to infinity. 
 We also assume $V^\epsilon(x)\ge0$, a simple way to restrict to the case where $H_\epsilon$ has no discrete eigenvalues (bound states) and has only continuous spectrum (extended / radiation states). 
 The wave number, $k$, is fixed and we study the $\epsilon-$ small behavior.
 \medskip
 
 Many physically important scattering properties are not captured by leading order homogenization.  Line-widths and imaginary parts of scattering resonances are key to quantifying the lifetimes of metastable states in quantum systems, or in electro-magnetics, the leakage rates of energy from photonic structures; see~\cite{GW:03,GW:05} and references therein. In~\cite{GW:03,GW:05} it was shown that inclusion of even the first non-trivial correction due to microstructure can yield large improvements in the approximation of such scattering quantities. 
  Since, as we shall see, defects and singularities can be responsible for the {\it dominant} correctors and these contributions are not captured in smooth homogenization setting, we therefore seek a better understanding of 
 homogenization for wave / scattering problems in their presence. 
 %
 %In this paper we consider 
% the distorted plane waves, also called generalized eigenfunctions, and the transmission coefficients for 
% a class of oscillatory potentials with defects and singularities.
 % 
 %While $t_0^{hom}(k)$, the average / homogenized transmission coefficient associated with $V_0$, is the leading order %behavior of $t_\epsilon(k)$ (see Theorem~\ref{thm:Texpansion}) For example, where $t_0^{hom}(k)$ may give a good %approximation of the real values of $k$ at which transmission peaks occur, it will often gives a poor estimate 
 % of the widths of those transmission peaks. Such {\it line-widths} are associated with the location of scattering %resonances in the lower half plane, near the real axis.
 %%
% \footnote{{\bf MIW:} I suspect this to be the case in regimes where both homogenization is valid and $\epsilon$ is not 
 % too small. This is certainly the case for scattering resonances~\cite{GW:03,GW:05}, which are the complex poles 5responsible for the peaks in $t_\epsilon(k)$. So perhaps plotting $t_\epsilon(k)$ vs. $k$, $t_0^{hom}(k)$ vs. $k$
 % and $t_0^{hom}(k) + \epsilon t_1^\epsilon(k)$ vs. $k$ for a range of $k$ is a good idea. }
 %%%
 In this paper we ask:\medskip
 
 \noindent {\it How are scattering properties, such as transmission and reflection coefficients, $t_\epsilon(k)$ and $r_\epsilon(k)$, influenced by interfaces, defects and singularities?}\medskip

 The heart of the matter is an asymptotic study of the {\it distorted plane waves}, solutions of 
 $(H_\epsilon-k^2)\psi=0$ of the form: 
 \begin{equation}
 e_{V^\epsilon\pm}(x;k) = e^{\pm ikx}+u^s_\pm(x;k),\ \ u^s_\pm\ {\rm outgoing\ as}\ |x|\to\infty,\ \ \text{for $\epsilon$ small.}
 \nn\end{equation}
 Consequences of our analysis include the following:
 \begin{enumerate}
 \item Theorem~\ref{thm:Expand-LS} provides a convergent expansion of the distorted plane waves of $H_Q=-\partial_x^2+V_0 +Q$, which is valid for a large class of perturbing potentials, $Q$, which may be pointwise large, but highly oscillatory (supported at high frequencies although not necessarily periodic). Theorem~\ref{prop:convergence-asymptotic-expansion-LS} is the corresponding expansion for the transmission coefficient $t[k;Q]$. By Proposition~\ref{prop:Tqeps-small} we can apply Theorems~\ref{thm:Expand-LS} and~\ref{prop:convergence-asymptotic-expansion-LS}  to $Q(x)=q_\epsilon(x)=q\left(x,x/\epsilon\right)$, where $q(x,y)$ is $1-$ periodic in $y$, decaying as $|x|\to\infty$, and satisfies Hypotheses~{\bf (V)}. 
 \item Theorem~\ref{thm:Texpansion} implies that:
  \subitem (i) $t^\epsilon(k)-t_0^{hom}(k)={\mathcal O}(\epsilon^2)$ if $q_\epsilon$ is continuous and
  \subitem (ii) $t^\epsilon(k)-t_0^{hom}(k)={\mathcal O}(\epsilon)$ if $q_\epsilon$ has discontinuities.\\
 For $q_\epsilon$ discontinuous {\it interface correctors}, which are highly oscillatory in $\epsilon$, enter the expansion;
 see the discussion in section~\ref{sechomo} concerning failure and restoration of interface conditions at singularities of $V_0$ or discontinuities of $q_\epsilon$. These correctors are related to the asymptotics of boundary layers arising in work 
  on homogenization of divergence form operators on bounded domains~\cite{Santosa-Vogelius,Moskow-Vogelius-1,Allaire-Amar:99,Gerard-VaretMasmoudi:08,Gerard-VaretMasmoudi:10}.
Since these correctors involve $\epsilon$ dependence of the form: $\sim \exp({2\pi i\frac{\nu}{\epsilon}}),\ \ \epsilon\ll1,\ \ 0\ne\nu\in\mathbb{R}$, the expression $\epsilon^{-1}\left(t^\epsilon(k)-t_0^{hom}(k)\right)$ does not have a limit as $\epsilon\to0$,
 and a correction to the value of $t_0^{hom}(k)$ is not well-defined. However, there can be limits which depend on the particular sequences through which $\epsilon$ tends to zero. See the more detailed discussion after the statement of 
Theorem~\ref{thm:Texpansion-LS}.
\end{enumerate}

\noindent {\bf Outline of paper:} In section~\ref{sec:mainresults} we state detailed hypotheses and our main theorems on transmission coefficients, Theorems~\ref{thm:Texpansion} and~\ref{thm:Texpansion-LS}, which depend on our analysis of distorted plane waves (Theorem~\ref{thm:Expand-LS}). We also present the results of numerical simulations designed to illustrate the relationship between regularity of the potential, $V^\epsilon$, and $\epsilon$ small asymptotics of the transmission coefficient, stated in Theorem~\ref{thm:Texpansion}. In section~\ref{sec:background&statement} we present the technical background on one-dimensional scattering theory. In section~\ref{sechomo} we derive, by including {\it interface correctors} to an expansion derived by the classical method of multiple scales, 
 an expansion of the distorted plane waves and of the transmission coefficient valid to all orders in the small parameter 
 $\epsilon$ . Section~\ref{sec:rigorous-theory} contains rigorous proofs of the expansion of the distorted plane waves (Theorem~\ref{thm:Expand-LS}) and transmission coefficients (Theorem~\ref{thm:Expand-LS} and~\ref{prop:convergence-asymptotic-expansion-LS}) with error bounds. The proof is 
based on the reformulation of the scattering problem as a pre-conditioned Lippman-Schwinger equation, an approach introduced in~\cite{GW:05}. Appendix~\ref{sec:numerics} contains a brief discussion of the numerical methods used in the simulations. Appendix~\ref{proof-TR} contains the technical proof of operator bounds which are central to the proofs in section~\ref{sec:rigorous-theory}.

\bigskip

\noindent{\bf Acknowledgements:} The authors wish to thank R.V. Kohn and J. Marzuola for fruitful discussions. VD was supported, in part, by Agence Nationale de la Recherche Grant ANR-08-BLAN-0301-01. MIW
was supported in part by NSF grant DMS-07-07850 and DMS-10-08855. MIW
would also like to acknowledge the hospitality of the Courant
Institute of Mathematical Sciences, where he was on sabbatical during
the preparation of this article. VD would like to thank the Department of Applied Physics and Applied Mathematics (APAM) at Columbia University for its hospitality during the Spring of 2008 when this work was initiated.

\section{Main results and Discussion} \label{sec:mainresults}
 We begin with the key hypotheses. Hypotheses~\textbf{(V)} make precise the decomposition of the potential, $V$, into regular, singular and oscillatory parts. Hypothesis \textbf{(G)} specifies, for the cases of {\it generic} and {\it non-generic} potentials, $V_0$, the admissible values of the wave number, $k$. 
 %%%%
 We then state and discuss our main results concerning the transmission coefficients, in the small $\epsilon$ limit. \bigskip
 
{\bf Hypotheses (V)}
\begin{align}
V^\epsilon(x)\ &\equiv\ V_0(x)\ +\ q_\epsilon(x),\ \ {\rm real-valued} \label{Vdecomp}\\
&\equiv V_{sing}(x)\ +\ V_{reg}(x) +\ q_\epsilon(x),\nn\\
q_\epsilon(x)\ &\equiv\ q\left(x,\frac{x}\epsilon\right), \ \ \ V^\epsilon(x)\ge0,
\label{qeps}\end{align}
where
\begin{enumerate}
\item {\it Singular part of $V^\epsilon$, $V_{sing}$}:
\begin{equation}
V_{sing}(x)\ =\ \sum_{j=0}^{N-1}\ c_j\ \delta (x-x_j),\ \ {\rm where}\ \ 
 c_j, x_j \in \RR,\ \ x_j < x_{j+1}.
 \label{Vsing}
 \end{equation}
 \item {\it Regular part of $V^\epsilon$:} \ \ $V_{reg}\in L^{1,2}(\RR)$ with 
\begin{equation}
\|V\|_{L^{1,2}}\ \equiv\ \int_\RR (1+|s|)^{2}\ |V(s)|\ ds\ <\ \infty. \label{Vreg} 
\end{equation}
\item {\it Rapidly varying part of $V^\epsilon$, $q_\epsilon(x)=q\left(x,\frac{x}\epsilon\right)$:}\ \ The mapping $(x,y)\mapsto q(x,y)$ is
\subitem (a)\ $1-$ periodic, {\it i.e.} for each $x\in\RR,\ \ q(x,y+1)=q(y)$,
\subitem (b)\ mean zero with respect to $y$, {\it i.e.} for each $x\in\RR$, 
\begin{equation} \int_0^1 q(x,y)\ dy=0, \label{mean0}\end{equation}
\subitem(c)\ $q\in pC^3_xL^2_{y,per}$, the set of functions $ q:\RR\times S^1\to\RR$, such that there exists a finite partition of $\RR$
\begin{equation} 
-\infty=a_0<a_1<a_2<\dots<a_M<a_{M+1}=+\infty
\nn\end{equation} with
\begin{align}
& \sum_{j=0}^{M+1} \int_0^1 \| q(\cdot,y)\|_{C^3(a_j,a_{j+1})}^2\ dy\ <\ \infty\label{qnorm}
\end{align}
\item We shall work with the Fourier expansion of $q(x,y)$, written as 
\begin{equation}
q(x,y)\ =\ \sum_{j\ne0}\ q_j(x)\ e^{2\pi i jy},\ \ q_j(x)\equiv \int_0^1 e^{-2\pi i jy}q(x,y)\ dy
\label{q-Fourier}
\end{equation}
and assume
\begin{align}
&\int_\RR \int_0^1 |q(x,y)|^2\ dy\ dx=\ \sum_{|j|\ge1}\int_\RR |q_j(x)|^2<\infty,\\
&\int_0^1 |q(x,y)|^2\ dy\ =\ \sum_{|j|\ge1}|q_j(x)|^2\ \to0,\ \ |x|\to\infty\ .
\label{qL2to0}
\end{align}
\item Proposition~\ref{prop:Tqeps-small}, which is a step in proving Theorem~\ref{thm:Texpansion}, requires more decay at infinity for $q_\epsilon$: there exists $\rho>8$ such that
\begin{align}
&(1+|\cdot|^2)^{\rho/2}q_j\in L^2, \ |j|\ge1,\ \mbox{ and } \ \sum_{|j|\ge1} \left\|(1+|\cdot|^2)^{\rho/2}q_j\right\|_{L^2}<\infty,\\
&\frac{d}{dx}\left((1+|x|^2)^{\rho/2}q_j(x)\right)\in L^2 \mbox{ and } \sup_{|j|\ge1} \left\|\frac{d}{dx}\left((1+|x|^2)^{\rho/2}q_j(x)\right)\right\|_{L^2_x}<\infty.
\label{qsharpL2to0}
\end{align}
\end{enumerate}
\bigskip

{\bf Hypothesis (G)}\ If $V_0$ is generic (see Definition~\ref{def:generic}), then the wave number, $k\in K$, an arbitrary compact subset of $\RR$. If $V_0$ is not generic, then the compact set $K$ must be such that  $0\notin K$.\medskip

\begin{remark} If $V_0$ is not generic (as for example $V_0\equiv0$), then the expansions we present in Theorem~\ref{thm:Texpansion} and Theorem~\ref{thm:Texpansion-LS},  are not uniform in a neighborhood of $k=0$. This will be the subject of a future paper.
\end{remark}
\bigskip

The aim of this article is to understand the scattering properties for this class of potentials. In particular, we are interested in the influence of combined microstructure ($q_\epsilon$) and singularities ($V_{sing}$) on the reflection and transmission coefficients and distorted plane waves (see below). Formal application of classical homogenization theory (see for example~\cite{BLP}) suggests that the leading order (in $\epsilon\to0$) scattering behavior is governed by the averaged (homogenized) operator $-\partial_x^2+V_0(x)$; see~\eqref{Vdecomp}. For example, if $V^\epsilon(x)$ is smooth (in particular, $V_{sing}\equiv0$), then the transmission coefficient satisfies the expansion
\begin{equation}
t^\epsilon(k)\ \ \sim\ \ t^{hom}_0(k)+ \epsilon t^{hom}_1(k)\ + \epsilon^2 t^{hom}_2(k)\ +\ \dots
\label{formal-homog}
\end{equation}
where $t_j^{hom}$ are computed from the formal 2-scale homogenization expansion.
In particular, $t_0^{hom}$ is the transmission coefficient associated with the averaged potential $V_0(x)$,
However, homogenization is a theory valid only in the {\it bulk}, away from boundaries or non-smooth points of coefficients. For our class of potentials, this expansion must be corrected.

Our main result is the small $\epsilon$ characterization of the distorted plane waves presented in Theorem~\ref{thm:Expand-LS}. A key consequence of our analysis is the following: \bigskip
\begin{theorem}\label{thm:Texpansion}
Let $V^\epsilon(x)\ =\ V_0(x)\ +\ q_\epsilon(x)$ with $V_0$ and $q_\epsilon(x)=q(x,x/\epsilon)$ satisfying Hypotheses~\textbf{(V)}, and $k\in K$ a compact subset of $\RR$ satisfying Hypothesis~\textbf{(G)}. Denote by $e_{V_0\pm}(x;k)$ the distorted plane waves associated with the unperturbed operator ${-\partial_x^2+V_0(x)}$; see section~\ref{sec:background&statement}.

Then, there exists $\epsilon_0=\epsilon_0(K)$, such that for $0<\epsilon<\epsilon_0$, the transmission coefficient $t^\epsilon=t^\epsilon(k)$ (see~\eqref{eqn:RT}) associated with $V^\epsilon(x)$ satisfies the following expansion uniformly in $k\in K$:
\begin{equation}\label{eqn:Texpansion}
t^\epsilon(k)\ =\ t^{hom}_0(k)\ +\ \epsilon\ t_1^\epsilon(k)\ +\ \epsilon^2\ \left(\ t^{hom}_2(k)\ +\ t_2^\epsilon(k)\ \right) \ +\ t_{rem}^\epsilon(k),
\end{equation}
where $t^{hom}_0(k)$ denotes the transmission coefficient, associated with the average (homogenized) potential $V_0$
 and 
\begin{align}
&t^\epsilon_1(k) = \frac{1}{4k\pi}\displaystyle\sum_{j=1}^{M} e_{V_0+}(a_j;k)e_{V_0-}(a_j;k) \sum_{|l|\geq 1} \left[q_l\right]_{a_j} \frac{e^{2i\pi l\frac{a_j}{\epsilon}}}{l} , \label{t1eps}\\
&t^{hom}_2(k) = \displaystyle\frac{i}{8k\pi^2}
 \sum_{|j|\geq 1} j^{-2} \int_\RR\ |q_j(z)|^2\ e_{V_0-}(z;k)e_{V_0+}(z;k) dz, \label{t2hom}\\
 &t^\epsilon_2(k) =\ \frac{i}{8k\pi^2} \displaystyle\sum_{j=1}^{M} \sum_{|l|\geq 1} \left[\ \partial_x \left(e_{V_0+}(x;k)\ e_{V_0-}(x;k)q_l(x) \right)\ \right]_{a_j} \frac{e^{2i\pi l\frac{a_j}{\epsilon}}}{l^2},\label{t2eps}\\
 &t_{rem}^\epsilon(k) =\ o(\epsilon^{2+}) \mbox{, more precisely quantified in Proposition~\ref{prop:convergence-asymptotic-expansion}}.
\end{align}

\begin{itemize}
\item[(a)]$t^{hom}_j,\ \ j=0,2\dots$, denote the expansion coefficients for the transmission coefficient obtained from the two-scale (bulk) homogenization expansion, valid for smooth potentials. 
\item[(b)] $t_1^\epsilon$ arises due to discontinuities in $x\mapsto q(x,\cdot)$, and 
\item[(c)] $t_2^\epsilon$ arises due to both the singular part of the potential, $V_{sing}$, and discontinuities in $x\mapsto q(x,\cdot)$ or $x\mapsto \partial_x q(x,\cdot)$.
\end{itemize}
 $t_1^\epsilon$ and $t_2^\epsilon$ are uniformly bounded, for $\epsilon$ small. However each is a sum over rapidly oscillating (as $\epsilon\to0$) terms of the form $\exp({i\frac{\nu}{\epsilon}})$, corresponding to discontinuity points of $q_\epsilon$, respectively points in the support of $V_{sing}$. 
\end{theorem}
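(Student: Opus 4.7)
The plan is to deduce Theorem~\ref{thm:Texpansion} from the general operator-level expansion stated in Theorem~\ref{thm:Texpansion-LS} (whose quantitative version is Proposition~\ref{prop:convergence-asymptotic-expansion-LS}), applied with the perturbing potential $Q=q_\epsilon$. First, Proposition~\ref{prop:Tqeps-small} provides the crucial smallness in the appropriate operator norm: although $q_\epsilon$ is not pointwise small, the associated pre-conditioned Lippman-Schwinger operator $T_{q_\epsilon}$ is of size $o(1)$ as $\epsilon\to0$. This places us squarely in the setting of Theorem~\ref{thm:Expand-LS}, and yields a convergent Born-type expansion
$t^\epsilon(k) = t_0^{hom}(k) + \sum_{n\ge1} \tau_n[q_\epsilon](k)$,
where each $\tau_n[Q]$ is an explicit $n$-linear form in $Q$, built from copies of $Q$ sandwiched between the background Green's function $G_{V_0}(x,y;k)$ of $-\partial_x^2+V_0$ and the unperturbed distorted plane waves $e_{V_0\pm}(\cdot;k)$.

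Next I would insert the Fourier decomposition $q_\epsilon(x)=\sum_{j\ne0}q_j(x)e^{2\pi ijx/\epsilon}$ into each $\tau_n[q_\epsilon]$, producing oscillatory integrals of the form $\int F(x)e^{2\pi ijx/\epsilon}\,dx$ in which $F$ collects the amplitude factors $q_j$, the boundary values $e_{V_0\pm}$, and (for $n\ge2$) the Green's kernel. Because $j\ne0$ and $F$ is piecewise smooth with jump singularities only at the finite set $\{a_j\}_{j=1}^M$ (arising from discontinuities of $x\mapsto q_l(x)$ and from jumps of $\partial_x e_{V_0\pm}$ across the support of $V_{sing}$), a single non-stationary phase integration by parts yields a bulk contribution of size $\epsilon/(2\pi ij)$ plus boundary terms of the form $\epsilon\,[F]_{a_j}\,e^{2\pi ija_j/\epsilon}/(2\pi ij)$. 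Applying this to $n=1$ produces exactly the interface corrector formula~\eqref{t1eps} for $t_1^\epsilon$. Applying it to $n=2$ requires splitting the double Fourier sum into a \emph{resonant} diagonal part (pairs $(l_1,l_2)$ with $l_1+l_2=0$), which loses its rapid oscillation and yields the classical bulk homogenization coefficient $t_2^{hom}$ of~\eqref{t2hom}, and an \emph{off-resonant} part, on which a second integration by parts produces the $\epsilon^2$-order interface correction~\eqref{t2eps}.

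For $n\ge3$, one combines the smallness estimates on $T_{q_\epsilon}$ from Proposition~\ref{prop:Tqeps-small} with one further non-stationary phase integration by parts to bound these contributions by $o(\epsilon^{2+})$; together with the $o(\epsilon^{2+})$ remainders left over from $n=1,2$, they are absorbed into $t_{rem}^\epsilon$, yielding~\eqref{eqn:Texpansion}. Uniformity in $k\in K$ follows from the continuous and (under Hypothesis~\textbf{(G)}) uniformly nondegenerate $k$-dependence of $e_{V_0\pm}$ and $G_{V_0}$.

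The main obstacle is the careful accounting at order $\epsilon^2$: one must disentangle (i) the \emph{bulk} contribution coming from the diagonal Fourier pairs $(j,-j)$, which reproduces precisely the two-scale homogenization coefficient $t_2^{hom}$, from (ii) the \emph{interface} contributions assembled from boundary terms at each $a_j$. The bookkeeping is delicate because the derivative $\partial_x(e_{V_0+}e_{V_0-}q_l)$ in~\eqref{t2eps} must be understood as a one-sided quantity, with its jump receiving contributions both from discontinuities of $q_l$ itself and from the jumps in $\partial_x e_{V_0\pm}$ induced by the $\delta$-functions in $V_{sing}$ -- precisely the mechanism by which interface conditions are restored. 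Finally, establishing the $o(\epsilon^{2+})$ remainder uniformly in $k\in K$ requires the weighted-$L^2$ decay assumptions~\eqref{qsharpL2to0}, and is the quantitative content of Proposition~\ref{prop:convergence-asymptotic-expansion}.
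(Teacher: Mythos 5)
Your overall strategy is the same as the paper's (section~\ref{sec:Texpansion}): apply the preconditioned Lippman--Schwinger expansion of Theorem~\ref{thm:Texpansion-LS} with $Q=q_\epsilon$, use Proposition~\ref{prop:Tqeps-small} to get $|||q_\epsilon|||=\mathcal{O}(\epsilon)$, insert the Fourier series of $q$ into $t_1[q_\epsilon]$ and $t_2[q_\epsilon,q_\epsilon]$ and integrate by parts in the non-stationary phase, and control the tail of the Born series through Proposition~\ref{prop:convergence-asymptotic-expansion}.

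There is, however, a genuine error in your accounting at order $\epsilon^2$: you attribute the interface corrector $t_2^\epsilon$ of~\eqref{t2eps} to the off-resonant part of the bilinear term $t_2[Q,Q]$. This cannot work, already for structural reasons: every off-resonant contribution of $t_2[q_\epsilon,q_\epsilon]$ is quadratic in $q$ (it carries products $q_{l_1}q_{l_2}$ with $l_1+l_2\neq0$), whereas $t_2^\epsilon$ is linear in $q$. In the paper, $t_2^\epsilon$ comes from the \emph{linear} functional $t_1[q_\epsilon]$ of~\eqref{t1Q}: on each interval of smoothness one must integrate by parts \emph{twice}, not once as you propose. The first boundary terms give $\epsilon\, t_1^\epsilon$ as in~\eqref{t1eps}; at order $\epsilon^2$ the second boundary terms combine with the terms produced by rewriting $\partial_x^2 e_{V_0\pm}=(V_0-k^2)e_{V_0\pm}$ --- which is precisely where the Dirac masses of $V_{sing}$ enter --- to assemble the jumps $\left[\partial_x\left(e_{V_0+}e_{V_0-}q_l\right)\right]_{a}$ appearing in~\eqref{t2eps}. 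With only a single integration by parts on the $n=1$ term you cannot reach order $\epsilon^2$ at all (note that $e_{V_0\pm}$ are continuous, so the jumps of $\partial_x e_{V_0\pm}$ you invoke do not even appear in first-order boundary terms), and the mechanism by which $V_{sing}$ contributes to $t_2^\epsilon$ --- which you describe correctly in qualitative terms in your final paragraph --- is absent from your computation. The bilinear term's role is confined to the resonant diagonal pairs $(l,-l)$, which yield the bulk coefficient $\epsilon^2 t_2^{hom}$ of~\eqref{t2hom}; in the paper its remaining contributions are estimated as higher order. As written, your scheme would therefore drop the $\epsilon^2$ interface terms from the $n=1$ analysis and could not recover them from $n=2$, so the stated formula~\eqref{eqn:Texpansion} would not be reproduced.
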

\bigskip

Theorem~\ref{thm:Texpansion} is a consequence of the more general Theorem~\ref{thm:Texpansion-LS}, stated below, which follows from the asymptotic study of the convergent expansion of the distorted plane waves, presented in Theorem~\ref{thm:Expand-LS}. The proof of Theorem~\ref{thm:Expand-LS} is based on construction and asymptotic study of the scattering problem via a {\it pre-conditioned} Lippman-Schwinger equation.
%% 
%%%
 This approach is quite general and applies to the perturbation theory of Schr\"odinger operators of the form 
 \begin{equation}
 H=-\partial_x^2 + V_0(x)\ +\ Q(x),
 \nn\end{equation}
 where $Q$ is small in the sense that $|||Q|||\sim\left\|(I-\Delta)^{-\frac12}Q (I-\Delta)^{-\frac12}\right\|_{L^2\to L^2}$ is small. This formulation was introduced in~\cite{GW:05} to study the perturbation of scattering resonances due to high contrast microstructure perturbations of a potential. 
 If $Q$ is a ``microstructure'', roughly meaning that it is supported at high frequencies, then $|||Q|||$ is small. 
 Here, we apply this method and obtain a convergent expansion of $Q\mapsto e_{V_0+Q}(x,k)$ for fixed $k$ and $|||Q|||$ sufficiently small. The expansion of the transmission coefficient, $Q\mapsto t_{V_0+Q}(k)$, is a direct consequence of:\medskip
 \begin{theorem}\label{thm:Texpansion-LS}
Let $V(x)\ =\ V_0(x)\ +\ Q(x)$ with $V_0$ satisfying Hypotheses \textbf{(V)} and $(1+|x|^2)^{\rho/2}Q\in L^2$, for $\rho>8$. We use the following norm on $Q$, see section~\ref{sec:Formulation-LS-equation}:
 \[
 |||Q|||\ \equiv\ \left\| \D^{-1}\ (1+|x|^2)^{\rho/4}\ Q \ (1+|x|^2)^{\rho/4}\ \D^{-1} \right\|_{L^2 \to L^2}.
 \]
Set $k\in K$ a compact subset of $\RR$ satisfying Hypothesis~\textbf{(G)}, and denote by $e_{V_0\pm}(x;k)$ the distorted plane waves associated with the unperturbed operator $-\partial_x^2+V_0(x)$; see section~\ref{sec:background&statement}. Denote by $t=t(k,Q)=t(k)$ the transmission coefficient (see~\eqref{eqn:RT}) associated with $V(x)$. There exists $\tau_0=\tau_0(K)$ such that for $0<|||Q|||<\tau_0(K)$, we have the following expansion
 which holds uniformly in $k\in K$: 
\begin{equation}\label{eqn:Texpansion-LS}
t(k,Q)\ =\ t^{hom}_0(k)\ +\ t_1[Q]\ +\ t_2[Q,Q] \ +\ t_{rem}(k),
\end{equation}
with $t^{hom}_0(k)$ the transmission coefficient, associated with the average (homogenized) potential $V_0$, and the following:
\begin{align}
&t_1[Q] \ = \ \frac{1}{2ik}\int_{-\infty}^{\infty} Q(\zeta)\ e_{V_0+}(\zeta;k)\ e_{V_0-}(\zeta;k) \ d\zeta,\\
&t_2[Q,Q] \ = \ \frac{1}{2ik}\int_{-\infty}^{\infty}\ Q\ R_{V_0}(k)(Q(\zeta)\ e_{V_0+}(\zeta;k))\ e_{V_0-}(\zeta;k) \ d\zeta,\\
&t_{rem}(k) \ = \mathcal{O}\left(|||Q|||^{2+}\right)\ \mbox{ and more precisely estimated in Theorem~\ref{prop:convergence-asymptotic-expansion-LS}}.
\end{align}
Here, $R_{V_0}(k)$, $e_{V_0+}(x;k)$ and $e_{V_0-}(x;k)$ being defined in section~\ref{sec:background&statement}.
\end{theorem}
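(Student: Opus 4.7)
The plan is to combine two ingredients: (i) the convergent Neumann-type expansion of $e_{V_0+Q,+}(\cdot;k)$ in $Q$ provided by Theorem~\ref{thm:Expand-LS}, valid for $|||Q|||<\tau_0(K)$; and (ii) an exact integral representation of $t(k,V_0+Q)-t_0^{hom}(k)$ in terms of $Q$, $e_{V_0-}$, and the perturbed distorted plane wave $e_{V_0+Q,+}$. Substituting (i) into (ii) and grouping terms by order in $Q$ yields the linear contribution $t_1[Q]$, the bilinear contribution $t_2[Q,Q]$, and a tail that becomes the $\mathcal{O}(|||Q|||^{2+})$ remainder.

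For the identity, I would start from the Lippman--Schwinger equation $e_{V_0+Q,+}=e_{V_0+}+R_{V_0}(k)\bigl[Q\,e_{V_0+Q,+}\bigr]$ (with sign matching the convention fixed in section~\ref{sec:background&statement}) and pass to $x\to+\infty$. Writing the outgoing Green's function $G_{V_0}^+(x,\zeta;k)$ in terms of Jost solutions $f_\pm$ and using the relations $e_{V_0\pm}=t_0^{hom}(k)f_\pm$ together with $W[f_-,f_+]=2ik/t_0^{hom}(k)$, one obtains for $x\gg\zeta$ the asymptotic $G_{V_0}^+(x,\zeta;k)\sim(2ik)^{-1}e^{ikx}\,e_{V_0-}(\zeta;k)$. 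Comparing with $e_{V_0+Q,+}(x;k)\sim t(k,Q)\,e^{ikx}$ and $e_{V_0+}(x;k)\sim t_0^{hom}(k)\,e^{ikx}$ as $x\to+\infty$ then yields the exact identity
\begin{equation}
t(k,Q)\ -\ t_0^{hom}(k)\ =\ \frac{1}{2ik}\int_\RR Q(\zeta)\,e_{V_0-}(\zeta;k)\,e_{V_0+Q,+}(\zeta;k)\,d\zeta.
\nn\end{equation}
Inserting the Neumann expansion $e_{V_0+Q,+}=\sum_{n\geq 0}\bigl[R_{V_0}(k)Q\bigr]^n e_{V_0+}$ from Theorem~\ref{thm:Expand-LS}, the $n=0$ and $n=1$ contributions match exactly the stated $t_1[Q]$ and $t_2[Q,Q]$. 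Well-definedness of each integral is a Cauchy--Schwarz argument: the decay hypothesis $(1+|\cdot|^2)^{\rho/2}Q\in L^2$ with $\rho>8$, paired with the uniform-in-$k\in K$ boundedness of $e_{V_0\pm}$, makes the integrand absolutely integrable against the weight $(1+|x|^2)^{-\rho/2}\in L^2$.

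For the remainder, each $n\geq 2$ summand $\frac{1}{2ik}\int Q\,[R_{V_0}(k)Q]^n e_{V_0+}\,e_{V_0-}\,d\zeta$ is controlled by inserting adjacent factors $\D\,\D^{-1}$ on either side of every $Q$: the central sandwiched operator $\D^{-1}(1+|x|^2)^{\rho/4}Q(1+|x|^2)^{\rho/4}\D^{-1}$ has $L^2\to L^2$ norm equal to $|||Q|||$ by definition, and the operator estimates of Appendix~\ref{proof-TR} (which underlie the proof of Theorem~\ref{thm:Expand-LS}) bound the iterated products by $|||Q|||^n$. Summing the geometric series yields $t_{rem}(k)=\mathcal{O}(|||Q|||^{2+})$ uniformly for $k\in K$, with the precise quantitative form inherited from Theorem~\ref{prop:convergence-asymptotic-expansion-LS}. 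The main technical obstacle is justifying the pointwise $x\to+\infty$ passage inside the pre-conditioned framework, since $e_{V_0+Q,+}$ lives in a weighted/momentum-shifted space rather than $L^\infty$ directly: I would circumvent this by first carrying out the asymptotic analysis for each truncated partial sum of the Neumann series (every term being the outgoing resolvent applied to a sufficiently weighted, hence integrable, source, so with a well-defined plane-wave limit at $+\infty$), and then interchanging the limit and the summation via the norm convergence of the series for $|||Q|||<\tau_0(K)$.
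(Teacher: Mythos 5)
Your proposal is correct in substance and shares the paper's skeleton (section~\ref{sec:Texpansion-LS}): both identify $t_1[Q]$ and $t_2[Q,Q]$ as the far-field coefficients of the first two Born terms, and both obtain smallness of the tail from the preconditioned operators $T_Q$ and $T_{R_{V_0}}$ of Propositions~\ref{prop:TQ} and~\ref{prop:TR} (note that ``inserting $\D\,\D^{-1}$'' must be accompanied by the spatial weights $\chi^{\pm1}$, so that each $Q$ really becomes $T_Q$ and each resolvent becomes $T_{R_{V_0}}$, and that the end factors $\D\chi e_{V_0\pm}$ must be checked to lie in $L^2$, as the paper does). Where you genuinely differ is the remainder. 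The paper never writes your exact identity $t(k,Q)-t_0^{hom}(k)=\frac{1}{2ik}\int Q\,e_{V_0-}\,e_{V+}\,d\zeta$; it instead works with the weighted remainder $f_{rem}$, proves $\|f_{rem}\|_{L^2}\le C|||Q|||^3$ plus a pointwise bound that grows like $\chi^{-1}(x)$, and then must pay for the fact that $t_{rem}$ is a limit at spatial infinity where the weight blows up: this is exactly why Theorem~\ref{prop:convergence-asymptotic-expansion-LS} needs the three decay cases and the optimization over $x$, yielding exponents $3$, $3-$, and $\beta\in(2,3)$. Your identity (which is justified: $u_s=-R_{V_0}(k)Q\,e_{V+}$ by Proposition~\ref{inhomog-solve}, and $Q\,e_{V_0-}\,e_{V+}\in L^1$ since $\langle x\rangle^{\rho}Q\in L^2$, $\rho>8$, while $e_{V_0-}$ and $e_{V+}$ grow at most polynomially, the latter from $E_{V+}\in L^2$ via Sobolev or from the Jost bounds for $V_0+Q$) converts $t_{rem}$ into a single pairing of $Qe_{V_0-}$ against the exact tail of the wave, bounded by $\|\D\chi e_{V_0-}\|_{L^2}\,|||Q|||\,\|\D\chi(\mbox{tail})\|_{L^2}=\mathcal{O}(|||Q|||^3)$; this avoids the delicate far-field passage for the tail altogether and is, if anything, cleaner than (and in the general weighted case sharper than) the paper's argument. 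Do keep the sign bookkeeping consistent with~\eqref{outgoing-resolvent}: the Lippman--Schwinger equation reads $e_{V+}=e_{V_0+}-R_{V_0}(k)Q\,e_{V+}$, so the Born series is in powers of $-R_{V_0}(k)Q$ (the paper itself is loose on this sign in its $t_2$ computation).

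One step you should not lean on is your final sentence: interchanging $\lim_{x\to+\infty}$ with the summation ``via the norm convergence of the series'' is not legitimate as stated, because the convergence in Theorem~\ref{thm:Expand-LS} is in the $\D\chi$-weighted $L^2$ topology, which gives no pointwise control at spatial infinity --- this is precisely the difficulty that forces the paper's decay-dependent case analysis. Fortunately your own identity makes that interchange unnecessary: the only exchange you need is between the $L^2$-convergent series for $\D\chi\left(e_{V+}-e_{V_0+}\right)$ and the $d\zeta$-integral, and since $g\mapsto\int Q\,e_{V_0-}\,g\,d\zeta$ is a bounded linear functional of $\D\chi g\in L^2$ (again because $\D\chi e_{V_0-}\in L^2$ and $T_Q$ is bounded with norm $|||Q|||$), term-by-term insertion is justified by norm convergence alone. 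With that adjustment the argument is complete.
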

\medskip

\begin{remark}\label{rem:cancellations}\ {\bf Symmetry considerations:}\ 
There is a class of potentials, $q_\epsilon$, whose members are discontinuous, and yet the (oscillatory in $\epsilon$) correctors, $t_j^\epsilon,\ j\ge1$ vanish. In subsection~\ref{sec:specificV} we explore families of such structures. 
Indeed, let us apply Theorem~\ref{thm:Texpansion-LS} with $V\equiv V_\epsilon$ satisfies Hypotheses~{\bf (V)} as well as the additional properties:
$V_0$ even and $q_\epsilon$ ``separable'':
 \[\ V_0(x)=V_0(-x),\ \ \ \ q_\epsilon(x) \ = \ q_0(x)\ q_{per}\left(\frac{x}{\epsilon}\right).\]
 One can easily see that $V_0$ is even implies that $e_{V_0+}(\cdot;k) e_{V_0-}(\cdot;k)$ is even. Therefore, if $q_0$ and $q_{per}$ are of opposite parity, then $x\mapsto e_{V_0+}(x;k)\ e_{V_0-}(x;k)\ q_\epsilon(x)$ is odd
 and therefore $t_1[q_\epsilon](k)\equiv0$ for any $\epsilon>0$. It follows that for such potentials, and \emph{even if $q_\eps$ is discontinuous}, the leading order correction to $t^{hom}_0(k)$ is $t_2[Q,Q]$ which is of order $\mathcal{O}(\epsilon^2)$ (see section~\ref{sec:Texpansion}). Moreover, in this special case the second order corrector is well defined:
 \[\lim_{\epsilon\downarrow0} \ \epsilon^{-2} \left(\ t^\epsilon(k) - t^{hom}_0(k)\ \right) = t^{hom}_2(k).\] 
\end{remark}
 \bigskip
 
 The three subplots of figure~\ref{figT-T0} illustrate the results of Theorem~\ref{thm:Texpansion} on the behavior of $t^\epsilon-t_0$ 
 for several contrasting choices of potential $V^\epsilon=V_0+q_\eps$, where $V_0$ is a finite sum of Dirac delta functions, at equally spaced points.\footnote{The precise functions and parameters used to obtain the plots displayed in figures~\ref{figT-T0} and~\ref{figSmoothDirac} are given in Appendix~\ref{sec:numerics}, page~\pageref{sec:numerics}.}
 \begin{itemize}
\item The left panel of figure~\ref{figT-T0} corresponds to the case where $q_\epsilon$ is discontinuous. It shows that
\begin{align}
&t^\epsilon-t^{hom}_0={\mathcal O}(\epsilon),\ \ \epsilon\to0,\ \ {\rm and\ yet }\ \ 
\epsilon^{-1}\left(\ t^\epsilon-t^{hom}_0\ \right) \ \ {\rm does \ not\ have\ a \ limit}.
\nn\end{align}
\item The center panel of figure~\ref{figT-T0} corresponds to the case where $q_\epsilon$ is a smooth function, and $V_0$ is a Dirac delta function. Here, 
\begin{align}
&t^\epsilon-t^{hom}_0={\mathcal O}(\epsilon^2),\ \ \epsilon\to0,\ \ {\rm and\ yet }\ \ 
\epsilon^{-2}\left(\ t^\epsilon-t^{hom}_0\ \right) \ \ {\rm does \ not\ have\ a \ limit}.
\nn\end{align}
\item The right panel of figure~\ref{figT-T0} corresponds to the case where $q_\epsilon$ is a smooth function, and $V_0$ is a {\it smoothed out} Dirac delta function. Here we find
\begin{align}
&t^\epsilon-t^{hom}_0={\mathcal O}(\epsilon^2),\ \ \epsilon\to0,\ \ {\rm and}\ \ 
\lim_{\epsilon\downarrow0} \epsilon^{-2}\left(\ t^\epsilon-t^{hom}_0\ \right)
 \ \ {\rm is\ well-defined}.
\nn\end{align}
\end{itemize}
This phenomenon of indeterminacy of higher order correctors, due to boundary layer effects is discussed, in the context of a Dirichlet spectral problem~\cite{Santosa-Vogelius,Moskow-Vogelius-1}.
%%
%\footnote{ {\bf XYZ- Pass on this?}\ Perhaps elaborate in a manner similar to~\cite{Santosa-Vogelius,Moskow-Vogelius} on the indeterminacy. }
%%
%%%%

\begin{figure}[htp]
\includegraphics[width=0.32\textwidth]{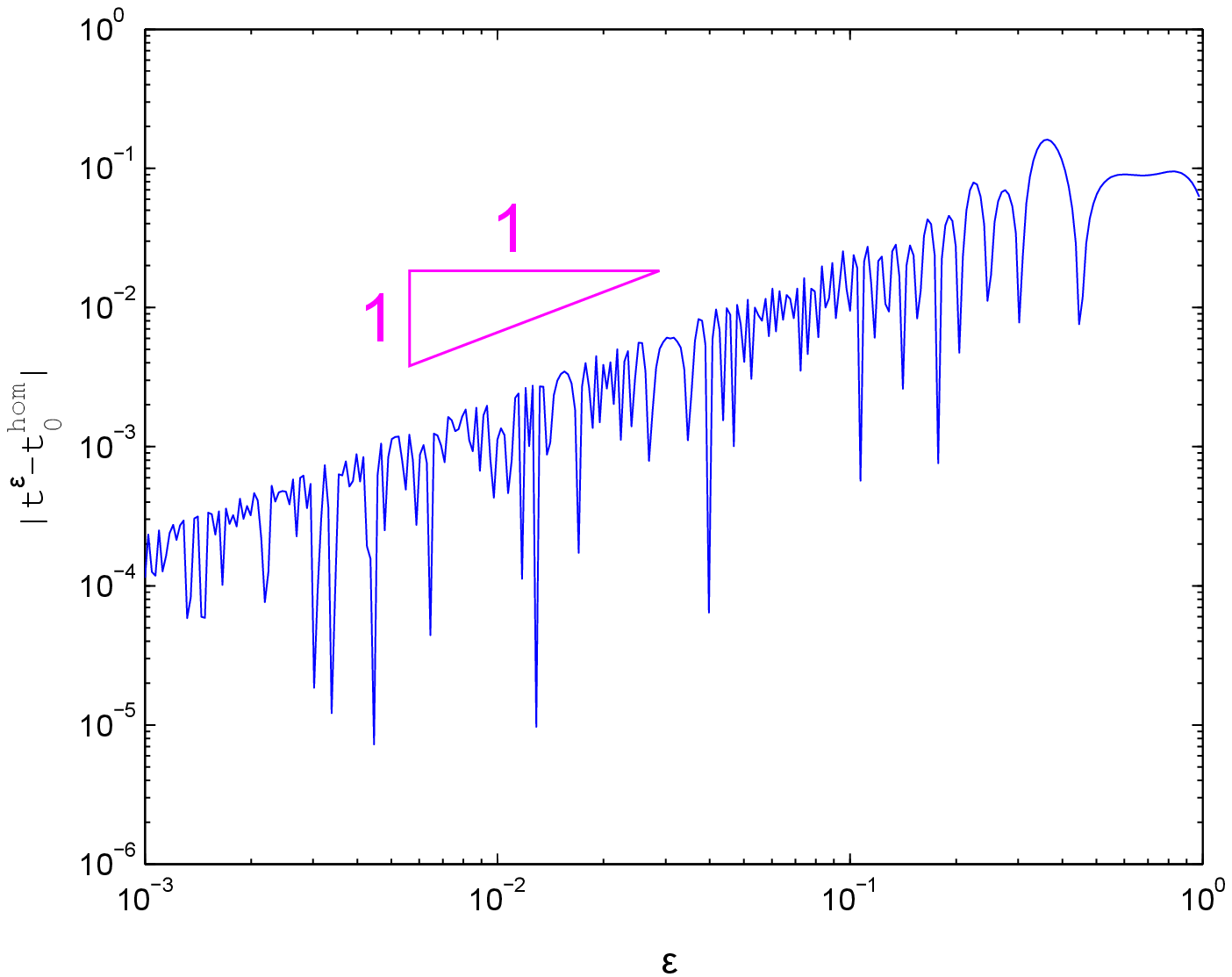} 
\hfill
\includegraphics[width=0.32\textwidth]{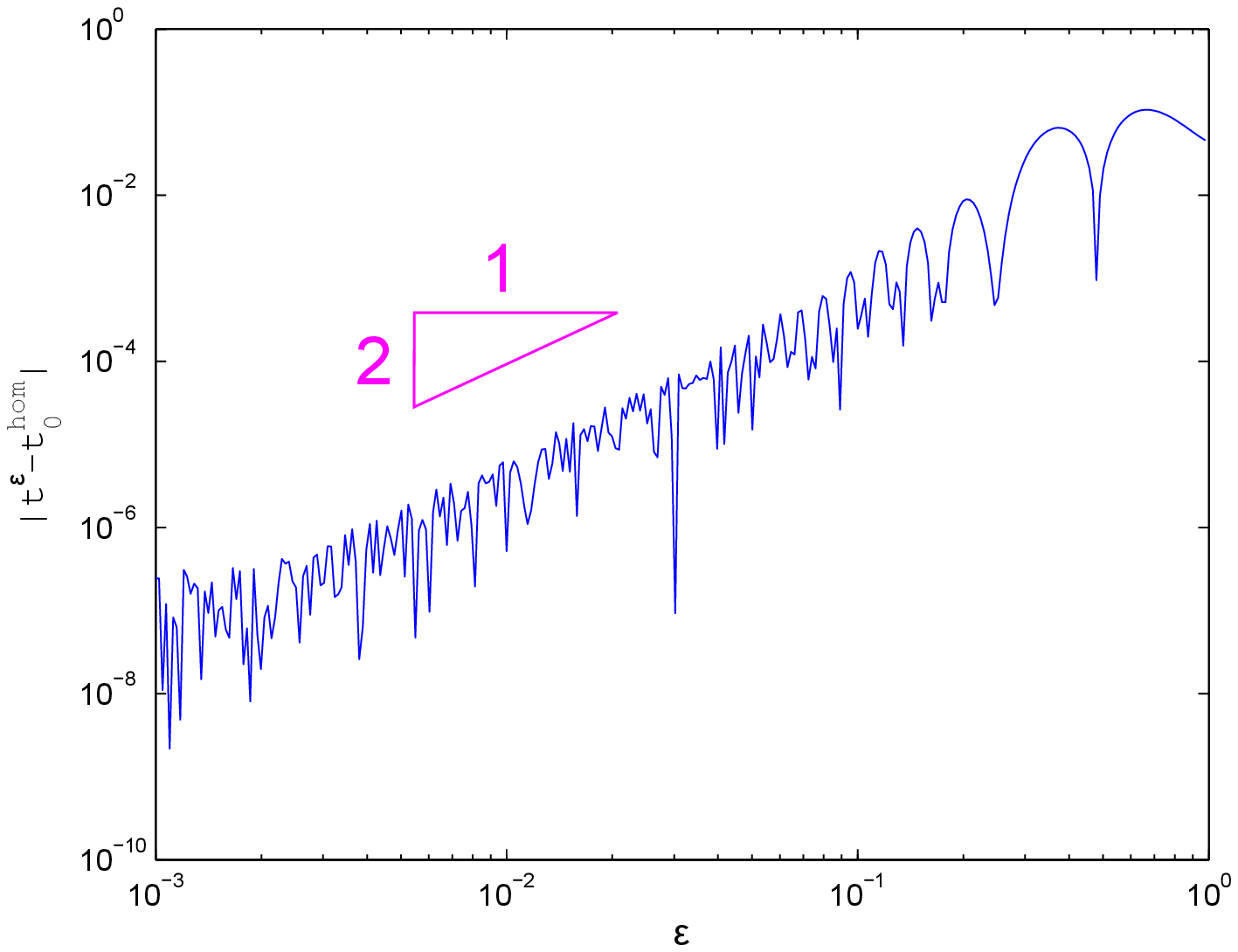}
\hfill
\includegraphics[width=0.32\textwidth]{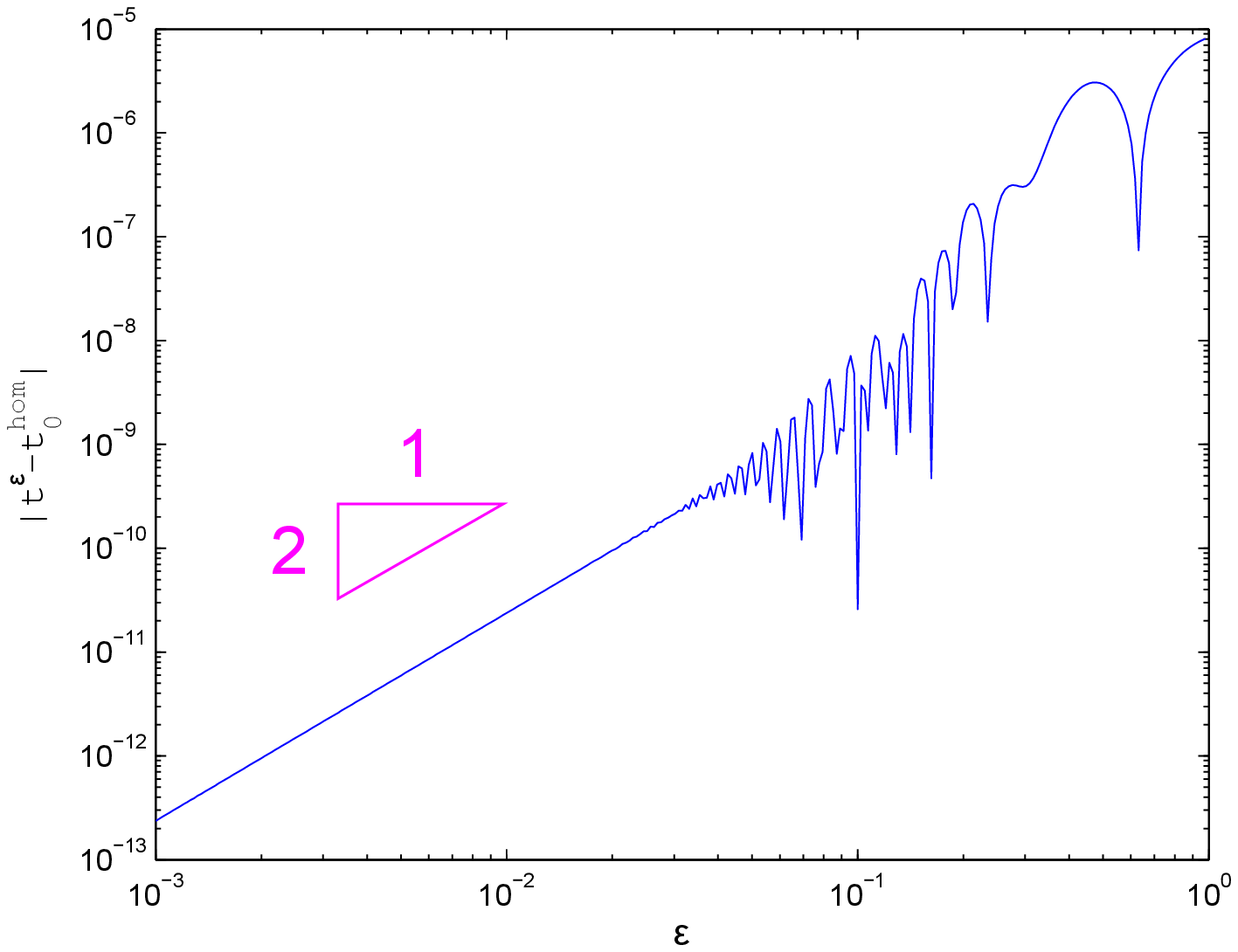}
\caption[Illustration of Theorem~\ref{thm:Texpansion}]{
 Illustration of Theorem~\ref{thm:Texpansion} via plot of $\log\left|t^\epsilon-t_0^{hom}\right|$ versus $\log\epsilon^{-1}$ for the case of $q$ discontinuous and $V_0$ a sum of Dirac delta functions (left panel, average slope $1$), $q$ smooth and $V_0$ a sum of Dirac delta functions (center panel, average slope $2$ ). The right panel (slope $2$) is for the case where $V^\eps=V_0$ is a {\it smooth approximation} of a finite sum of Dirac delta-functions. }\label{figT-T0}
\end{figure}
The transition between the cases of a regular potential and a potential containing singularities is illustrated in Figure~\ref{figSmoothDirac}. The three panels show the behavior of $t^\epsilon-t_0$ with respect to $\epsilon$, where the potential $V^\epsilon\ = \ V_0+q_\epsilon$ satisfies $q_\epsilon$ is smooth and $V_0$ is a sum of smoothed out Dirac delta functions. From right to left, $V_0$ is an improving approximation of Dirac delta functions. 
\begin{figure}[htp]
\includegraphics[width=0.32\textwidth]{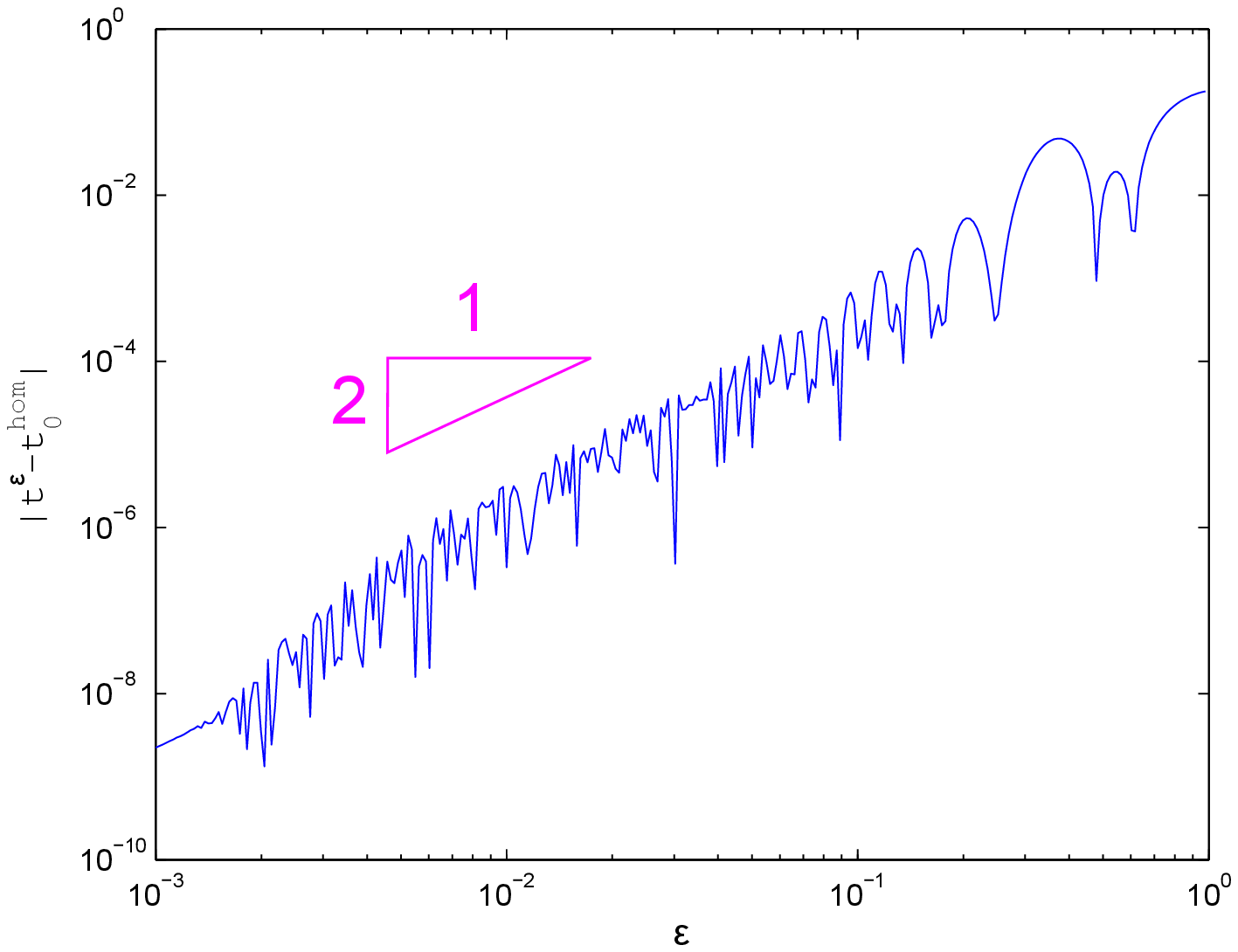}
\hfill
\includegraphics[width=0.32\textwidth]{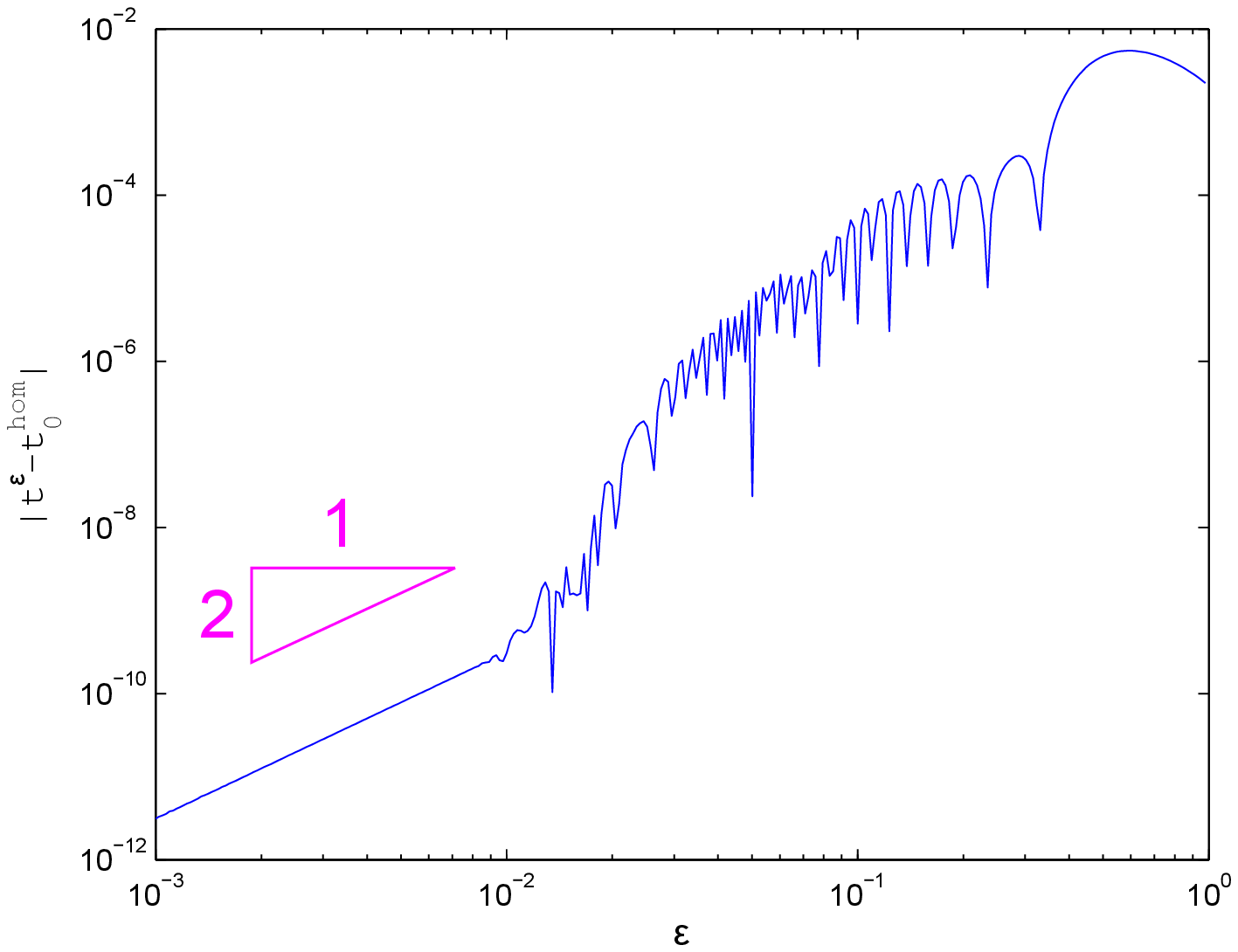}
\hfill
\includegraphics[width=0.32\textwidth]{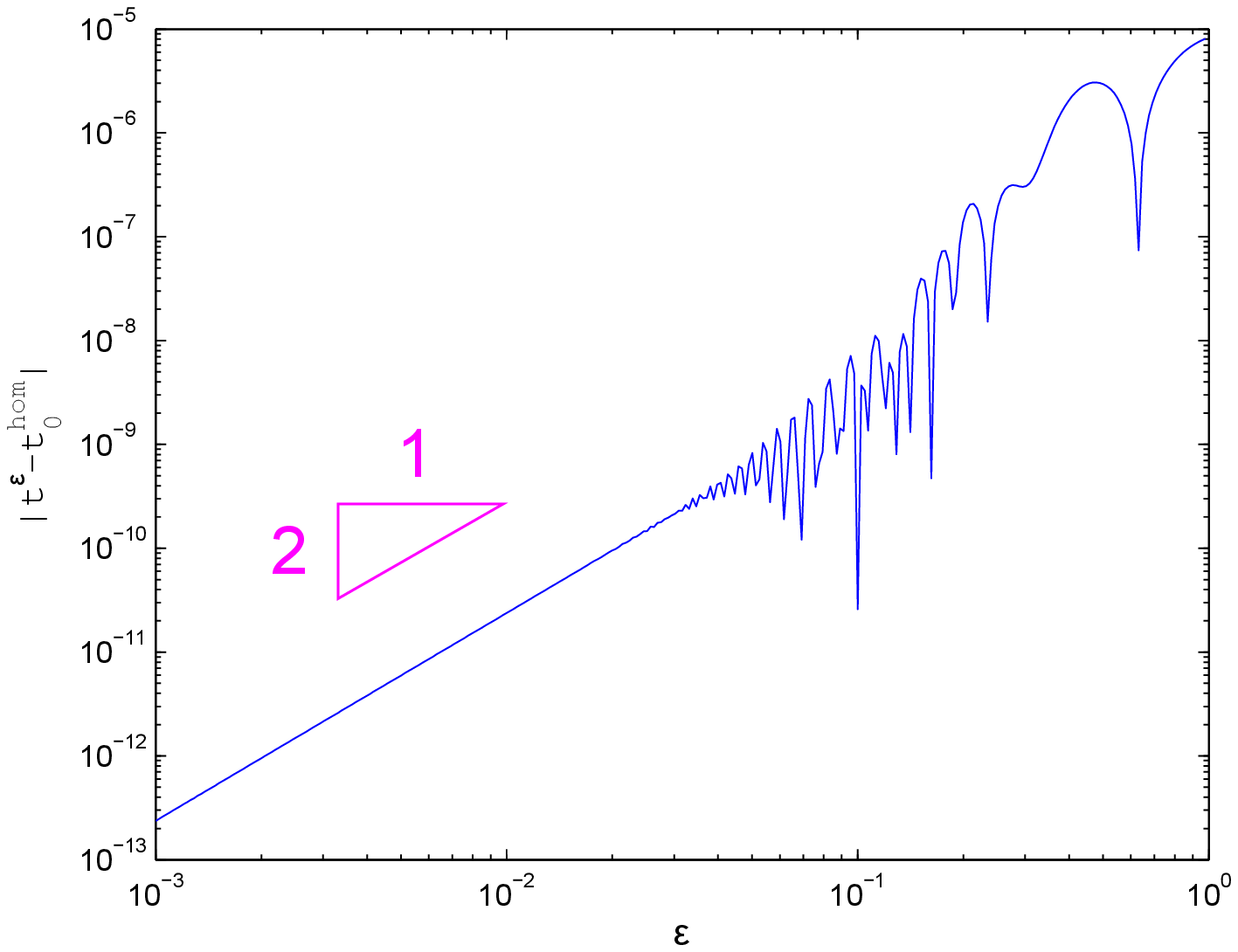} 
\caption[Approximated Dirac delta function]{
 Plot of $\log\left|t^\epsilon-t_0^{hom}\right|$ versus $\log\epsilon^{-1}$ for the case of $q$ smooth and $V_0$ a sum of three approximate Dirac delta functions $  \delta_\rho(x)\equiv \frac{1}{\rho\sqrt{\pi}} \mathrm{e}^{-x^2/\rho^2}$, with $ \rho=0.001, 0.01, 0.1 $. }\label{figSmoothDirac}
 %{\bf MIWa: perhaps add a comment on transition from smooth and oscillatory character as $\rho$ is increased beyond %$\mathcal{O}(\epsilon)$?} 
\end{figure}
%%%

\subsection{Some specific structures}\label{sec:specificV}

We now study in detail two natural and illustrative classes of potentials:
\begin{enumerate}
 \item We first consider a one-parameter family of structures, which are truncations of a smooth potential, where for certain parameter ranges the manner of truncation causes a discontinuity. The latter corresponds to cleaving a periodic structure in a manner not commensurate with the background medium:
 \begin{equation}\label{Vex1}
 V^\epsilon_1(x;\theta) \ = \ \cos\left(\frac{2\pi x}\epsilon + \theta\right) \mathbf{1}_{[-1,1]}(x) .
 \end{equation}
We are obviously in the case related in Remark~\ref{rem:cancellations}, with $V_0\equiv 0$ (so that $e_{V\pm}(x;k)\ = \ e^{\pm ikx}$ and $t_0^{hom}\ =\ 1$). More precisely, it is easy to show that
\begin{align*}
 t^\epsilon_1(k;\theta) &\equiv \frac{1}{4k\pi}\displaystyle\sum_{j=1}^{M} e_{V_0+}(a_j;k)e_{V_0-}(a_j;k) \sum_{|l|\geq 1} \left[q_l\right]_{a_j} \frac{e^{2i\pi l\frac{a_j}{\epsilon}}}{l} \\
 &= \frac{-i}{2k\pi}\ \cos(\theta) \ \sin\left(\frac{2\pi }\epsilon\right).
\end{align*}
In general, $t_1^\epsilon(k)\ne0$ but for $\theta = \frac\pi2 + m\pi,\ m\in\ZZ$, $q_\epsilon(\cdot;\theta)$ is even and therefore
 for all $k\in\RR$ and $\epsilon>0$, we have $t_1[q_\epsilon](k) \ = \ 0.$

%We show in Figure~\ref{fig:V1-Theta} the behavior of $|t^\epsilon(k;\theta)|$, $|t_0^{hom}(k)+t^\epsilon_1(k;\theta)|$, $|%t^\epsilon_1(k;\theta)|$ and $\epsilon^{-1}|t^\epsilon(k;\theta)-t_0^{hom}|$, for $k=1$, $\epsilon=16/161$, and $\theta\in%[0,\pi]$.

%\begin{figure}[htp]
%\hfill
%\includegraphics[width=0.45\textwidth]{}
%\hfill
%\includegraphics[width=0.45\textwidth]{}
%\hfill
%\caption[Truncated smooth structure]{
% Plot of $|t^\epsilon(k;\theta)|$, $|t_0^{hom}(k)+t^\epsilon_1(k;\theta)|$, $|t^\epsilon_1(k;\theta)|$ and $\epsilon^{-1}|t^%%\epsilon(k;\theta)-t_0^{hom}|$ versus $\theta$ for the potential $V^\epsilon_1(x;\theta) \ = \ \frac12 \ \mathbf{1}_{[-1,1]}(x) %\ - \ \frac12\cos\left(\frac{4\pi x}\epsilon + 2\theta\right) \mathbf{1}_{[-1,1]}(x)$. One has $k=1$, and $%%\epsilon=16/161$.}\label{fig:V1-Theta}
%\end{figure}

\item Our second example is a piecewise constant (discontinuous) structure which is smoothly truncated
 \begin{equation}\label{Vex2}
 V^\epsilon_2(x;\theta) \ \equiv \ h_{per}\left(\frac{x}\epsilon + \theta\right) \ e^{-\frac{x^2}{(x-1)(x+1)}}\ \mathbf{1}_{[-1,1]}(x),
 \end{equation}
 with $h_{per}(y)$ the 1-periodic function such that $h(y) = -1$ for $y\in (-1/2,1/2]$, and $h(y) = 1$ for $y\in (1/2,3/2]$.
 
 Since the slow-varying part of $q_\epsilon(x)$ is smooth, and $V_0$ has no singularity, Theorem~\ref{thm:Texpansion} predicts that
\[t^\epsilon-t^{hom}_0={\mathcal O}(\epsilon^2),\ \ \epsilon\to0,\ \ {\rm and}\ \ 
\lim_{\epsilon\downarrow0} \epsilon^{-2}\left(\ t^\epsilon-t^{hom}_0\ \right) = t^{hom}_2
 \ {\rm is\ well-defined},\]
 even though the function $q_\epsilon(x)$ has internal discontinuities.
 In Figure~\ref{fig:V1V2-eps}, we plot $\log\left|t^\epsilon-t_0^{hom}\right|$ versus $\log\epsilon^{-1}$ for the two potentials $V^\epsilon_1$ and $V^\epsilon_2$, setting $k=1$, and $\theta=0$. 
 
 \begin{figure}[htp]
\hfill
\includegraphics[width=0.45\textwidth]{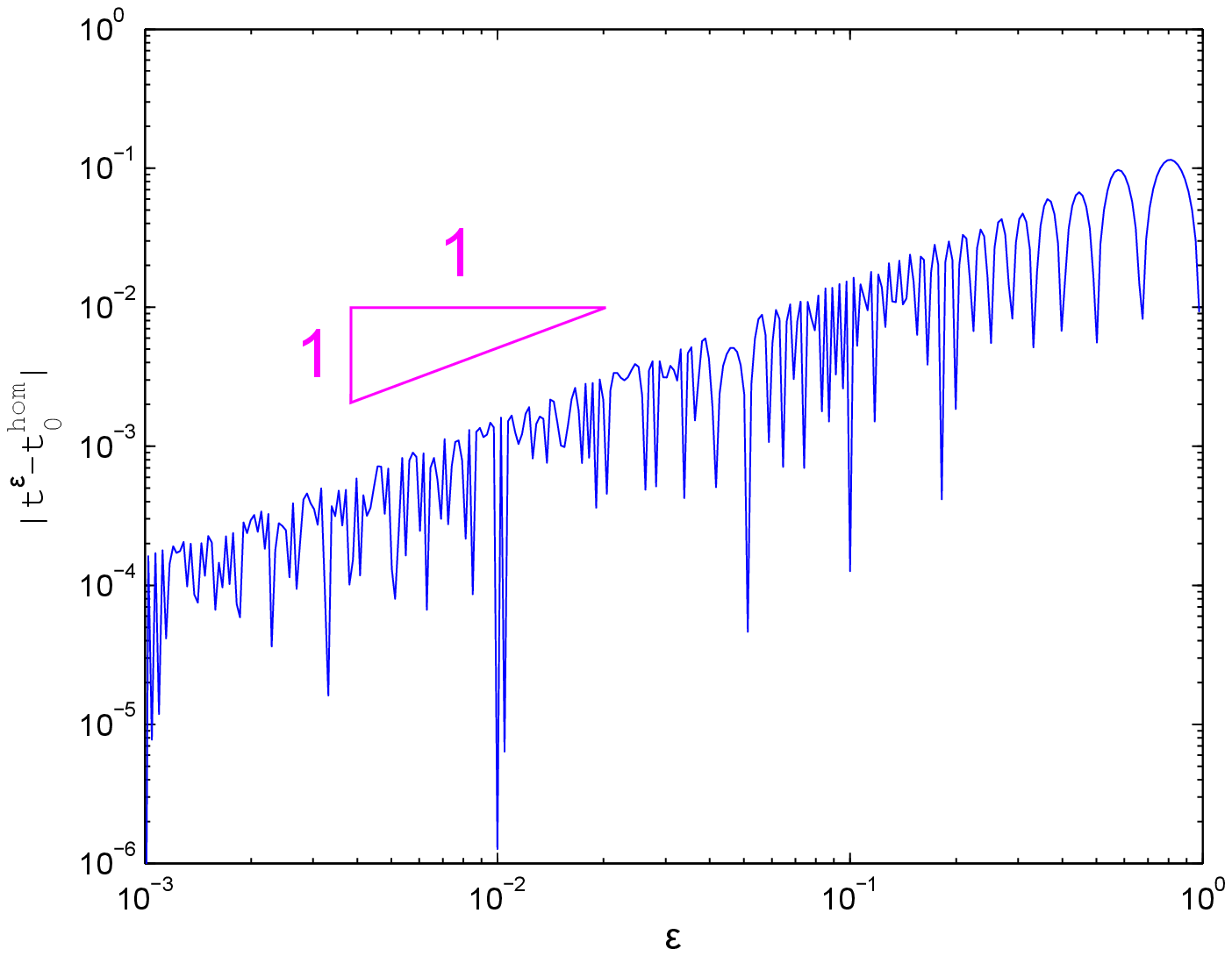}
\hfill
\includegraphics[width=0.45\textwidth]{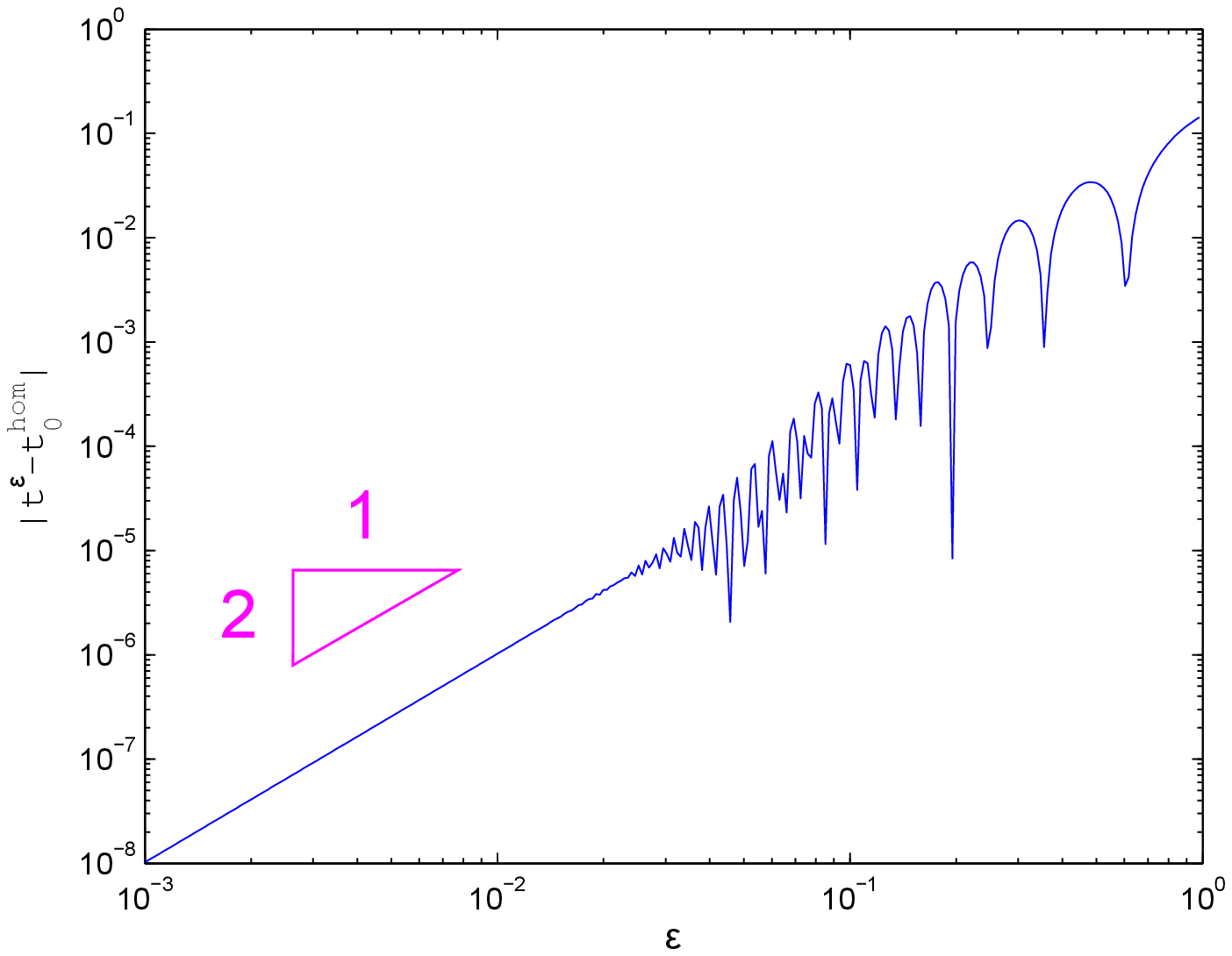}
\hfill
\caption{
 Plot of $\log\left|t^\epsilon-t_0^{hom}\right|$ versus $\log\epsilon^{-1}$ for the case of the potentials $V^\epsilon_1(x;\theta)$ in~\eqref{Vex1} (left panel, slope 1), and $V^\epsilon_2(x;\theta)$ in~\eqref{Vex2} (right panel, slope 2). One has $k=1$, and $\theta=0$.}\label{fig:V1V2-eps}
\end{figure}
 
\end{enumerate}

\section{Background on one-dimensional scattering theory}\label{sec:background&statement}

For simplicity, we consider potentials, $W$, which have no localized eigenstates,
 {\it i.e.} the spectrum of $-\partial_x^2+W(x)$ is continuous.
 We further assume that $W$ has the form
 \begin{align*}
 W&=W_{reg}+W_{sing}, \ \ {\rm with}\\
 & \ W_{reg}\ \in\ L^{1,3/2+}(\RR), \\
 & \ W_{sing}\ =\ \sum_{j=0}^{N-1} c_j \ \delta (x-x_j),\ \ {\rm where}\ \ 
 c_j, x_j \in \RR,\ \ x_j < x_{j+1}.
\end{align*}
 We now introduce an appropriate notion of solution to the Schr\"odinger equation 
 \begin{equation}
 \left(\ H_W\ -\ k^2\ \right)u\ \equiv\ \left(-\frac{d^2}{d x^2}+W(x)-k^2\right) u =0.
 \label{schrod-eqn}
 \end{equation}
Let $[U]_\xi$ denote the jump in $U$ at the point $\xi$, {\it i.e.}
\begin{equation}
[U]_\xi = \lim_{x\to\xi^+}U(x)\ -\ \lim_{x\to\xi^-}U(x).
\label{jumpdef}
\end{equation}
 \begin{definition}\label{def:solution}
 We say that $u$ is a solution of time-independent Schr\"odinger equation~\eqref{schrod-eqn}
 if $u$ is piecewise $C^2$, and satisfies the~\eqref{schrod-eqn} on $\RR\setminus \supp W_{sing}=\RR\setminus\{x_0,\dots,x_{N-1}\}$ as well as the jump conditions
 \begin{equation} \left\{\begin{array}{l}
\left[u\right]_x= 0,\ \ \ x\in\RR, \\ \\
\left[\frac{d}{dx}u\right]_x = 0 \mbox{ if } x\in\RR\setminus \supp W_{sing}, \\ \\
\left[\frac{d}{dx}u\right]_{x_j}= c_j u(x_j) \mbox{ where } x_j\in \supp W_{sing}. 
\end{array}\right. \label{JumpCondition}\end{equation}
\end{definition}

Of special interest are the Jost solutions, defined below.
\begin{definition}\label{def:JostSolutions}
 The {\it Jost solutions} $f_\pm(x;k)\equiv m_\pm(x;k)e^{\pm ikx}$ are the unique solutions of~\eqref{schrod-eqn}, such that
 \[ \lim_{x\to \pm\infty} m_\pm(x;k) \ = \ 1.\]
\end{definition}
This definition is valid, as we see in Appendix~\ref{sec:Jost}. We shall use some smoothness and decay properties of these solutions, that are also postponed to Appendix~\ref{sec:Jost}, for the sake of readability.

\medskip

With the help of the Jost solutions, we are able to define scattering quantities, as the {\em transmission and reflection coefficients}, and the {\em distorted plane waves}.

Since $f_\pm(x;k)$ and $f_\pm(x;-k)$ are solutions of~\eqref{schrod-eqn}, and are independent for $k\neq0$, there exists unique functions $t_{\pm}(k)$ and $r_{\pm}(k)$, such that
\begin{align*}
f_- (x,k) & \ =\ \frac{r_+(k)}{t_+(k)}\ f_+ (x,k) + \frac{1}{t_+(k)}\ f_+ (x,-k), \\
f_+ (x,k) & \ =\ \frac{r_-(k)}{t_-(k)}\ f_- (x,k) + \frac{1}{t_-(k)}\ f_- (x,-k).
\end{align*}
It is then easy to check that $t_+(k)=t_-(k)\equiv t(k)$, and that $t(k)$ and $r_{\pm}(k)$ are continous at $k=0$.
The distorted plane waves $e_{W\pm}(x;k)$ are then defined by:
\begin{definition}\label{def:eTR}
 Given a potential $W(x)$, we define $e_{W\pm} (x;k)$, the {\it distorted plane waves
 associated with $H_W$} by 
 \begin{align}
e_+(x;k) &\ \equiv\ t(k)f_+(x;k) \ \equiv \ t(k)m_+(x;k)e^{ikx} \\
e_-(x;k) &\ \equiv\ t(k)f_-(x;k) \ \equiv \ t(k)m_-(x;k)e^{-ikx}.
\end{align}
\end{definition}

The distorted plane waves $e_{W\pm}(x;k)$ play the role for $H_W$ that the plane waves $e^{\pm ikx}$ play for $H_0=-\partial_x^2$, as we see below.
Let us first introduce the notion of {\it outgoing radiation} as $|x|\to\infty$.
\begin{definition}\label{def:outgoing}
$U(x)$ is said to satisfy an {\it outgoing radiation condition} or to be {\it outgoing} as $|x|\to\infty$ if
\begin{equation}
\left(\ \partial_x\ \mp\ ik\ \right)U\ \to\ 0,\ \ {\rm as}\ x\to\pm\infty.
\nn\end{equation}
\end{definition}
%Then the distorted plane waves $e_{W\pm} (x;k)$ satisfy:
\begin{proposition}\label{prop:outgoing}
Given a potential $W(x)$, $e_{W\pm} (x;k)$, the distorted plane waves $e_{W\pm}(x;k)$ are the unique solutions of~\eqref{schrod-eqn} satisfying
 \begin{equation}
e_{W\pm}(x;k)\ =\ e^{\pm i k x}\ +\ {\rm outgoing}(x).
\label{outgoing}
\end{equation}
\end{proposition}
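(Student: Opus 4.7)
The plan is to establish existence by reading off the asymptotic behavior directly from the Jost-function representation of $e_{W\pm}$, and uniqueness by showing that the only solution of the homogeneous equation which is outgoing at both $\pm\infty$ is the trivial one.

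For existence, consider $e_{W+}(x;k) = t(k)\, m_+(x;k)\, e^{ikx}$. On the ``native'' side $x\to +\infty$, $m_+(x;k)\to 1$, and a direct computation gives
\[
(\partial_x - ik)\bigl(e_{W+}(x;k) - e^{ikx}\bigr) \;=\; t(k)\, m_+'(x;k)\, e^{ikx},
\]
which tends to zero once the decay of $m_+'$ at $+\infty$ is invoked from the Volterra integral equation for $m_+$ (postponed to Appendix~\ref{sec:Jost}). On the opposite side $x\to -\infty$, I would substitute the scattering identity $t(k) f_+(x;k) = r_-(k) f_-(x;k) + f_-(x;-k)$ to rewrite
\[
e_{W+}(x;k) - e^{ikx} \;=\; r_-(k)\, m_-(x;k)\, e^{-ikx} \;+\; \bigl(m_-(x;-k) - 1\bigr)\, e^{ikx},
\]
and then verify that $(\partial_x + ik)$ applied to the right-hand side vanishes as $x\to -\infty$ using $m_-'\to 0$ together with $m_-(\cdot;-k)\to 1$. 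The case of $e_{W-}$ is handled symmetrically.

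For uniqueness, suppose $v_1$ and $v_2$ are two solutions of~\eqref{schrod-eqn} in the sense of Definition~\ref{def:solution} satisfying~\eqref{outgoing}; then $w = v_1 - v_2$ solves the homogeneous equation with the prescribed jump conditions and is outgoing at both $\pm\infty$. For $k\neq 0$ the Jost pair $\{f_+(\cdot;k), f_-(\cdot;k)\}$ is a basis of the solution space, so I write $w = A f_+ + B f_-$. Re-expressing $f_-(\cdot;k)$ in the basis $\{f_+(\cdot;\pm k)\}$ via the scattering relations and computing the leading asymptotic at $+\infty$ yields
\[
(\partial_x - ik)\, w(x) \;\xrightarrow[x\to +\infty]{}\; -\frac{2ikB}{t(k)}\, e^{-ikx},
\]
which is oscillatory and can vanish only if $B=0$; a symmetric calculation at $-\infty$ forces $A=0$, hence $w\equiv 0$.

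The main obstacle is verifying $t(k)\neq 0$ throughout the range of $k$ considered, since division by $t(k)$ is used in the uniqueness step. For real $k\neq 0$, this follows from the absence of embedded eigenvalues at energy $k^2 > 0$ under the standing assumption that $H_W$ has only continuous spectrum: $t(k)=0$ would force the Wronskian $W(f_-(\cdot;k), f_+(\cdot;k))=0$, making the two Jost solutions proportional and yielding an embedded $L^2$-eigenfunction, a contradiction. A secondary, routine bookkeeping point is the uniform smallness of $m_\pm'(x;k)$ used in the existence computations, which is provided by the Jost-solution bounds of Appendix~\ref{sec:Jost}.
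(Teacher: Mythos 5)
Your existence argument is fine and is essentially the route the paper (implicitly, via~\cite{DT:79} and the relations~\eqref{eqn:RT}) takes: read the asymptotics off the Jost representation $e_{W+}=t\,f_+$ on the right and off the scattering identity $t f_+=r_- f_-(\cdot;k)+f_-(\cdot;-k)$ on the left, using the decay of $m_\pm-1$ and $m_\pm'$ from Appendix~\ref{sec:Jost}. Your uniqueness argument, by contrast, is genuinely different from the paper's: where the paper disposes of the difference of two outgoing solutions by observing it would be a scattering resonance mode with real energy $k^2$, impossible since resonance energies satisfy $\Im k^2<0$ (this is exactly the argument in the proof of Proposition~\ref{inhomog-solve}, citing~\cite{Tang-Zworski}), you expand $w=Af_++Bf_-$ in the Jost basis and read off the non-decaying incoming component. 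That is a legitimate, more elementary alternative — but it is exactly where your proposal has a gap.

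The gap is in your resolution of the ``main obstacle''. First, the algebra is inverted: by~\eqref{wronskian-jost}, $t(k)=0$ corresponds to $Wron(f_+(\cdot;k),f_-(\cdot;k))$ being infinite, not zero; since the Wronskian of two genuine solutions is finite, $t(k)\neq0$ for real $k\neq0$ is automatic and is not what your argument needs. What your uniqueness step actually requires is the opposite non-degeneracy, $Wron(f_+(\cdot;k),f_-(\cdot;k))\neq0$ (equivalently $1/t(k)\neq0$, i.e.\ no real pole of $t$): if the Wronskian vanished, $\{f_+,f_-\}$ would not be a basis, and the coefficient $-2ikB/t(k)$ multiplying $e^{-ikx}$ could vanish with $B\neq0$. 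Second, your proposed contradiction does not work as stated: for real $k\neq0$, proportional Jost solutions would produce a solution behaving like $e^{ikx}$ at $+\infty$ — bounded but not square-integrable — so the absence of embedded $L^2$ eigenvalues gives no contradiction. The standard fix is elementary: if $f_+(\cdot;k)=c\,f_-(\cdot;k)$ with $k\in\RR\setminus\{0\}$, evaluate the constant Wronskian $Wron\left(f_+,\overline{f_+}\right)$ at $+\infty$ and $-\infty$ to get $-2ik=2ik|c|^2$, which is impossible. Alternatively, note that such an $f_+$ would be outgoing at both ends, i.e.\ a scattering resonance mode with real energy, contradicting $\Im k^2<0$; in fact that resonance argument, applied directly to $w=v_1-v_2$ as in the paper's proof of Proposition~\ref{inhomog-solve}, lets you dispense with the Jost-basis expansion (and with any discussion of $t(k)$) altogether. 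Finally, your restriction of the uniqueness step to $k\neq0$ is the right one: at $k=0$ the outgoing condition degenerates and uniqueness can fail (e.g.\ $W\equiv0$), consistent with Hypothesis \textbf{(G)}.
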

More precisely, they satisfy the following asymptotic relations~\cite{DT:79}: 
 \begin{equation}
\label{eqn:RT}\left\{\begin{array}{lr}
e_{W+}(x;k)\ -\ \left(e^{ikx}+ r_{+}(k)e^{-ikx}\right)\ \longrightarrow\ 0 &\mbox{ as } x\to -\infty, \\
e_{W+}(x;k)\ -\ t(k)e^{ikx}\ \longrightarrow\ 0 &\mbox{ as } x\to +\infty, \\ 
e_{W-}(x;k)\ -\ t(k)e^{-ikx}\ \longrightarrow\ 0 &\mbox{ as } x\to -\infty, \\
e_{W-}(x;k)\ -\ \left(e^{-ikx}+ r_{-}(k)e^{ikx}\right)\ \longrightarrow\ 0 &\mbox{ as } x\to +\infty. \end{array}\right. 
\end{equation}

A consequence of the relations~\eqref{eqn:RT} is the Wronskian identity:
\begin{align}
Wron\left(e_{W+}(\cdot;k),e_{W-}(\cdot;k)\right)\ &\equiv\ e_{W+}\partial_xe_{W-}-\partial_xe_{W+}e_{W-}
&=\ -2ik\ t(k).
\label{wronskian}
\end{align}
In terms of the Jost solutions:
\begin{align}
Wron\left(f_{+}(\cdot;k),f_{-}(\cdot;k)\right)\ &=\ -\frac{2ik}{t(k)},\ \ k\ne0.
\label{wronskian-jost}
\end{align}

By analyticity in $W$, potentials for which $\left.Wron\left(f_{+}(\cdot;k),f_{-}(\cdot;k)\right)\right|_{k=0}=0$ are isolated in the space of potentials.

\begin{definition}\label{def:generic}
A potential $W$ is said to be \underline{generic} if 
\[ Wron\left(f_+(x;0),f_-(x;0)\right) \ = \ Wron\left(m_+(x;0),m_-(x;0)\right) \ \neq \ 0.\]
Otherwise, the operator $H_W$ is said to have a zero-energy resonance, {\em i.e.} $H_W u=0$ has a non-trivial solution that is bounded both as $x\to\infty$ and as $x\to-\infty$.
\end{definition}

Note that the potential $W(x)\equiv0$ is not generic since $m_+(x;k)\equiv m_-(x;k)\equiv1$. If $W$ is generic, we have~\cite{DT:79,Newton:86,Weder:99}
 \begin{equation}
t(k)\ =\ -\frac{2ik}{Wron\left(f_+(\cdot;0),f_-(x;0)\right)}\ +\ o(k)\ =\ \mathcal{O}(k),\ \ |k|\to0.
\label{t-as-k20}
\end{equation}
In particular, $t(0)=0$ and $r_\pm(0)=-1$.

\medskip

A simple calculation then yields the following expressions for the outgoing Green's function (resolvent kernel) and the outgoing resolvent, $R_W(k),\ k\ne0$:
\begin{equation}
R_W(x,y;k)\ =\ \left\{ \begin{array}{lr}
                 \frac{1}{-2ik\ t(k)}\  e_{W-}(y;k)\ e_{W+}(x;k) , & y<x,\\ \\
                 \frac{1}{-2ik\ t(k)}\ e_{W-}(x;k)\ e_{W+}(y;k) , & y>x,
                  \end{array}
             \right.
 \label{outgoing-resolvent}
 \end{equation}
\begin{equation}\label{eqn:RV0}
R_{W}(k)F(x)\ = \left(-\frac{d^2}{d x^2}+W(x)-k^2\right)^{-1}\ F(x)\ =\  \int_{-\infty}^\infty R_W(x,\zeta;k) F(\zeta) d\zeta.
\end{equation}

\begin{remark}\label{rem:k=0}
 Note that these expressions, originally defined for $k\neq0$, are easily extended to the point $k=0$, for generic potentials. Indeed, one has by Definition~\ref{def:generic}:
 \[\frac{1}{-2ik\ t(k)}\  e_{W-}(y;k)\ e_{W+}(x;k) \ = \ \frac{f_{-}(y;k)\ f_{+}(x;k)}{Wron\left(f_+(\cdot;k),f_-(\cdot;k)\right)}.\]
In the generic case, this expression has a limit when $k\to0$ by~\eqref{wronskian-jost}
 and~\eqref{t-as-k20}. In the following, we work with the distorted plane waves, which sometimes lead to expressions which are only defined for $k\neq0$. By the above considerations, it is easy to check that in the case of a generic potential, these expressions have a well-defined finite limit when $k\to0$.
\end{remark}

\medskip

In particular, we have the following\medskip
\begin{proposition}\label{inhomog-solve}
Let $F\in L^1(\RR)$. Assume $W(x)$ satisfying Hypotheses~\textbf{(V)} and $k\in K$ satisfying Hypothesis~\textbf{(G)}. Then the inhomogeneous equation
\begin{equation}
\left(-\frac{d^2}{d x^2}+W(x)-k^2\right)\ U\ =\ F\label{inhom}
\end{equation}
has the unique outgoing solution $U=R_W(k)F$.
Moreover, $\|U\|_{L^\infty}\le C\ \|F\|_{L^1}$, with a constant, $C(K)$.
\end{proposition}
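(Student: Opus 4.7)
The plan is to construct the solution explicitly via the outgoing Green's function \eqref{outgoing-resolvent}, verify it satisfies the equation (including the jump conditions at the singular support of $W$) and the outgoing radiation condition, then establish uniqueness using the structure of the homogeneous solutions. The candidate solution is
\[
U(x)\ \equiv\ R_W(k)F(x)\ =\ \int_{-\infty}^{\infty} R_W(x,\zeta;k)\,F(\zeta)\,d\zeta,
\]
which is a convergent integral because $F\in L^1$ and the two Jost representations for $R_W(x,\zeta;k)$ involve bounded functions $m_\pm$, as recalled in Appendix~\ref{sec:Jost}.

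To verify that $U$ solves \eqref{inhom}, I would split the integral at $\zeta=x$ as in \eqref{outgoing-resolvent}, differentiate twice, and use the fact that $e_{W\pm}(\cdot;k)$ are solutions of $H_W u = k^2 u$ together with the Wronskian identity \eqref{wronskian}; on $\mathbb{R}\setminus\{x_0,\dots,x_{N-1}\}$ the boundary terms at $\zeta=x$ combine to produce $F(x)$, giving $(H_W-k^2)U=F$ in the classical sense. At each singular point $x_j$, the continuity $[U]_{x_j}=0$ is immediate from the fact that $R_W(x,\zeta;k)$ is continuous in $x$; the jump $[\partial_x U]_{x_j}=c_j U(x_j)$ follows because $e_{W\pm}(\cdot;k)$ themselves satisfy the corresponding jump condition in Definition~\ref{def:solution}, and this condition is inherited by the integral representation. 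To check that $U$ is outgoing, for $x$ larger than every $x_j$ and than the effective support of $F$, one has $U(x)=\frac{e_{W+}(x;k)}{-2ikt(k)}\int_{-\infty}^{\infty}e_{W-}(\zeta;k)F(\zeta)\,d\zeta$ plus a tail that goes to zero; since $e_{W+}=t(k)m_+(x;k)e^{ikx}$ with $m_+\to 1$, the outgoing condition at $+\infty$ follows, and similarly at $-\infty$.

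For the uniform $L^\infty$ bound, I would rewrite the kernel using Remark~\ref{rem:k=0} in terms of the Jost solutions as $R_W(x,\zeta;k)=f_\pm(x;k)f_\mp(\zeta;k)/\mathrm{Wron}(f_+,f_-)(k)$, which stays well defined at $k=0$ in the generic case by \eqref{wronskian-jost} and \eqref{t-as-k20}. Under Hypothesis \textbf{(G)}, $\mathrm{Wron}(f_+(\cdot;k),f_-(\cdot;k))$ is continuous in $k$ and bounded away from zero on $K$: in the generic case it is nonvanishing at $k=0$ by Definition~\ref{def:generic} and nonvanishing for $k\ne 0$ because $t(k)\ne 0$; in the non-generic case $0\notin K$. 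Combined with the uniform boundedness of $m_\pm(\cdot;k)$ in $(x,k)\in\mathbb{R}\times K$ (from the Volterra estimates in Appendix~\ref{sec:Jost}), this yields $|R_W(x,\zeta;k)|\le C(K)$ uniformly, and therefore $\|U\|_{L^\infty}\le C(K)\|F\|_{L^1}$.

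Uniqueness follows by setting $V=U_1-U_2$ for two outgoing solutions: then $V$ solves the homogeneous equation $(H_W-k^2)V=0$ and is outgoing at both ends. For $k\ne 0$, the outgoing condition at $+\infty$ forces $V$ to be a multiple of $f_+(\cdot;k)$ and at $-\infty$ a multiple of $f_-(\cdot;k)$, so $V$ must vanish since $\mathrm{Wron}(f_+,f_-)=-2ik/t(k)\ne 0$. For $k=0$ under the generic hypothesis, ``outgoing'' reduces to $\partial_x V\to 0$ at both ends, so $V$ is a bounded solution of $H_W V=0$; but Definition~\ref{def:generic} says that generic precisely rules out such a zero-energy resonance, forcing $V\equiv 0$. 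The main obstacle I anticipate is the $k=0$ endpoint in the generic case: the naive distorted plane wave representation degenerates because $t(0)=0$ by \eqref{t-as-k20}, and one must pass to the Jost representation to keep both the Green's function bound and the uniqueness argument uniform up to $k=0$.
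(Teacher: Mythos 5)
Your existence argument coincides with the paper's: both construct $U=R_W(k)F$ from the explicit kernel \eqref{outgoing-resolvent}--\eqref{eqn:RV0} and read off the equation, the jump conditions and the outgoing behavior from the formula. Your uniqueness argument, however, takes a genuinely different route. The paper observes that the difference of two outgoing solutions would be a scattering resonance mode with real energy $k^2$ and invokes the fact that resonance energies satisfy $\Im(k^2)<0$ (citing \cite{Tang-Zworski}); you instead classify outgoing homogeneous solutions directly in the Jost basis, use $Wron(f_+,f_-)=-2ik/t(k)\neq0$ for $k\neq0$, and handle $k=0$ by Definition~\ref{def:generic}. Your version is more elementary and self-contained (it does not import resonance theory), at the small cost of having to justify that $\partial_x V\to0$ at both ends excludes the linearly growing zero-energy solution, which follows from the Volterra representation in Appendix~\ref{sec:Jost}.

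The genuine gap is in the $L^\infty$ bound. You claim that the Volterra estimates give uniform boundedness of $m_\pm(\cdot;k)$ on $\RR\times K$ and hence a uniform kernel bound $|R_W(x,\zeta;k)|\le C(K)$. Appendix~\ref{sec:Jost} only provides $|m_\pm(x;k)|\le C\langle x\rangle$ (see \eqref{eqn:m-bound1}--\eqref{eqn:m-bounds}), and this linear growth is not an artifact: in the generic case $f_+(x;0)$ necessarily grows linearly as $x\to-\infty$ (a solution bounded at both ends at $k=0$ is exactly the zero-energy resonance that genericity excludes), so for $k$ near $0$ --- which Hypothesis \textbf{(G)} allows in $K$ when $V_0$ is generic --- the Jost representation of Remark~\ref{rem:k=0} only yields $|R_W(x,\zeta;k)|\le C\langle x\rangle\langle \zeta\rangle$, which is the bound the paper itself uses in Appendix~\ref{proof-TR}, not a bound uniform in $(x,\zeta)$. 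Your argument does prove the stated estimate when $K$ is bounded away from $k=0$: there $|D_k(x)|\le 1/|k|$ makes $m_\pm$ uniformly bounded and $t(k)$ is continuous and nonvanishing, so the kernel is uniformly bounded and $\|U\|_{L^\infty}\le C(K)\|F\|_{L^1}$ follows. But the step ``uniform boundedness of $m_\pm$ on $\RR\times K$ implies $|R_W(x,\zeta;k)|\le C(K)$'' is false at the $k=0$ endpoint of the generic case, which is precisely the regime you flagged as the main obstacle; closing it requires a separate argument near $k=0$ rather than the kernel-sup bound as written.
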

\begin{proof} Existence follows from the explicit integral representation~\eqref{eqn:RV0}. Note that if $W$ is generic, then $R_W(k)F$ is defined
 for any $k\in\RR$, whereas in the non-generic case, $Wron\left(f_+(x;k),f_-(x;k)\right)\longrightarrow 0\ (k\to0)$ and $f_{\pm}(x,k)$ does not tend to zero as $k\to0$~\cite{DT:79}, so that $R_W(k)F$ has a simple pole at $k=0$.
 \medskip
 
 To prove uniqueness, note that if the difference, $d(x)$, of two solutions is non-zero, then $d(x)$ is a non-trivial solution of the scattering resonance problem, that is ${(H_W-k^2) d=0}$, $d(x)$ outgoing at $|x|\to\infty$ with scattering resonance energy $k^2\in\RR$. However, the scattering resonance energies must satisfy $\Im (k^2)<0$; see, for example,~\cite{Tang-Zworski}. Therefore, $d(x)\equiv0$. This completes the proof.
\end{proof}

%HOMOGENIZATION

\section{Homogenization / Multiple Scale Perturbation Expansion}
\label{sechomo}
\subsection{Multiple scale expansion}
In this section, our goal is to formally obtain the expansion displayed in Theorem~\ref{thm:Texpansion}, using a systematic two-scale / homogenization perturbation scheme. A proof (and derivation by other means) of this expansion is presented in section 
\ref{sec:rigorous-theory}.

We seek a solution of the equation  
\begin{equation}
\left(-\frac{d^2}{d x^2}+V_0(x)+q\left(x,\frac{x}{\epsilon}\right)-k^2\right)e_{V^\epsilon+}(x;k)=0,
\label{eV+eqn}
\end{equation}
 in the form of a two-scale function, $e_{V^\epsilon+}(x;k)=U^\epsilon (x,\frac{x}{\epsilon})$ which satisfies the jump conditions~\eqref{JumpCondition} and the outgoing radiation condition of Definition~\ref{outgoing}. Treating $x$ and $y$ as independent variables, we find that $U^\epsilon(x,y)$ is a solution of
\begin{equation}
\label{homogenization}
\left(-\left(\frac{\partial}{\partial x}+\frac{1}{\epsilon}\frac{\partial}{\partial y}\right)^2+V_0(x)+q(x,y)-k^2\right)U^\epsilon(x,y)=0.
\end{equation}

We then formally expand $U^\epsilon(x,y)$ as
 \begin{equation}
 U^\epsilon(x,y) = \sum_{j=0}^\infty \epsilon^j\ U_j(x,y) ,\label{expansion}
 \end{equation}
and require that 
\begin{align}
& U_j(x,y+1)\ =\ U_j(x,y),\ \ j\ge0,\nn\\
& U_0(x,y)-e^{ikx}, \ \ U_j(x,y)\ \ j\ge1\ \ {\rm outgoing\ as}\ |x|\to\infty,\label{Ujconditions}\\
& \left. U_j(x,y)\right|_{y=x/\epsilon}\ {\rm satisfies\ jump\ conditions}~\eqref{JumpCondition}.
\nn
\end{align}

The problem is solved by substituting 
the expansion~\eqref{expansion} into~\eqref{homogenization} and imposing the equation, jump conditions and radiation condition at each order in $\epsilon$. 
The differential equation becomes
\begin{equation}
 \left(-\left(\frac{\partial}{\partial x}+\frac{1}{\epsilon}\frac{\partial}{\partial y}\right)^2+V_0(x)+q(x,y)-k^2\right)\ U^\epsilon\left(x,\frac{x}{\epsilon}\right)=\sum_{j=-2}^\infty\ \epsilon^j\ r_j\ =\ 0,
\end{equation}
implying the following hierarchy of equations at each order in $\epsilon$ 
\begin{subequations}
\begin{align}
\label{H1} {\mathcal{O}(\epsilon^{-2})}\ &\qquad r_{-2}=-\partial^2_y U_0\ =\ 0 , \\
 \label{H2} \mathcal{O}(\epsilon^{-1})&\qquad r_{-1}=-\partial^2_y U_1 - 2\partial_y \partial_x U_0\ =\ 0 ,\\
\label{H3}
 \mathcal{O}(\epsilon^{0})&\qquad\ \ r_{0}= -\partial^2_y U_2 - 2\partial_y \partial_x U_1 - \partial_x^2U_0+(V_0+q)U_0-k^2U_0 = 0,\\
\label{H4} \mathcal{O}(\epsilon^{1})&\qquad\ \ r_{1}= -\partial^2_y U_3 - 2\partial_y \partial_x U_2 - \partial^2_x U_1 +(V_0+q)U_1-k^2 U_1\ =\ 0 ,\\
\label{H5} 
\mathcal{O}(\epsilon^{2})&\qquad\ \ r_{2}= -\partial^2_y U_4 - 2\partial_y \partial_x U_3 - \partial^2_x U_2 +(V_0+q)U_2-k^2 U_2\ =\ 0 ,\\
\label{H6} 
\mathcal{O}(\epsilon^{3})&\qquad\ \ r_{3}= -\partial^2_y U_5 - 2\partial_y \partial_x U_4 - \partial^2_x U_3 +(V_0+q)U_3-k^2 U_3\ =\ 0 ,\\ 
 \dots \ \ \ \dots \nn\\
\label{Hj} 
\mathcal{O}(\epsilon^{j})&\qquad\ \ r_{j} = r_j\left[U_{j+2},U_{j+1},U_j\right]\ =\ 0 \ . 
\end{align}
\end{subequations}
%Developpement Homogenization
For example, to construct an approximate solution of~\eqref{homogenization} satisfying~\eqref{Ujconditions} up to the order 3, we solve simultaneously the equations $r_j=0$ for $j=-2,\cdots,3$. This will determine the functions $U_0$, $U_1$, $U_2$ and $U_3$ which make $U^\epsilon$ an approximate solution through order $\mathcal{O}(\epsilon^3)$. Since $e_{V^\epsilon+}(x;k) - e^{ikx}$ is to be outgoing, we require $U_0-e^{ikx}$ and each $U_i$ ($i=1,\dots,3$) to satisfy the outgoing condition. We now proceed with the implementation.
\bigskip

\noindent{\bf Caveat lector!} {\it The formal expansion presented in the remainder of this section yields terms involving spatial derivatives 
 of $e_{V_0+}(x;k)$ and $q_j(x)$ of arbitrarily high order. Now $\partial_x e_{V_0+}(x;k)$ has jump discontinuities on $\supp V_{sing}$ and $q_j(x)$ has jump discontinuities. Hence, the expansion must viewed in a distributional sense, {\it e.g.} involving terms, such as $\partial_x^\alpha\delta(x-x_j)$ {\it etc.} Furthermore, when we impose the jump conditions~\eqref{JumpCondition} to the expansion, order by order in $\epsilon$, we shall throughout assign $[\partial_x^\alpha\delta(x-x_j)]_{x=x_j}=0$. Although seemingly risky, in section~\ref{sec:rigorous-theory} we give a complete rigorous proof of the expansion with error bounds.}
 \bigskip 
 
Beginning at $\mathcal{O}(\epsilon^{-2})$, one has from~\eqref{H1}
\begin{equation}\label{U0x} r_{-2}=0 \Longrightarrow \partial^2_y U_0=0 \Longrightarrow U_0(x,y)=U_0(x). \end{equation}
Consequently, one has from~\eqref{H2} 
\begin{equation}\label{U1x} r_{-1}=0 \Longrightarrow \partial^2_y U_1=- 2\partial_y \partial_x U_0=0 \Longrightarrow U_1(x,y)=U_1(x).\end{equation}

Recall that $y\mapsto q(x,y)$ is 1-periodic and $\int_0^1 q(x,y)\ dy=0$. Integration
 of the equation~\eqref{H3} with respect to $y$ yields:
\begin{equation}\int_0^1 r_0(x,y)\ dy=0 \Longrightarrow -\frac{d^2}{d x^2} U_0(x)+ V_0(x)U_0(x) -k^2 U_0(x)=0.\label{U0h}\end{equation}
Furthermore, since $U_0-e^{ikx}$ is outgoing, one has by Proposition~\ref{prop:outgoing}
 \begin{equation}
U_0(x)\equiv e_{V_0+}(x;k).\label{U0}\end{equation}

By~\eqref{U0h} and~\eqref{H3} leads to 
\begin{equation}r_{0}=0 \iff -\partial_y^2 U_2(x,y)+q(x,y) e_{V_0+}(x)=0.
\label{U2p}\end{equation}
Thus, we decompose $U_2$ as:
\[U_2=U_2^{(h)}(x)+U_2^{(p)}(x,y),\]
with $U_2^{(p)}(x,y)$ a particular solution, and $U_2^{(h)}(x)$ an homogeneous solution to be determined.

Again, since $y\mapsto q(x,y)$ is 1-periodic and $\int_0^1 q(x,y)\ dy=0$, when by 
~\eqref{H4},
\begin{equation}\int_0^1 r_1(x,y)\ dy=0 \Longrightarrow -\frac{d^2}{d x^2} U_1(x)+ V_0(x)U_1(x) -k^2 U_1(x)=0.\label{U1h}\end{equation}
Since $U_1$ is outgoing, we claim
\begin{equation}
\label{U1}
U_1 \equiv 0.
\end{equation}
Indeed, in this case $k^2$ is a scattering resonance energy and $U_1$ its corresponding mode. Scattering resonances necessarily satisfy $\Im k^2 <0$
~\cite{Tang-Zworski}. However, $k^2\in\RR$ and hence $U_1\equiv0$.

Consequently, 
 \begin{equation}r_{1}=0 \Longrightarrow -\partial_y^2 U_3(x,y)-2\partial_y\partial_xU_2=0.\label{U3p}\end{equation}
In the same way as for $U_2$, we decompose $U_3$ as \[U_3=U_3^{(h)}(x)+U_3^{(p)}(x,y),\] with $U_3^{(p)}(x,y)$ a particular solution, and $U_3^{(h)}(x)$ an homogeneous solution to be determined. 

Integration of the equations~\eqref{H5} and~\eqref{H6} with respect to $y$, respectively, yields:
\begin{align}\label{U2h}
 &-\frac{d^2}{d x^2} U_2^{(h)}(x)+ V_0(x)U_2^{(h)}(x) +\int_0^1 q(x,y)U_2^{(p)}(x,y)dy -k^2 U_2^{(h)}(x) = 0. \\
 \label{U3h}
 &-\frac{d^2}{d x^2} U_3^{(h)}(x)+ V_0(x)U_3^{(h)}(x) -k^2 U_3^{(h)}(x) = -\int_0^1 U_3^{(p)}(x,y)q(x,y)\ dy .
\end{align}

We now solve~\eqref{U2p},~\eqref{U3p},~\eqref{U2h} and~\eqref{U3h} to obtain a unique (approximate) solution satisfying both outgoing and jump conditions, as we see in the following.
First, we use the decomposition in Fourier series of $q(x,y)$ in $y$ :
\[q(x,y)=\sum_{j\neq 0} q_j(x) e^{2i\pi jy}.\]
Consequently, equation~\eqref{U2p} leads immediately to
\begin{equation}
\label{solU2p}
U_2^{(p)}(x,y)=\ -\frac{ e_{V_0+}(x;k)}{4\pi^2} \sum_{|j|\geq 1} \frac{q_j(x)}{j^2} e^{2i\pi jy}.
\end{equation}

From~\eqref{U3p}, one deduces 
\[\partial_y^2 U_3(x,y)=\frac{i}{\pi} \sum_{|j|\geq 1}\frac{\partial_x( e_{V_0+}(x;k)q_j(x))}{j}e^{2i\pi jy}.\]
A particular solution $U_3^{(p)}(x,y)$ is therefore given by 
\begin{equation}
\label{solU3p}
U_3^{(p)}(x,y)=\ - \frac{i}{4\pi^3}\sum_{|j|\geq 1}\frac{\partial_x( e_{V_0+}(x;k)q_j(x))}{j^3}e^{2i\pi jy}.
\end{equation}
Then, using the Fourier series of $q$ and $U_2^{(p)}$, we obtain the following equations from~\eqref{U2h} and~\eqref{U3h}:
\begin{align}
\label{solU2h}
&-\frac{d^2}{d x^2} U_2^{(h)}(x)+ V_0(x)U_2^{(h)}(x) -k^2 U_2^{(h)}(x) = \frac{ e_{V_0+}(x;k)}{4\pi ^2}\sum_{|j|\geq 1} \frac{|q_j(x)|^2}{j^2},\ \ {\rm and}\\
\label{solU3h}
&-\frac{d^2}{d x^2} U_3^{(h)}(x)+ V_0(x)U_3^{(h)}(x) -k^2 U_3^{(h)}(x) \nn\\
& \qquad  = -\int_0^1 U_3^{(p)}(x,y)q(x,y)\ dy \ = \ \frac{i}{4\pi ^3}\sum_{|j|\geq 1} \frac{\partial_x(e_{V_0+}(x;k)q_j(x))q_{-j}(x)}{j^3}.
\end{align}
By Proposition~\ref{inhomog-solve}, equations~\eqref{solU2h} and~\eqref{solU3h} have unique outgoing solutions.
We refer to the expansion of $U^\epsilon$ obtained in this way as the \bigskip

\noindent {\bf Bulk (homogenization) expansion:}
\begin{equation}
U^\epsilon(x,y)\ =\ e_{V_0+}(x;k)\ +\ \epsilon^2\left( U_2^{(p)}(x,y) + U_2^{(h)}(x) \right)\ +\ 
 \epsilon^3\left( U_3^{(p)}(x,y) + U_3^{(h)}(x) \right)\ +\ \dots
 \label{bulkexpansion}\end{equation}
 
 It consists of a leading order average term (homogenization) plus correctors at each order in $\epsilon$ due to microstructure. \bigskip
 
\noindent\textbf{Failure of Jump conditions at interfaces:}

 Recall that we seek a solution which satisfies the jump conditions~\eqref{JumpCondition} on $U^\epsilon(x,y)$ 
 for all $(x,y)=(x,x/\epsilon)$ at each order in $\epsilon$. The leading order term, $e_{V_0+}$ satisfies all jump conditions. Now consider the terms $U_j^{(p)}(x,y) + U_j^{(h)}(x)$, arising at order $\mathcal{O}(\epsilon^j)$. 
  By construction, $U_j^{(h)}$ satisfies~\eqref{JumpCondition}. However $U_j^{(p)}(x,x/\epsilon)$ does not. Indeed, for the cases $j=2,3$, referring to expressions~\eqref{solU2p}
  and~\eqref{solU3p} we observe violation of~\eqref{JumpCondition} in $U_j^{(p)}(x,x/\epsilon)$ at discontinuities of $q_j(x)$ and $e_{V_0+}(x;k)$, and their derivatives.
  
  More precisely, the jump conditions for $U_2^{(p)}$ fail at $a_l$ ($l=1,\ldots,M$) each point of discontinuity of $q(x,x/\epsilon)$, since one has 
  \begin{align}
 \left[U_2^{(p)}(x,\frac{x}{\epsilon})\right]_a &= F^\epsilon_{2,a} , \label{JumpU2}\\
 \left[\frac{d}{dx}U_2^{(p)}(x,\frac{x}{\epsilon})\right]_a &=\frac1\epsilon G^\epsilon_{2,a}\ +\ H^\epsilon_{2,a} .\label{JumpU2prime}
\end{align}
with
\begin{equation}\label{jumpconditionU2p}
 \begin{array}{rl}
 F^\epsilon_{2,a} & \equiv \displaystyle\frac{-1}{4\pi^2}\sum_{|j|\geq 1} e_{V_0+}(a;k) \left[q_j\right]_a \frac{ e^{2i\pi ja/\epsilon}}{j^2} ,\\
 G^\epsilon_{2,a} & \equiv \displaystyle \frac{-i}{2\pi}\sum_{|j|\geq 1} e_{V_0+}(a;k)\left[q_j\right]_a \frac{ e^{2i\pi j a/\epsilon}}{j} ,\\
 H^\epsilon_{2,a} & \equiv \displaystyle \frac{-1}{4\pi^2}\sum_{|j|\geq 1} \left[\partial_x( e_{V_0+}(x;k)q_j(x))\right]_a \frac{ e^{2i\pi ja/\epsilon}}{j^2}.\end{array}
 \end{equation}
 
 In the same way, the jump conditions for $U_3^{(p)}$ fail at points of discontinuity of the functions $q(x,x/\epsilon)$ and $\partial_x q(x,x/\epsilon)$, and for
 ${x\in\{x_0,\cdots,x_{N-1}\}}$ the support of $V_{sing}$\ (\ recall: $V_{sing}=\sum_{j=0}^{N-1} c_j \ \delta (x-x_j)$\ ): 
\begin{align}
 \left[U_3^{(p)}(x,\frac{x}{\epsilon})\right]_a &= F^\epsilon_{3,a} , \label{JumpU3}\\
 \left[\frac{d}{dx}U_3^{(p)}(x,\frac{x}{\epsilon})\right]_a &=\frac1\epsilon G^\epsilon_{3,a}\ +\ H^\epsilon_{3,a}, \label{JumpU3prime}
\end{align} 
with $F^\epsilon_{3,a}$ and $H^\epsilon_{3,a}$ bounded highly oscillating functions and
\begin{equation}\label{jumpconditionU3p}
 G^\epsilon_{3,a} \equiv \frac{1}{2\pi^2}\sum_{|j|\geq 1}\left[\partial_x( e_{V_0+}(x;k)q_j(x))\right]_a \frac{e^{2i\pi ja/\epsilon} }{j^2}.
 \end{equation}
 $F_{3,a}^\epsilon$ and $H_{3,a}^\epsilon$ can be made explicit, but we omit these expressions as they contribute only at $\mathcal{O}(\epsilon^3)$.
 \bigskip
 
 \noindent\textbf{Restoring the Jump Conditions at interfaces:}
 
 In order to restore the jump conditions~\eqref{JumpCondition}, we must add to the expansion, at each point where the jump conditions are not satisfied, an appropriate {\it corrector}. These correctors each solve a non-homogeneous equation, driven 
 by the jumps in the bulk expansion~\eqref{bulkexpansion}.
 
 To see this, first note that $\left[\ \frac{d}{dx}U_j^{(p)}\ \right]_a=\mathcal{O}(\epsilon^{-1}),\ j=2,3$. Since $U_j^{(p)}$ contributes at order $\epsilon^j$, this suggests adding a corrector at order $\epsilon^{j-1}$. Thus, we introduce the
 
 \noindent{\bf Bulk expansion with corrector terms:}
\begin{align}
U^\epsilon(x,y)\ =\ e_{V_0+}(x;k)\ +\ \epsilon\ \mathcal{U}_1^\epsilon(x)\ +\ \epsilon^2\left(\ U_2^{(p)}(x,y) + U_2^{(h)}(x) + \mathcal{U}_2^\epsilon(x)\right) &\nn\\
  +\ \epsilon^3\left(\ U_3^{(p)}(x,y) + U_3^{(h)}(x) + \mathcal{U}_3^\epsilon(x)\right)\ +\ \dots&
\label{correctedexpansion}\end{align}
{\it The interface correctors $\mathcal{U}_j^\epsilon(x)$ are to be determined so that, at each order in $\epsilon$, 
the expansion~\eqref{correctedexpansion} satisfies the jump conditions~\eqref{JumpCondition}, the differential equation 
\eqref{eV+eqn} and outgoing radiation condition.}\medskip
\\
We construct 
$\mathcal{U}_j^\epsilon(x),\ j=1,2$
 below. The general construction uses the following 
   %
%  \footnote{\bf variant of Tartar's lemma?}
  %
  \begin{lemma} \label{lem:piecewiseU} Let $F_1,F_2\in \RR$ and $V_0=V_{sing}+V_{reg}$ as in~\eqref{Vdecomp}. Then there exists $\mathcal{U}(x)$, an outgoing piecewise $C^2$ solution of:
 \begin{equation}
\left(-\frac{d^2}{d x^2}+V_0(x)-k^2\right)\mathcal{U}=0 \mbox{, for } x<a \mbox{ and } x>a, 
\label{Ubl-eqn}\end{equation}
        which also satisfies the following jump conditions at the point $x=a$:
		\begin{align*}
& \left[ \mathcal{U}(x) \right]_a\ =\ F_1, \nn\\ 
 &\left[ \frac{d}{dx}\mathcal{U}(x)\ \right]_a\ -\ c\ \mathcal{U}(a-)\ =\ F_2. \nn
 \end{align*}
Here, $\mathcal{U}(a-)=\lim\limits_{x\uparrow a}{\mathcal{U}(x)}$, and the constant 
\begin{equation}\label{constant-c}
c=\left\{ \begin{array}{lr}
           0 &{\rm if }\ a\notin\supp V_{sing}, \\
      c_{j_0} &{\rm if }\ a=x_{j_0}\in\supp V_{sing}\ ; \\
      \end{array}\right.
\end{equation}
recall $V_{sing}(x)=\sum_{j=0}^{N-1} c_j \ \delta (x-x_j)$. 
 \medskip

$\mathcal{U}(x)$ has the form 
\begin{equation}\label{piecewiseU}\mathcal{U}(x)=\left\{ \begin{array}{lr}
       \alpha e_{V_0-}(x;k) &{\rm if }\ x<a, \\
	   \beta e_{V_0+}(x;k) &{\rm if }\ x>a, \\
      \end{array}\right.
\end{equation}
for appropriate choice of $\alpha$ and $\beta$, namely
\begin{equation}
\alpha=\frac{F_2 e_{V_0+}(a;k)\ - \ F_1 \partial_x e_{V_0+}(a+;k)}{2ik\ t_0^{hom}(k)} \ \ {\rm and} \ \ \beta=\frac{F_2 e_{V_0-}(a;k)\ - \ F_1 \partial_x e_{V_0-}(a+;k)}{2ik\ t_0^{hom}(k)}. \label{alpha-beta}
\end{equation}
\end{lemma}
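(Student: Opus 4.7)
The plan is to verify directly that the piecewise ansatz~\eqref{piecewiseU} satisfies all the required properties on each half-line, and to determine $\alpha,\beta$ by solving the linear system produced by imposing the two jump conditions at $x=a$.

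First, on each of the two open half-lines $\{x<a\}$ and $\{x>a\}$, the function $\mathcal{U}$ is a scalar multiple of a distorted plane wave for $-\partial_x^2+V_0-k^2$; consequently it is piecewise $C^2$, solves~\eqref{Ubl-eqn}, and satisfies the jump conditions $[\partial_x e_{V_0\pm}]_{x_j}=c_j e_{V_0\pm}(x_j)$ at every $x_j\in\supp V_{sing}$ with $x_j\ne a$ (these are inherited by scalar multiples restricted to a one-sided neighbourhood of such $x_j$). The outgoing character of $e_{V_0-}$ at $-\infty$ and of $e_{V_0+}$ at $+\infty$ gives $\mathcal{U}$ the outgoing radiation condition at both ends. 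It therefore only remains to impose the two jump conditions at $x=a$ on the coefficients $\alpha,\beta$.

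The continuity of $e_{V_0\pm}$ at $a$ makes the first condition immediate: $[\mathcal{U}]_a=\beta e_{V_0+}(a;k)-\alpha e_{V_0-}(a;k)=F_1$. For the second condition, I must be careful: when $a=x_{j_0}\in\supp V_{sing}$, $\partial_x e_{V_0-}$ itself jumps at $a$, and one has $\partial_x e_{V_0-}(a-;k)=\partial_x e_{V_0-}(a+;k)-c_{j_0}e_{V_0-}(a;k)$. Substituting this into
\[
 [\partial_x\mathcal{U}]_a-c\,\mathcal{U}(a-)\;=\;\beta\,\partial_x e_{V_0+}(a+;k)-\alpha\,\partial_x e_{V_0-}(a-;k)-c\,\alpha\,e_{V_0-}(a;k),
\]
the $c\,\alpha\,e_{V_0-}(a;k)$ contribution exactly cancels the $c_{j_0}\alpha\,e_{V_0-}(a;k)$ term coming from the jump in $\partial_x e_{V_0-}$, while if $a\notin\supp V_{sing}$ both $c$ and the jump vanish. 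In either case~\eqref{constant-c}, the second condition reduces to $\beta\,\partial_x e_{V_0+}(a+;k)-\alpha\,\partial_x e_{V_0-}(a+;k)=F_2$.

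We are thus left with the $2\times 2$ linear system in $(\alpha,\beta)$ with matrix
\[
 \begin{pmatrix}-e_{V_0-}(a;k) & e_{V_0+}(a;k)\\ -\partial_x e_{V_0-}(a+;k) & \partial_x e_{V_0+}(a+;k)\end{pmatrix},
\]
whose determinant is the Wronskian $e_{V_0+}\partial_x e_{V_0-}-\partial_x e_{V_0+}\,e_{V_0-}$ evaluated at $a+$, i.e.~$-2ik\,t_0^{hom}(k)$ by~\eqref{wronskian}. Under Hypothesis~\textbf{(G)} this is nonzero (for $k\ne 0$ it is nonzero by definition of $t(k)$; for $k=0$ in the generic case, one invokes Remark~\ref{rem:k=0}/Definition~\ref{def:generic} and takes limits). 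Applying Cramer's rule yields the formulas~\eqref{alpha-beta}, concluding the existence proof. The only real delicate step is the bookkeeping in paragraph two: making sure the $c\,\mathcal{U}(a-)$ defect and the intrinsic jump of $\partial_x e_{V_0-}$ at a support point of $V_{sing}$ combine to leave a system expressed purely in terms of $(a+)$-limits, so that the resulting determinant really is the Wronskian.
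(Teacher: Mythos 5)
Your proposal is correct and follows essentially the same route as the paper: outgoing behavior holds by construction of the piecewise ansatz, the jump relation satisfied by $e_{V_0-}$ at a point of $\supp V_{sing}$ converts the second condition into one involving only $(a+)$-limits (exactly the cancellation of the $c\,\alpha\,e_{V_0-}(a;k)$ term), and the resulting $2\times2$ system is solved via the Wronskian identity~\eqref{wronskian}, yielding~\eqref{alpha-beta}. The only addition beyond the paper's argument is your explicit remark on nonvanishing of the Wronskian under Hypothesis~\textbf{(G)}, which is a harmless (and welcome) clarification.
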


 Before giving the proof, we explain why choosing $\mathcal{U}_j^\epsilon$ as in Lemma~\ref{lem:piecewiseU} does not change the bulk expansion~\eqref{bulkexpansion} constructed above. Therefore, our approach which first computes the bulk-expansion and then the correctors is consistent.
 
 As pointed out the expressions in the bulk expansion~\eqref{bulkexpansion}
\begin{equation}
U_j(x,y)\ =\ U_j^{bulk}(x,y)\ \equiv\ U_j^{(p)}(x,x/\epsilon) + U_j^{(h)}(x)
\label{Ujbulk}
\end{equation}
do not satisfy jump conditions~\eqref{JumpCondition}. Suppose now that we replace the functions 
$U_j(x,y)=U_j^{bulk}(x,y)$ by $U_j(x,y)=U_j^{bulk}(x,y)+ \mathcal{U}_a^\epsilon(x)$
and we seek $\mathcal{U}_a^\epsilon(x)$ so as to ensure jump conditions~\eqref{JumpCondition}. (Assume only one corrector is required).
Note that since $\mathcal{U}_a^\epsilon(x)$ lies in the kernel of $\partial_y$, adding such a term has no effect on the equations determining $U_j^{(p)}(x,y)$. Further, 
we want to preserve the form of $U_j^{(h)}(x)$, which 
has previously been constructed. Thus,
\begin{align}
r_j\left[U_{j+2},U_{j+1},U_j^{bulk} +\mathcal{U}_a^\epsilon\right]\ =\ & r_j\left[U_{j+2},U_{j+1},U_j^{bulk}\right]\nn \\
 &+ \left(-\partial_x^2+V_0(x)-k^2\right)\mathcal{U}_a^\epsilon(x)\ +\ q(x,y)\mathcal{U}_a^\epsilon(x).
 \label{rj-perturbed}
 \end{align}
 The equation for $U_j^{(h)}(x)$ is obtained by averaging~\eqref{rj-perturbed} with respect to $y$. Since $q(x,y)$ has mean zero with respect to $y$, this gives
 \begin{equation}
 \int_0^1 r_j\left[U_{j+2},U_{j+1},U_j^{bulk}\right](x,y)\ dy\ +\ \left(-\partial_x^2+V_0(x)-k^2\right)\mathcal{U}_a^\epsilon(x)\ =\ 0.
 \label{rj-integrated}\end{equation}
 Thus, if we choose $\mathcal{U}_a^\epsilon(x)$ to satisfy~\eqref{Ubl-eqn}, then the second term in~\eqref{rj-integrated} vanishes and 
 the equation for $U_j^{(h)}(x)$ is preserved. Therefore, if Lemma 
~\ref{lem:piecewiseU} is used to determine the jump-driven correctors at each order in 
 $\epsilon$, then the corrected bulk expansion~\eqref{correctedexpansion} is the solution we seek.

\bigskip

\noindent{\it Proof of Lemma~\ref{lem:piecewiseU}:}
 The piecewise form of $\mathcal{U}$~\eqref{piecewiseU} satisfies the outgoing radiation condition, by construction. The constants $\alpha$ and $\beta$ are determined by the jump conditions. 
 
Using the fact that $ e_{V_0+}(x;k)$ and $e_{V_0-}(x;k)$ satisfy the jump conditions~\eqref{JumpCondition}, one has 
 \[ \left\{ \begin{array}{rl}
 \left[ \mathcal{U}(x) \right]_a &=\beta e_{V_0+}(a;k) -\alpha e_{V_0-}(a;k), \\
\left[ \frac{d}{dx}\mathcal{U}(x) \right]_a\ - \ c\mathcal{U}(a-) &=\beta \partial_x e_{V_0+}(a+;k) -\alpha \partial_x e_{V_0-}(a-;k) - c \alpha\ e_{V_0-}(a-;k) \\
&=\beta \partial_x e_{V_0+}(a+;k) -\alpha \partial_x e_{V_0-}(a+;k).
\end{array} \right. \]
 Solving this inhomogeneous system, using the value of the Wronskian, given in~\eqref{wronskian}, leads immediately to~\eqref{alpha-beta}. This completes the proof of Lemma~\ref{lem:piecewiseU}.\ \endproof \medskip
 
 We now proceed to apply Lemma~\ref{lem:piecewiseU} to determine the correctors associated with $U_2^{(p)}$ and $U_3^{(p)}$. Using~\eqref{JumpU2}-\eqref{JumpU2prime} 
 and~\eqref{JumpU3}-\eqref{JumpU3prime}, the jump conditions~\eqref{JumpCondition} applied to $\epsilon \mathcal{U}_1^\epsilon\ + \ \epsilon^2 \mathcal{U}_2^\epsilon\ + \ \epsilon^2 U_2^{(p)}\ +\ \epsilon^3 U_3^{(p)}$ read:
 \begin{align}
\epsilon\ \left[\ \mathcal{U}_1^\epsilon\ \right]_a\ +\ 
\epsilon^2 \left(\ \ F_{2,a}^\epsilon\ +\ \left[\ \mathcal{U}_2^\epsilon\ \right]_a\ \right)\ =\ \mathcal{O}(\epsilon^3)&,\label{jumpUepsilon}\\
\epsilon \Big( G_{2,a}^\epsilon\ +\ \Big[\ \frac{d}{dx}\mathcal{U}_1^\epsilon\ \Big]_a -c \mathcal{U}_1^\epsilon(a-)\Big)\ +\ 
\epsilon^2 \Big( H_{2,a}^\epsilon -c\ U_2^{(p)}(a-) + G_{3,a}^\epsilon &\nn\\ + \Big[\ \frac{d}{dx}\mathcal{U}_2^\epsilon\ \Big]_a -c\ \mathcal{U}_1^\epsilon(a-)\Big) = \mathcal{O}(\epsilon^3)&.\label{jumpDUepsilon}
 \end{align}
 
Equations~\eqref{jumpUepsilon} and~\eqref{jumpDUepsilon} imply jump conditions at order $\epsilon$ and order $\epsilon^2$. Therefore, we construct $\mathcal{U}_{j,a}^\epsilon,\ j=1,2$ solving the two inhomogeneous problems at each point, $a$, of non-smoothness.

\noindent{\bf System for corrector $\mathcal{U}_{1,a}^\epsilon$:}
\begin{align}
&\left(-\frac{d^2}{d x^2}+V_0(x)-k^2\right)\mathcal{U}_{1,a}^\epsilon=0, \ \ \ x\ne a, 
\label{U1a-eqn}\\
& \left[\ \mathcal{U}_{1,a}^\epsilon\ \right]_a\ =\ 0,\ \ \ \left[ \frac{d}{dx}\mathcal{U}_{1,a}^\epsilon\ \right]_a\ -\ 
 c\ \mathcal{U}_{1,a}^\epsilon(a-)\ =\ -G_{2,a}^\epsilon. 
\label{U1a-jumps}
\end{align}

\noindent{\bf System for corrector $\mathcal{U}_{2,a}^\epsilon$:}
\begin{align}
&\left(-\frac{d^2}{d x^2}+V_0(x)-k^2\right)\mathcal{U}_{2,a}^\epsilon=0,\ \ \ x\ne a, 
\label{U2a-eqn}\\
& \left[\ \mathcal{U}_{2,a}^\epsilon\ \right]_a\ =\ -F_{2,a}^\epsilon, \ \ \ \ \left[ \frac{d}{dx}\mathcal{U}_{2,a}^\epsilon(x)\ \right]_a - c\ \mathcal{U}_{2,a}^\epsilon(a-)\ =\ -H_{2,a}^\epsilon-G_{3,a}^\epsilon+c\ U_2^{(p)}(a-). 
\label{U2a-jumps}
\end{align}

Lemma~\ref{lem:piecewiseU}, applied to~\eqref{U1a-eqn}-\eqref{U1a-jumps}
 and~\eqref{U2a-eqn}-\eqref{U2a-jumps} defines the unique correctors $\mathcal{U}_{1,a}^\epsilon$ and $\mathcal{U}_{2,a}^\epsilon$:
 $ \mathcal{U}_{1,a}^\epsilon$ is given by~\eqref{piecewiseU}, {\it i.e.}
 \begin{equation}\label{piecewiseU.1}
 \mathcal{U}_{1,a}^\epsilon(x)=\left\{ \begin{array}{lr}
       \alpha_{1,a}^\epsilon\ e_{V_0-}(x;k) &{\rm if }\ x<a, \\
       &\\
	   \beta_{1,a}^\epsilon\ e_{V_0+}(x;k) &{\rm if }\ x>a, \\
      \end{array}\right.
\end{equation}
with $\alpha_{1,a}^\epsilon$ and $\beta_{1,a}^\epsilon$ given by 
\begin{equation}
\alpha_{1,a}^\epsilon\ =\ -\frac{G_{2,a}^\epsilon}{2ik t_0^{hom}(k)}\ e_{V_0+}(a;k),\ \ \ 
\beta_{1,a}^\epsilon\ =\ -\frac{G_{2,a}^\epsilon}{2ik t_0^{hom}(k)}\ e_{V_0-}(a;k),
\label{alpha-beta-1a}\
\end{equation}
where $G_{2,a}^\epsilon$ is given in~\eqref{jumpconditionU2p}. Then, 
$ \mathcal{U}_{2,a}^\epsilon$ is given by~\eqref{piecewiseU} with $\alpha_{2,a}^\epsilon$ and $\beta_{2,a}^\epsilon$ 
$\alpha$ and $\beta$ given by
\begin{align}
\alpha_{2,a}^\epsilon &= \frac{1}{2ik t_0^{hom}(k)}\left( \left(-H_{2,a}^\epsilon-G_{3,a}^\epsilon +c\ U_2^{(p)}(a-)\right) e_{V_0+}(a;k) + F_{2,a}^\epsilon \partial_xe_{V_0+}(a+;k) \right), \nn\\
\beta_{2,a}^\epsilon &= \frac{1}{2ik t_0^{hom}(k)}\left( \left(-H_{2,a}^\epsilon-G_{3,a}^\epsilon+c\ U_2^{(p)}(a-)\right) e_{V_0-}(a;k) + F_{2,a}^\epsilon \partial_xe_{V_0-}(a+;k) \right), 
\label{alpha-beta-2a}\
\end{align}
where $H_{2,a}^\epsilon,\ F_{2,a}^\epsilon$ and $G_{3,a}^\epsilon$ are given in~\eqref{jumpconditionU2p} and~\eqref{jumpconditionU3p}.

Therefore at $\mathcal{O}(\epsilon)$, we define the corrector $\mathcal{U}^\epsilon_1$ as 
 \begin{equation}\label{defU1BL}
 \mathcal{U}^\epsilon_1=\sum_{j=1}^{M} \mathcal{U}^\epsilon_{1,a_j}
 \end{equation}
where $a_j,\ \ j=1,\dots,M$ denote the points of discontinuity of $q\left(x,x/\epsilon\right)$ .

At order $\mathcal{O}(\epsilon^2)$, we have a violation of the jump conditions~\eqref{JumpCondition} due to 
\begin{itemize}
\item[(i)] points of ``discontinuity'' of $q_j(x)$, {\it i.e.} $a_j$ ($j=1,\dots,M$) for which $\left[q_j\right]_a\ne0$ or $[\partial_x q_j]_a\ne0$, and 
\item[(ii)] the singular set $\supp V_{sing}=\{x_0,\dots,x_{N-1}\}$.
\end{itemize} 

Thus we construct, $\mathcal{U}_{2,a}^\epsilon$, for all $a$ in the set, $\Omega$, of non-smooth points of $V^\epsilon(x)$:
\begin{equation}
\Omega=\{x_0,\dots,x_{N-1}\}\cup\{-\infty=a_0,a_1,\dots,a_M=\infty\}
\label{Omega-def}
\end{equation}
 and define the corrector $\mathcal{U}^\epsilon_2$ by:
 \begin{equation}\label{defU2BL}
 \mathcal{U}^\epsilon_2=\sum_{a\in\Omega} \mathcal{U}^\epsilon_{2,a}.
 \end{equation}
 
 We summarize the preceding calculation in the following
 \begin{proposition}\label{expansion-summary}
\begin{align}
\label{devhomo}
e_{V^\epsilon+}(x;k)\ =\ & U^\epsilon\left(x,x/\epsilon\right)= e_{V_0+}(x;k)\ +\ \epsilon\ \mathcal{U}_1^\epsilon(x)\nn \\
& \ + \ \epsilon ^2\ \left(\ U_2^{(h)}(x)\ +\ U_2^{(p)}(x,x/\epsilon)\ +\ \mathcal{U}_2^\epsilon(x)\ \right) +\mathcal{O}(\epsilon ^3) 
\end{align}
gives a formal construction of the distorted plane wave $e_{V^\epsilon+}(x;k)$, through $\mathcal{O}(\epsilon^2)$ with error of size $\mathcal{O}(\epsilon^3)$.
 The correctors $\mathcal{U}_1^\epsilon(x)$ and $\mathcal{U}_2^\epsilon(x)$ are given by~\eqref{U1a-eqn}-\eqref{U1a-jumps} and~\eqref{U2a-eqn}-\eqref{U2a-jumps}.
 
 Finally, $U_j^{(p)}(x,y)$ and $U_j^{(h)}(x)$ are given by
\begin{align}
&U_2^{(p)}(x,y) = -\frac{e_{V_0+}(x;k)}{4\pi^2} \sum_{|j|\geq 1} q_j(x) \frac{e^{2i\pi jy}}{j^2},  \nn\\
&\left(-\frac{d^2}{d x^2}+V_0(x)-k^2\right)U_2^{(h)}(x) = 
\frac{e_{V_0+}(x;k)}{4\pi ^2}\sum_{|j|\geq 1} \frac{|q_j(x)|^2}{j^2},\ \ U_2^{(h)}\ {\rm outgoing}, \nn\\
%&{\rm and}\nn\\
&U_3^{(p)}(x,y)=-\frac{i}{4\pi^3}\sum_{|j|\geq 1}\partial_x(e_{V_0+}(x;k)q_j(x))\frac{e^{2i\pi jy}}{j^3},\nn\\
&\left(-\frac{d^2}{d x^2}+V_0(x)-k^2\right)U_3^{(h)}(x) = i\ \sum_{|j|\geq 1} \frac{\partial_x(e_{V_0+}(x;k)q_j(x))q_{-j}(x)}{4\pi ^3\ j^3},\ U_3^{(h)} \ {\rm outgoing}\nn.
\end{align}

\end{proposition}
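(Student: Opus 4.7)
The content of Proposition~\ref{expansion-summary} is the culmination of the hierarchical construction in equations~\eqref{H1}--\eqref{Hj} combined with the interface corrector analysis via Lemma~\ref{lem:piecewiseU}. My plan is to assemble the pieces already derived, verify that the resulting ansatz~\eqref{devhomo} satisfies equation~\eqref{eV+eqn}, the jump conditions~\eqref{JumpCondition}, and the outgoing radiation condition order by order in $\epsilon$, and finally identify the residual as $\mathcal{O}(\epsilon^3)$.

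First I verify the bulk part. The equations $r_{-2}=0$ and $r_{-1}=0$ force $U_0$ and $U_1$ to be independent of $y$. Averaging $r_0=0$ and $r_1=0$ over $y\in[0,1]$ and using the mean-zero assumption~\eqref{mean0} produces homogeneous Schr\"odinger equations for $U_0$ and $U_1$, which combined with the radiation condition of Proposition~\ref{prop:outgoing} and the absence of real scattering resonance energies yield $U_0 = e_{V_0+}(x;k)$ and $U_1\equiv 0$. The $y$-dependent remainders of $r_0=0$ and $r_1=0$ are solved by term-by-term $y$-integration of the Fourier series of $q$, producing the explicit formulas for $U_2^{(p)}$ and $U_3^{(p)}$. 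Averaging $r_2=0$ and $r_3=0$ over $y$ then yields the inhomogeneous Schr\"odinger equations for $U_2^{(h)}$ and $U_3^{(h)}$, whose unique outgoing solutions exist by Proposition~\ref{inhomog-solve}.

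Next I justify that the interface correctors restore the jump conditions without disturbing this bulk construction. The key observation is~\eqref{rj-integrated}: since each $\mathcal{U}_{j,a}^\epsilon$ depends only on $x$ and solves $(-\partial_x^2 + V_0 - k^2)\mathcal{U}_{j,a}^\epsilon = 0$ away from $a$, adding it to the hierarchy only contributes a term $q(x,y)\mathcal{U}_{j,a}^\epsilon$ of vanishing $y$-average, so the equations already fixing the $U_m^{(p)}$ and $U_m^{(h)}$ are preserved. The defects~\eqref{JumpU2}--\eqref{JumpU2prime} and~\eqref{JumpU3}--\eqref{JumpU3prime} supply the inhomogeneous jump data~\eqref{U1a-jumps} and~\eqref{U2a-jumps} at each $a\in\Omega$; Lemma~\ref{lem:piecewiseU} then provides the unique outgoing piecewise $C^2$ correctors~\eqref{piecewiseU.1} with coefficients~\eqref{alpha-beta-1a}--\eqref{alpha-beta-2a}. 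Summing over $\Omega$ as in~\eqref{defU1BL}--\eqref{defU2BL} and checking~\eqref{jumpUepsilon}--\eqref{jumpDUepsilon} confirm that~\eqref{devhomo} satisfies~\eqref{JumpCondition} through order $\epsilon^2$.

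Finally I track the residual. By construction $r_j \equiv 0$ for $j \le 2$ on $\mathbb{R}\setminus\Omega$, and the remaining contributions to $(H_\epsilon - k^2)U^\epsilon(x,x/\epsilon)$ are of three kinds: (a) the uncancelled piece of $r_3$ coming from $-\partial_x^2 U_3^{(p)}$ and from $q(x,y)U_2$; (b) terms of the form $q_\epsilon\,\mathcal{U}_j^\epsilon$ and $-\partial_x^2 \mathcal{U}_j^\epsilon$ supported near $\Omega$; and (c) higher-order uncancelled jumps at the interface points. Each corrector $\mathcal{U}_j^\epsilon$ is uniformly $\mathcal{O}(1)$ in $\epsilon$, despite the oscillatory factors $e^{2i\pi\ell a/\epsilon}$ appearing in $F_{2,a}^\epsilon, G_{2,a}^\epsilon, H_{2,a}^\epsilon, G_{3,a}^\epsilon$, thanks to the absolute convergence of the Fourier series in $\ell$ guaranteed by Hypothesis~\textbf{(V)}(3c). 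The main obstacle to turning this bookkeeping into a genuine norm estimate --- and the reason the statement is presented as a \emph{formal} construction --- is precisely that the rapid oscillation of the correctors prevents straightforward majorization of $-\partial_x^2 \mathcal{U}_j^\epsilon$ in strong norms; a clean treatment requires the cancellation structure encoded in the pre-conditioned Lippman--Schwinger formulation of~\cite{GW:05}, and is carried out separately in section~\ref{sec:rigorous-theory}.
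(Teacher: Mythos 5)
Your proposal is correct and follows essentially the same route as the paper: the Proposition is the summary of the preceding multiple-scale computation, i.e.\ the hierarchy $r_{-2}=\dots=r_3=0$ determining $U_0=e_{V_0+}$, $U_1\equiv0$, $U_j^{(p)}$, $U_j^{(h)}$, together with the interface correctors supplied by Lemma~\ref{lem:piecewiseU} and the observation~\eqref{rj-integrated} that adding them does not disturb the bulk expansion. Your closing remarks on the formal nature of the $\mathcal{O}(\epsilon^3)$ residual and the deferral of rigorous error bounds to section~\ref{sec:rigorous-theory} match the paper's own stance.
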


\subsection{Expansion of the transmission coefficient, $t^\epsilon(k)$}\label{secThomo}

The results of the previous section can now be used to derive expansion~\eqref{eqn:Texpansion} for the transmission coefficient, $t^\epsilon(k)$ associated with the potential $V^\epsilon(x)$. $t^\epsilon(k)$, through order $\epsilon^2$ is derived by isolating appropriate terms in the expansion~\eqref{devhomo}. The sense in which the remainder is small is proved, by entirely different means, in section~\ref{sec:rigorous-theory}.
\begin{itemize}
\item[$\mathcal{O}(\epsilon^0)$:] The only term at order one is $ e_{V_0+}$, which gives the leading order transmission coefficient, $t_0^{hom}(k)$, corresponding to the average potential $V_0$.
\item[$\mathcal{O}(\epsilon^1)$:] At order $\epsilon$, we seek the contribution to $t_\epsilon(k)$ from $\mathcal{U}^\epsilon_1$. From~\eqref{piecewiseU} we have, since $e_{V_0+}(x;k)\sim t_0^{hom}(k)\ e^{ikx}$ as $x\to +\infty$, that 
 the contribution of $\mathcal{U}^\epsilon_{1,a}$ to the transmission coefficient is given by 
\[
 t^\epsilon_{1,a} =\ \beta_{1,a}^\epsilon\ t_0^{hom}(k)\ =\ \frac{1}{4\pi k} e_{V_0+}(a;k)e_{V_0-}(a;k) \sum_{|j|\geq 1} \left[q_j\right]_a \frac{ e^{2i\pi j a/\epsilon}}{j}.
 \]
Finally, summing over all contributions from points of discontinuity of $q_j$, one obtains the complete first order contribution from $\mathcal{U}^\epsilon_1$:
\begin{equation}\label{T1BL}
 t^\epsilon_1 = \sum_jt^\epsilon_{1,a_j}\ =\ \sum_{j=1}^{M} \frac{e_{V_0+}(a_j;k)e_{V_0-}(a_j;k)}{4\pi k} \sum_{|l|\geq 1} \left[q_l\right]_{a_j} \frac{e^{2i\pi l\frac{a_j}{\epsilon}}}{l} .
\end{equation}

\item[$\mathcal{O}(\epsilon^2)$:]

\subitem\ (a)\ {\it No contribution to $t^\epsilon(k)$ from $\epsilon^2 U_2^{(p)}$:}\  We estimate $U_2^{(p)}$ pointwise.
\begin{align*}
 |U_2^{(p)}\left(x,\frac{x}{\epsilon}\right)| \leq \frac1{4\pi^2} |e_{V_0+}(x;k)|\sum_{|j|\geq 1} \frac{|q_j(x)|}{j^2} 
 &\le C\ \Big(\sum_{|j|\ge1} |q_j(x)|^2\Big)^{1/2}\ \to\ 0,\ |x|\to\infty.
\end{align*}
Here we have used the uniform bound~\eqref{eqn:m-bound1} on $e_{V_0+}$ for $x\ge0$ 
and the hypothesis~\eqref{qL2to0}.
Since, $U_2^{(p)}(x,\frac{x}{\epsilon})\to0$ as $x\to\infty$, it does not contribute to the transmission coefficient.

\bigskip

\subitem\ (b)\ {\it Contribution of $\epsilon^2 U_2^{(h)}(x)$ to $t^\epsilon(k)$:}\medskip

\noindent From~\eqref{solU2h}, one has 
\[\left(-\frac{d^2}{d x^2}+V_0(x)-k^2\right)U_2^{(h)}(x) = \frac{e_{V_0+}(x;k)}{4\pi ^2}\sum_{|j|\geq 1} \frac{|q_j(x)|^2}{j^2}.\]
Using expression~\eqref{eqn:RV0} for the outgoing resolvent we have
\begin{align*}
U_2^{(h)}(x) = & R_{V_0}(k)\left(\frac{e_{V_0+}(\cdot;k)}{4\pi ^2}\sum_{|j|\geq 1} \frac{|q_j(\cdot)|^2}{j^2}\right) \\
 = & \frac{-1}{2ik\ t_0^{hom}(k)}\int_{-\infty}^{x} \frac{e_{V_0+}(\zeta;k)}{4\pi^2}\sum_{|j|\geq 1}\frac{|q_j(\zeta)|^2}{j^2}e_{V_0-}(\zeta;k)e_{V_0+}(x;k) \ d\zeta 
 \\& +\frac{-1}{2ik\ t_0^{hom}(k)}\int_{x}^{+\infty} \frac{e_{V_0+}(\zeta;k)}{4\pi^2}\sum_{|j|\geq 1}\frac{|q_j(\zeta)|^2}{j^2}e_{V_0+}(\zeta;k)e_{V_0-}(x;k) \ d\zeta.
\end{align*}
Therefore, since $q_j\in L^2$ for all $j\in\ZZ$, and $e_{V_0\pm}\in L^\infty$, one has when $x\to\infty$:
\[
\lim\limits_{x\to\infty}{U_2^{(h)}(x)- \left(\frac{-e_{V_0+}(x;k)}{8ik\pi^2\ t_0^{hom}(k)}\int_\RR \sum_{|j|\geq 1}\frac{|q_j(\zeta)|^2}{j^2}e_{V_0-}(\zeta;k)e_{V_0+}(\zeta;k) d\zeta \right)}=0.\]
It follows that the contribution of $U_2^{h}(x)$ to the transmission coefficient is
\begin{equation}\label{T2hom}
 t_2^{hom}(k) \equiv -\frac{1}{8ik\pi^2}\int_\RR \sum_{|j|\geq 1}\frac{|q_j(\zeta)|^2}{j^2}e_{V_0-}(\zeta;k)e_{V_0+}(\zeta;k) d\zeta .
\end{equation}

\subitem\ (c)\ {\it Contribution of $\epsilon^2\mathcal{U}^\epsilon_2$ to $t^\epsilon(k)$:} \medskip

\noindent We study $\mathcal{U}^\epsilon_2$ as above.
 From~\eqref{piecewiseU} we have, since $e_{V_0+}(x;k)\sim t_0^{hom}(k)\ e^{ikx}$ as $x\to +\infty$, that 
 the contribution of $\mathcal{U}^\epsilon_{2,a}$ to the transmission coefficient is given by $t_2^\epsilon\equiv t_0^{hom}(k)\ \beta_{2,a}^\epsilon$.

From~\eqref{alpha-beta-2a},~\eqref{jumpconditionU2p} and~\eqref{JumpU3prime} we have
\begin{align*}
t_{2,a}^\epsilon(k)\ &=\ \beta_{2,a}^\epsilon\ t_0^{hom}(k) \\
 &=\frac{-e_{V_0-}(a;k)}{8\pi^2ik}\ \sum_{|j|\geq 1}\left( c\ e_{V_0+}(a;k) q_j(a-) +
 \left[ \partial_x( e_{V_0+}(x;k)q_j(x)) \right]_a\ \right) \frac{ e^{2i\pi ja/\epsilon}}{j^2}\\ 
 &\ \ \ \ \ -\ \frac{1}{8\pi^2ik}\ \sum_{|j|\geq 1} e_{V_0+}(a;k) \left[q_j\right]_a \partial_x e_{V_0-}(a+;k) \frac{ e^{2i\pi ja/\epsilon}}{j^2}.
 \end{align*}
 Using the easily verified identity
 \begin{equation}
 \left[\ q_j(x)\partial_x e_{V_0-}(x;k)\ \right]_a\ 
 = \partial_x e_{V_0-}(a+;k) \left[\ q_j(x)\ \right]_a\ +\ c\ e_{V_0-}(a;k)\ q_j(a-),
 \label{identity}
 \end{equation}
we obtain
 \begin{align*}
t_{2,a}^\epsilon(k)\ 
 &=\ \frac{-1}{8\pi^2ik}\sum_{|j|\geq 1} \Big(\left[\partial_x( e_{V_0+}(x;k)q_j(x))\right]_a e_{V_0-}(a;k)\\
 &\ \ \ \ + \left[q_j(x) \partial_x e_{V_0-}(x;k)\right]_a e_{V_0+}(a;k) \Big)\frac{ e^{2i\pi ja/\epsilon}}{j^2} \\
 &=\frac{i}{8\pi^2k}\sum_{|j|\geq 1} \left[\partial_x( e_{V_0-}(x;k)e_{V_0+}(x;k)q_j(x))\right]_a \frac{ e^{2i\pi ja/\epsilon}}{j^2} .
\end{align*}
Finally, summing over all contributions of all singular and / or discontinuity points of $V^\epsilon$, we obtain the simple expression:
\begin{equation}\label{T2BL}
 t^\epsilon_2(k) = \sum_{a\in\Omega} t_{2,a}^\epsilon(k) = \frac{i}{8\pi^2 k}\sum_{a\in\Omega} \sum_{|l|\geq 1} \left[\partial_x( e_{V_0-}(x;k)e_{V_0+}(x;k)q_l(x))\right]_a \frac{ e^{2i\pi l a/\epsilon}}{l^2}.
\end{equation}
\medskip

\item[$\mathcal{O}(\epsilon^3)$]:\ By similar considerations to the above discussion 
of $U_2^{(h)}$ and $U_2^{(p)}$, the terms $U_3^\epsilon= U_3^{(h)}+U_3^{(p)}$ in the expansion of $e_{V^\epsilon_+}$ give a correction to $t^\epsilon(k)$ of order $\epsilon^3$, and is therefore subsumed by the error term in the expansion~\eqref{eqn:Texpansion}.
\end{itemize}
\bigskip

In summary, we have an expansion of $t^\epsilon(k)$, agreeing with the expansion~\eqref{eqn:Texpansion} in Theorem~\ref{thm:Texpansion}:
\bigskip

\begin{proposition}\label{prop:formal-expansion-of-tepsilon}
{\bf Formal corrected homogenization expansion:}
\begin{equation}\label{Texp-hom}
t^\epsilon(k) =t^{hom}_0(k)\ +\ \epsilon\ t^\epsilon_1(k) + \epsilon^2\ \left(\ t^{hom}_2(k) + t^\epsilon_2(k)\ \right)\ +\
 \mathcal{O}(\epsilon^3),
 \end{equation}
 where the leading order term, $t_0^{hom}(k)$, is the transmission coefficient associated with the homogenized (average with respect to the fast scale) potential 
$V_0$, $t_2^{hom}(k)$ is a classical homogenization theory corrector given by~\eqref{T2hom}, and $t_j^\epsilon,\ j=1,2$ are interface correctors given by
 ~\eqref{T1BL} and~\eqref{T2BL}.
 
 Note that if $V_0$ is generic, then since using that $t_{V_0}(k)$ and $e_{V_0\pm}(x,k)$ are $ \mathcal{O}(k)$ as $k\to0$, we see the expansion is formally valid for any $k\in\RR$. However, if $V_0$ is not generic then we must exclude $k=0$; see the discussion of Remark~\ref{rem:k=0}.
 \end{proposition}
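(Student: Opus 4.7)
The plan is to start from the formal expansion of the distorted plane wave $e_{V^\epsilon+}(x;k)$ produced in Proposition~\ref{expansion-summary}, and to read off the coefficient of $e^{ikx}$ in its asymptotic behavior as $x\to+\infty$ at each order in $\epsilon$. By Definition~\ref{def:eTR} and the asymptotic relations~\eqref{eqn:RT}, $e_{V^\epsilon+}(x;k)-t^\epsilon(k)e^{ikx}\to 0$ as $x\to+\infty$, so organizing the expansion term by term in $\epsilon$ will directly produce~\eqref{Texp-hom}.

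At $\mathcal{O}(\epsilon^0)$ only the leading term $e_{V_0+}(x;k)\sim t_0^{hom}(k)e^{ikx}$ contributes, yielding $t_0^{hom}(k)$. At $\mathcal{O}(\epsilon)$ I would use the explicit form~\eqref{piecewiseU.1} of each corrector $\mathcal{U}_{1,a}^\epsilon$, noting that for $x>a$ it equals $\beta_{1,a}^\epsilon e_{V_0+}(x;k)$; summing the limits over all discontinuity points $a_j$ of $q_\epsilon$ using~\eqref{alpha-beta-1a} and~\eqref{jumpconditionU2p} gives $t_1^\epsilon(k)$ in the form~\eqref{T1BL}.

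The $\mathcal{O}(\epsilon^2)$ level is the main work, and I would split it into three pieces exactly as in subsection~\ref{secThomo}. First, $U_2^{(p)}(x,x/\epsilon)$ gives no contribution: using~\eqref{solU2p} and the Cauchy--Schwarz bound $\sum|q_j(x)|/j^2\lesssim (\sum|q_j(x)|^2)^{1/2}$ combined with the uniform bound on $e_{V_0+}$ and hypothesis~\eqref{qL2to0}, one sees $U_2^{(p)}(x,x/\epsilon)\to0$ as $x\to+\infty$. Second, for $U_2^{(h)}$, I would apply Proposition~\ref{inhomog-solve}, use the explicit outgoing resolvent~\eqref{eqn:RV0}--\eqref{outgoing-resolvent}, and read off the limit as $x\to+\infty$ of the representation formula; the right-hand integral vanishes while the left-hand integral extends to $\RR$, producing~\eqref{T2hom}. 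Third, for $\mathcal{U}_{2,a}^\epsilon$ I would combine~\eqref{piecewiseU},~\eqref{alpha-beta-2a},~\eqref{jumpconditionU2p},~\eqref{jumpconditionU3p}; summing $t_0^{hom}(k)\beta_{2,a}^\epsilon$ over $a\in\Omega$ gives a sum of two bracketed jump terms which, upon using the distributional identity~\eqref{identity}, collapse into the single compact expression~\eqref{T2BL}.

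The main obstacle is precisely this last algebraic step: the individual pieces $F_{2,a}^\epsilon$, $H_{2,a}^\epsilon$, $G_{3,a}^\epsilon$ each involve separate jumps of $q_j$, $\partial_x(e_{V_0+}q_j)$, and the contribution $c\cdot U_2^{(p)}(a-)$ coming from the singular part $V_{sing}$, and it is not obvious a priori that they recombine into the single product jump $[\partial_x(e_{V_0+}e_{V_0-}q_j)]_a$. The identity~\eqref{identity} is precisely what allows the delta contribution from $V_{sing}$ to be absorbed into the jump of $q_j\partial_x e_{V_0-}$, which is why the answer takes the invariant form~\eqref{T2BL}. Finally, for the remainder, higher-order terms $U_3^{(h)}$, $U_3^{(p)}$ and $\mathcal{U}_3^\epsilon$ produce contributions that are $\mathcal{O}(\epsilon^3)$ by the same outgoing-resolvent argument used for $U_2^{(h)}$; the generic versus non-generic dichotomy at $k=0$ is then settled by Remark~\ref{rem:k=0} and the fact that $t_{V_0}(k)$ and $e_{V_0\pm}(x,k)$ are $\mathcal{O}(k)$ at $k=0$. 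The rigorous justification of the $\mathcal{O}(\epsilon^3)$ remainder (which is only formal in this section) is deferred to section~\ref{sec:rigorous-theory}.
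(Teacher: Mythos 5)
Your plan follows the paper's own derivation in subsection~\ref{secThomo} essentially verbatim: read off the coefficient of $e^{ikx}$ as $x\to+\infty$ order by order from the corrected expansion of Proposition~\ref{expansion-summary}, with the same three-way split at $\mathcal{O}(\epsilon^2)$ (vanishing of $U_2^{(p)}$, resolvent representation for $U_2^{(h)}$ giving~\eqref{T2hom}, and the interface correctors recombined via the identity~\eqref{identity} into~\eqref{T2BL}), and the same deferral of the rigorous remainder bound to section~\ref{sec:rigorous-theory}. This is correct and matches the paper's argument.
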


%LE PROBLEME SCATTERING

\section{Rigorous analysis of the scattering problem}\label{sec:rigorous-theory}
In the preceding section, we applied the classical method of multiple scales to derive a formal expansion for the distorted plane wave $e_{V^\epsilon+}(x;k)$ and transmission coefficient $t^\epsilon(k)$; see section~\ref{sec:background&statement}.  
For sufficiently smooth potentials, this expansion satisfies, at each order in $\epsilon$, all necessary 
continuity conditions as well as the radiation condition at infinity; see Definitions~\ref{def:solution} and~\ref{def:outgoing}.

 We found, however, that if 
the potential is non-smooth this expansion, while valid {\it in the bulk}, violates continuity conditions at 
(i) discontinuities, and (ii) at strong singularities of 
the background, unperturbed potential, $V_0(x)=V_{reg}(x)+V_{sing}(x)$. We found, in Section~\ref{sechomo} that we can, ``by hand'', construct interface correctors for each point of non-smoothness, thereby giving a corrected expansion (bulk expansion plus interface correctors) which is a valid solution to any finite order in $\epsilon$. The expansion of Proposition~\ref{prop:formal-expansion-of-tepsilon}
is explicit through order $\epsilon^2$ with order $\epsilon^3$ correctors.
 
\noindent{\bf Question:} Does the procedure of section~\ref{sechomo} yield a valid expansion with an error terms satisfying an appropriate higher order error bound? 
 
It turns out that the formal expansion is correct with an appropriate error estimate. However, we obtain this result, not by expansion in scalar $\epsilon$ but rather in the the function $q_\epsilon(x)$, with respect to which there is an analytic perturbation theory in an appropriate function space . Smallness required for control of the perturbation expansion derives from $q_\epsilon(x,x/\epsilon)$ being supported at high frequencies if $\epsilon$ is small. The principle terms, displayed in the expansion of Proposition~\ref{prop:formal-expansion-of-tepsilon} (and indeed the terms at any finite order in the small parameter, $\epsilon$), are obtained via small $\epsilon$ asymptotics of the leading order terms in the $q_\epsilon$ expansion. The approach we use was introduced by Golowich and Weinstein in~\cite{GW:05}.

\subsection{Formulation of the problem}

We consider the general one-dimensional scattering problem 
\begin{align}
&\left(-\frac{d^2}{d x^2}+V_0(x)+Q(x)-k^2\right) e_{V+}=0\nn\\
&e_{V_0+Q,+}(x;k)\ -\ e^{ikx}\ \to0,\ \ x\to-\infty,\
\label{general-1d}
\end{align}
where $V_0(x)$ as hypothesized in section~\ref{sec:introduction} and $Q$ is a 
spatially localized perturbing potential, which we think of as being spectrally supported at high frequencies.
 $Q$ may be large in $L^\infty$. As a model, we have in mind $Q(x)=q_\epsilon(x)=q(x,x/\epsilon)$,
 with $\epsilon$ small.

We introduce the scattered field, $u_s$, via 
\begin{equation}
e_{V+}(x;k) = e_{V_0+}(x;k)+u_s(x;k)\label{us-def}
\end{equation}
 where $u_s$ is outgoing as $x\to\pm\infty$. 
Therefore, $u_s$ is the solution of
\begin{equation}
\label{scat}
\left(-\frac{d^2}{d x^2}+(V_0+Q)(x)-k^2\right)u_s(x;k)=-Q e_{V_0+}(x;k),
\end{equation}
with outgoing conditions : $\left(\ \partial_x\ \mp\ ik\ \right)u_s \to\ 0,\ x\to\pm\infty$.

Applying the outgoing resolvent, $R_{V_0}(k)$, to~\eqref{scat} and rearranging terms we obtain the Lippman-Schwinger equation
\begin{align}
u_s\ &=\ -\left(I+R_{V_0}(k)Q\right)^{-1}\ R_{V_0}(k)\ Q\ e_{V_0+}(\cdot;k)\ \ \implies\nn\\
e_{V+}(x;k) &=\ e_{V_0+}(x;k)\ -\left(I+R_{V_0}(k)Q\right)^{-1}\ R_{V_0}(k)\ Q\ e_{V_0+}(\cdot;k) .
\label{us-int-eqn}
\end{align}
Consider now the {\it formal} Neumann expansion, obtained from~\eqref{us-int-eqn}. 
\begin{align}
e_{V+}(x;k)\ 
&= e_{V_0+}(x;k) - R_{V_0}(k)(Q e_{V_0+}(x;k)) + R_{V_0}(k) Q R_{V_0}(k) (Qe_{V_0+}(x;k)) + \dots \nn \\
&= e_{V_0+}(x;k)\ +\ \sum_{m=0}^\infty \left[\left(-R_{V_0}(k)\ Q\right)^m\ e_{V_0+}\right](x;k).
\label{devscat}\end{align}
In this section we show for a class of 
 $Q$, which include high-contrast (pointwise large) microstructure (highly oscillatory) potentials that the expansion~\eqref{devscat} converges in an appropriate sense and that any truncation satisfies an error bound. 

%%%
\subsection{Reformulation of the Lippman-Schwinger equation and the norm $ |||Q|||$ }
\label{sec:Formulation-LS-equation}
We seek a reformulation of the Lippman-Schwinger equation~\eqref{us-int-eqn} in which it is explicitly clear that if $Q$ is  highly oscillatory, then the terms of the Neumann series are successively smaller.
 Introduce, via the Fourier transform, the operator $\D^{s}$ 
\begin{equation}
\D^{s} g \equiv (I-\Delta)^{s/2}g \equiv \frac{1}{2 \pi} \int_{-\infty}^{+\infty}e^{ix\xi}\ (1+\xi^2)^{s/2}\ \hat{g}(\xi)d\xi .
\label{D0inv}
\end{equation}
and the localized function $\chi$
\begin{equation}
\label{defChi}
\chi(x) = \langle x \rangle^{-\sigma}\ =\ \left(1+x^2\right)^{-\frac{\sigma}{2}},\ \ \sigma>4.
\end{equation}
Now introduce the {\it spatially and frequency weighted distorted plane wave}, $E_{V+}(x;k)$, given by:
\begin{equation}
 E_{V+}(x;k)\ \equiv\ \left(\ \D\chi e_{V+}\ \right)(x;k).
\label{EV+} \end{equation}
With the operator definitions 
\begin{align}
T_{R_{V_0}}(k)&\equiv\D\chi R_{V_0}(k)\chi\D , \label{defTR} \\
T_Q&\equiv\D^{-1}\chi^{-1}Q \chi^{-1}\D^{-1}\ =\ \D^{-1} \langle x \rangle^{\sigma}\cdot Q\cdot \langle x \rangle^{\sigma}\D^{-1}\ , \label{defTQ}
\end{align}
$E_{V+}(x;k)$ can be seen to satisfy 
\begin{align}
\left(I+T_{R_{V_0}} \ T_Q\right)\ \big(\ E_{V+}(\cdot;k) - E_{V_0+}(\cdot;k)\ \big)=-\D\chi R_{V_0}(k)\ Q\ e_{V_0+}(x;k).
\label{scat2}
 \end{align}
 
 Here's the motivation for our strategy. 
 Note that $T_Q$ has the operator $\D^{-1}$ as both a pre- and post- multiplier. This has the effect of a high frequency cutoff. Therefore, for highly oscillatory $Q$, $T_Q$ is expected to be of small operator norm.  
 If the norm of $T_{R_{V_0}} \circ T_Q$ is small then $I+T_{R_{V_0}} \circ T_Q$ is invertible and we have
 the {\it preconditioned} Lippman-Schwinger equation
 \begin{align}
E_{V+}(\cdot;k)\ =\ E_{V_0+}(\cdot;k)\ -\ \left(I+T_{R_{V_0}} \ T_Q\right)^{-1}\ \D\chi R_{V_0}(k)\ Qe_{V_0+}(x;k).
\label{precond-LS}
 \end{align}
 
 We proceed now to construct a norm, $|||Q|||$, such that if $|||Q|||$ is small then $T_{R_{V_0}} \ T_Q$ is bounded and of small norm as an operator norm from $L^2$ to $L^2$. 

 The norm we choose for the perturbing potential is defined as follows:
 \begin{equation}
 |||Q|||\ \equiv\ \left\|T_Q\right\|_{L^2\to L^2}\ =\ \left\| \D^{-1}\ \langle x\rangle^{\sigma} Q \ \langle x\rangle^{\sigma}\D^{-1}  \right\|_{L^2 \to L^2} , \ \ \sigma>4.
 \label{Qnorm-def}
 \end{equation}

The next result establishes the expansion of the distorted plane waves $e_{V+}(x;k)$ 
 in a $H^1(\RR;\chi(x)dx)$ and therefore, by the Sobolev inequality, a $L^\infty(\RR;\chi(x)dx)$ convergent expansion for $|||Q|||$ sufficiently small.\bigskip

\begin{theorem}\label{thm:Expand-LS}
 Let $V$ satisfy Hypotheses~\textbf{(V)}, and $k\in K$ a compact subset of $\RR$, satisfying Hypothesis~\textbf{(G)}. Define 
\begin{equation*}
\tau_0(K)\ \equiv\ \frac{1}{\max_{k\in K} \left\| T_{R_{V_0}}(k)\right\|_{L^2\to L^2} } \ >\ 0.
\end{equation*}
If $|||Q|||<\tau_0(K)$, then for all $k\in K$:
\begin{itemize}
\item The preconditioned Lippman Schwinger equation~\eqref{precond-LS} has a unique spatially and spectrally weighted distorted plane solution, $E_{V+}(x;k)$.
\item This solution can be expressed as a series, which converges in $L^2(\RR)$, uniformly in $k\in K$:
\begin{align*}
E_{V+}(x;k)\ &=\ E_{V_0+}(x;k)\ +\ \sum_{m=0}^\infty\ \left(- T_{R_{V_0}}(k) \ T_{Q} \right)^m\left[ \D\chi R_{V_0}(k)\ Q\ e_{V_0+}\right](x;k)\\
&=\ E_{V_0+}(x;k)\ -\ \D\chi\ R_{V_0}(k)\ Q\ e_{V_0+}(x;k)\\
&\ \ \ \ \ \ +\ T_{R_{V_0}}(k) \ T_{Q}\ \D\chi\ R_{V_0}(k)\ Q\ e_{V_0+}(x;k) -\dots\end{align*}
\item It follows that the distorted plane wave, $e_{V_0+Q,+}(x;k)$ satisfies the approximation
%\begin{align}
%&\left\|\ \D\chi\left[ e_{V_0+Q,+}(\cdot;k) - \left( e_{V_0+}(\cdot;k)- R_{V_0}(k)\ Q\ e_{V_0+}(\cdot;k) + R_{V_0}(k)\ Q\ R_{V_0}(k)\ Q\ e_{V_0+}(\cdot;k) \right) \right]\ \right\|_{L^2}\ \le\ C |||Q|||^3 \nn\\
% &\label{weighted-dpw-est}
% \end{align}
% More generally,
for any $M\ge1$ \end{itemize}
 \begin{equation}
 \left\| \D\chi\ \left( u_s(\cdot;k)\ +\ \sum_{m=0}^M \left[\left(-R_{V_0}(k) Q\right)^m e_{V_0+}\right](\cdot;k)\ \right) \right\|_{L^2(\RR)} \le\ C 
 |||Q|||^{M+1} \label{QtoM}
 \end{equation}
with $u_s(x;k) \ \equiv \ e_{V_0+Q,+}(x;k) \ - \ e_{V_0+}(x;k)$.
\end{theorem}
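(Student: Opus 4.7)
The plan is to exploit the preconditioned Lippman--Schwinger equation~\eqref{precond-LS} and establish invertibility of $I + T_{R_{V_0}}(k)\,T_Q$ on $L^2(\RR)$ via a convergent Neumann series whenever $|||Q|||$ is small, then translate the resulting series back to the physical-space expansion for $u_s$. The technical heart is a uniform operator bound: $T_{R_{V_0}}(k) = \D\chi\,R_{V_0}(k)\,\chi\D$ is bounded on $L^2(\RR)$ and $k\mapsto \|T_{R_{V_0}}(k)\|_{L^2\to L^2}$ is continuous on $K$ under Hypothesis~\textbf{(G)}. This is deferred to Appendix~\ref{proof-TR}; granting it, compactness of $K$ yields $\tau_0(K)>0$.

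Given this, observe from definition~\eqref{Qnorm-def} that $\|T_Q\|_{L^2\to L^2} = |||Q|||$. For $|||Q|||<\tau_0(K)$ and $k\in K$ one has $\|T_{R_{V_0}}(k)\,T_Q\|_{L^2\to L^2} \le \|T_{R_{V_0}}(k)\|\cdot|||Q||| < 1$, and the standard Neumann series yields the bounded inverse
\[
(I + T_{R_{V_0}}(k)\,T_Q)^{-1} \ =\ \sum_{m=0}^\infty (-T_{R_{V_0}}(k)\,T_Q)^m .
\]
Applied to~\eqref{scat2}, and using the identity $\D\chi\,R_{V_0}(k)\,Q\,e_{V_0+} = T_{R_{V_0}}(k)\,T_Q\,E_{V_0+}$, this produces existence, uniqueness, and the series representation $E_{V+}-E_{V_0+} = \sum_{n=1}^\infty (-T_{R_{V_0}}(k)\,T_Q)^n E_{V_0+}$ in $L^2(\RR)$, uniformly in $k\in K$, proving the first two bullets of the theorem.

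To pass to the third bullet, I would verify by induction on $n$ that $(T_{R_{V_0}}T_Q)^n = \D\chi\,(R_{V_0}Q)^n\,\chi^{-1}\D^{-1}$, the inner pairs $\D^{-1}\chi^{-1}\cdot\D\chi$ telescoping at each step. Hence $(-T_{R_{V_0}}T_Q)^n E_{V_0+} = \D\chi\,(-R_{V_0}(k)\,Q)^n e_{V_0+}$. Since $E_{V+}-E_{V_0+} = \D\chi\,u_s$, this identifies the weighted series with the physical-space Neumann series. Truncating at order $M$ and summing the geometric tail gives
\[
\Big\|\D\chi\Big(u_s - \sum_{n=1}^M(-R_{V_0}(k)\,Q)^n e_{V_0+}\Big)\Big\|_{L^2(\RR)} \ \le\ \|E_{V_0+}\|_{L^2}\,\frac{(\|T_{R_{V_0}}(k)\|\cdot|||Q|||)^{M+1}}{1-\|T_{R_{V_0}}(k)\|\cdot|||Q|||},
\]
which, after a reindexing that absorbs the leading factor $R_{V_0}Q$, is the bound~\eqref{QtoM}. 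Uniform boundedness of $\|E_{V_0+}\|_{L^2}$ for $k\in K$ follows from boundedness of the Jost solutions $e_{V_0\pm}$ together with the rapid decay of $\chi = \langle x\rangle^{-\sigma}$ and the fact that the jumps of $e_{V_0\pm}'$ across $\mathrm{supp}\,V_{sing}$ contribute only a finite (measure-zero) set of discontinuities to $\partial_x(\chi e_{V_0+})$.

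The principal obstacle is the operator bound for $T_{R_{V_0}}(k)$ itself. The outgoing kernel~\eqref{outgoing-resolvent} is only $O(1)$ as $|x-y|\to\infty$ and is not smooth across $y=x$, so the weights $\chi=\langle x\rangle^{-\sigma}$ must absorb the failure of $L^2$-decay of $R_{V_0}(k)$ while the flanking $\D$ operators force commutator-type estimates. These are complicated further by the jump discontinuities of $e_{V_0\pm}'$ at $\mathrm{supp}\,V_{sing}$ and by the requirement that $\|T_{R_{V_0}}(k)\|$ be continuous (and in particular finite) for all admissible $k$, which is where Hypothesis~\textbf{(G)}---forbidding $k=0$ in the non-generic case where $R_{V_0}(k)$ has a pole---enters in an essential way. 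Once this estimate is established, the remainder of the argument is Neumann-series bookkeeping.
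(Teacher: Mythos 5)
Your proposal is correct and follows essentially the same route as the paper: smallness $\|T_{R_{V_0}}(k)T_Q\|\le \|T_{R_{V_0}}(k)\|\,|||Q|||<1$ (with $\|T_Q\|_{L^2\to L^2}=|||Q|||$ by definition and boundedness of $T_{R_{V_0}}(k)$ taken from Proposition~\ref{prop:TR}, proved in Appendix~\ref{proof-TR}) gives the Neumann-series inversion of $I+T_{R_{V_0}}T_Q$ in the preconditioned equation~\eqref{precond-LS}, exactly as in the paper. Your telescoping identity $(T_{R_{V_0}}T_Q)^n=\D\chi\,(R_{V_0}Q)^n\,\chi^{-1}\D^{-1}$ and the geometric-tail estimate, together with $\D\chi e_{V_0+}\in L^2$ uniformly for $k\in K$, is precisely the bookkeeping by which the paper passes from the weighted series to the expansion~\eqref{QtoM}.
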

\bigskip

\begin{remark} In the proof of Theorem~\ref{thm:Expand-LS}, the distinction between generic and non-generic cases arises through the properties of the unperturbed resolvent, $R_{V_0}(k)$, as $k\to0$; see Proposition~\ref{inhomog-solve}.
\end{remark}
\medskip

In the following, we prove that both $T_{R_{V_0}}$ and $T_Q$ are well-defined operators, bounded in $L^2$. Then, Theorem~\ref{thm:Expand-LS} follows immediately if $Q$ satisfies the smallness condition
\begin{equation}
\big\|T_Q \big\|_{L^2\to L^2}\ <\ \min_{k\in K}\ \left(\big\|T_{R_{V_0}}(k) \big\|_{L^2\to L^2}\right)^{-1}\ \equiv\ \tau_0(k).
\label{TQ-smallness}
\end{equation}

\begin{proposition}\label{prop:TQ}
%\label{Tqcompact}
Let $\langle x\rangle^{2\sigma} Q(x)\in L^2(\RR)$.\ 
 Then $T_Q$, as defined in~\eqref{defTQ}, is a Hilbert-Schmidt operator and is therefore compact. 
\end{proposition}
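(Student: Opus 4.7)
To show that $T_Q$ is Hilbert-Schmidt, I plan to exhibit its integral kernel explicitly and verify that this kernel lies in $L^2(\RR^2)$. Set $W(z):=\langle z\rangle^{2\sigma}Q(z)$; by hypothesis $W\in L^2(\RR)$, and one has $T_Q=\D^{-1}M_W\D^{-1}$, where $M_W$ denotes multiplication by $W$. Since $\D^{-1}=(I-\Delta)^{-1/2}$ acts by convolution with the one-dimensional Bessel kernel $G_1$ (the Fourier inverse of $(1+\xi^2)^{-1/2}$), which has a logarithmic singularity at the origin, decays exponentially, and in particular belongs to $L^1(\RR)\cap L^2(\RR)$, the kernel of $T_Q$ is
\[
K(x,y) \ = \ \int_\RR G_1(x-z)\, W(z)\, G_1(z-y)\, dz.
\]

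Next, I would expand $\|T_Q\|_{HS}^{\,2}=\iint_{\RR^2}|K(x,y)|^2\,dx\,dy$ and interchange the order of integration, obtaining
\[
\|T_Q\|_{HS}^{\,2} \ = \ \iint_{\RR^2} W(z_1)\,\overline{W(z_2)} \left(\int_\RR G_1(x-z_1)G_1(x-z_2)\,dx\right)\left(\int_\RR G_1(z_1-y)G_1(z_2-y)\,dy\right) dz_1\,dz_2.
\]
Using that $G_1$ is real and even, each inner integral equals $(G_1*G_1)(z_1-z_2)$, and the Fourier identity $\widehat{G_1*G_1}(\xi)=(1+\xi^2)^{-1}$ yields the explicit formula $(G_1*G_1)(x)=\tfrac{1}{2}e^{-|x|}$. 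The double integral therefore collapses to $\tfrac{1}{4}\iint_{\RR^2}W(z_1)\overline{W(z_2)}\,e^{-2|z_1-z_2|}\,dz_1\,dz_2$.

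The final step is to bound this quantity by Young's inequality for convolution against the $L^1$ kernel $e^{-2|\cdot|}$:
\[
\|T_Q\|_{HS}^{\,2} \ \leq \ \tfrac{1}{4}\,\|e^{-2|\cdot|}\|_{L^1}\,\|W\|_{L^2}^{\,2} \ = \ \tfrac{1}{4}\,\left\|\langle x\rangle^{2\sigma}Q\right\|_{L^2}^{\,2} \ < \ \infty.
\]
Hence $T_Q$ is Hilbert-Schmidt and therefore compact. I anticipate no substantial obstacle; the only delicate point is justifying the Fubini interchange in the computation of $\|T_Q\|_{HS}^{\,2}$, which follows from the $L^2$ membership of $G_1$ and $W$ together with the integrability of $e^{-2|\cdot|}$, all of which make the iterated integral absolutely convergent.
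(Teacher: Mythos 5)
Your proof is correct, and it rests on the same structural idea as the paper's: write $T_Q=\D^{-1}M_W\D^{-1}$ with $W=\langle x\rangle^{2\sigma}Q$, exhibit its integral kernel, and verify the Hilbert--Schmidt criterion by showing the kernel lies in $L^2(\RR^2)$. The difference is purely in where the computation is carried out. The paper stays on the Fourier side: it writes the kernel as a double oscillatory integral, computes its two-dimensional Fourier transform explicitly as $\widehat{K}(s,z)=c\,\widehat{Q^\sharp}(-s-z)\,\langle s\rangle^{-1}\langle z\rangle^{-1}$, and concludes by Plancherel plus the integrability of $\langle s\rangle^{-2}$. You instead work in physical space, representing $\D^{-1}$ as convolution with the Bessel kernel $G_1$ and exploiting the clean identity $(G_1\ast G_1)(x)=\tfrac12 e^{-|x|}$, which collapses the Hilbert--Schmidt norm to $\tfrac14\iint W(z_1)\overline{W(z_2)}e^{-2|z_1-z_2|}$ and yields the explicit bound $\|T_Q\|_{HS}\le\tfrac12\|\langle x\rangle^{2\sigma}Q\|_{L^2}$; the two computations are Plancherel-dual to one another, so the resulting estimates agree up to constants, with yours giving a slightly more transparent constant. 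One small point to tighten: the Fubini/Tonelli justification is cleanest if you note that $G_1>0$ (or simply rerun your computation with $|W|$ in place of $W$), since "$G_1, W\in L^2$ and $e^{-2|\cdot|}\in L^1$" by themselves are the ingredients of that repeated computation rather than a separate reason; this is a trivial fix and does not affect the validity of the argument.
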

\medskip

\begin{proposition}\label{prop:Tqeps-small}
Let $q_\epsilon(x) $ satisfy the conditions in Hypotheses~{\bf (V)}. Then, for $\epsilon$ small,
 \begin{equation}
 \big\|T_{q_\epsilon}\big\|_{L^2\to L^2}=\mathcal{O}(\epsilon).
 \label{Tqepsilon-small}
 \end{equation}
 \end{proposition}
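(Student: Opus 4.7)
My plan is to bound the operator norm of $T_{q_\eps}$ by its Hilbert-Schmidt norm, reduce the latter to an $H^{-1}(\RR)$ norm of the sandwiched function $f_\eps(x) \equiv \langle x\rangle^{2\sigma} q_\eps(x)$, and then show $\|f_\eps\|_{H^{-1}} = \mathcal{O}(\eps)$ via the classical homogenization corrector/antiderivative trick. For the first step, one writes $T_{q_\eps} = \D^{-1} M_{f_\eps} \D^{-1}$ and computes the kernel of $T_{q_\eps}$ in Fourier variables to be $K(\xi,\eta) = \hat f_\eps(\xi-\eta)/[2\pi(1+\xi^2)^{1/2}(1+\eta^2)^{1/2}]$; using the convolution identity $\int d\eta/[(1+\eta^2)(1+(\zeta+\eta)^2)] = 2\pi/(4+\zeta^2)$, this yields
\[
\|T_{q_\eps}\|_{HS}^2 \;=\; \frac{1}{2\pi}\int \frac{|\hat f_\eps(\zeta)|^2}{4+\zeta^2}\, d\zeta \;\simeq\; \|f_\eps\|_{H^{-1}}^2,
\]
so that $\|T_{q_\eps}\|_{L^2 \to L^2} \leq C \|f_\eps\|_{H^{-1}}$.

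For the second step, since $q(x,\cdot)$ has mean zero (Hypothesis~\textbf{(V.3b)}), its $y$-primitive $Q(x,y) \equiv \sum_{j\neq 0} q_j(x)/(2\pi i j)\,e^{2\pi i jy}$ is $1$-periodic with $\partial_y Q = q$. The chain rule gives $q(x,x/\eps) = \eps\,\partial_x[Q(x,x/\eps)] - \eps\,(\partial_1 Q)(x,x/\eps)$, and multiplying by $\langle x\rangle^{2\sigma}$ and using the product rule,
\[
f_\eps \;=\; \eps\,\partial_x(\langle x\rangle^{2\sigma} Q_\eps) \;-\; 2\sigma\eps\,x\langle x\rangle^{2\sigma-2} Q_\eps \;-\; \eps\, \langle x\rangle^{2\sigma}(\partial_1 Q)_\eps.
\]
Using $\|\partial_x u\|_{H^{-1}} \leq \|u\|_{L^2}$, it suffices to bound $\|\langle x\rangle^{2\sigma}Q_\eps\|_{L^2}$, $\|\langle x\rangle^{2\sigma-1}Q_\eps\|_{L^2}$, and $\|\langle x\rangle^{2\sigma}(\partial_1 Q)_\eps\|_{H^{-1}}$ uniformly in $\eps$. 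The first two follow from the pointwise bound $|Q(x,y)| \leq C(\sum_j|q_j(x)|^2)^{1/2}$ (Cauchy--Schwarz and $\sum 1/j^2 < \infty$) together with Hypothesis~\textbf{(V.5)}, which ensures $\sum_j\|\langle x\rangle^\rho q_j\|_{L^2}^2 < \infty$ when $2\sigma\le\rho$ (summability implies square-summability since the terms are bounded).

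The principal technical obstacle is the third term, which involves $\partial_x q_j$: Hypothesis~\textbf{(V.5)} provides only $\sup_j$ (not $\sum_j$) control on $\|\partial_x(\langle x\rangle^\rho q_j)\|_{L^2}$, so a naive $L^2$ estimate on $\langle x\rangle^{2\sigma}(\partial_1 Q)_\eps$ fails. I would handle this by iterating the antiderivative trick: since $\partial_1 Q$ is also mean zero in $y$, its $y$-primitive $R(x,y) = -\sum_{j\neq 0}\partial_x q_j(x)/(4\pi^2 j^2)\,e^{2\pi i jy}$ gives $(\partial_1 Q)(x,x/\eps) = \eps\,\partial_x R_\eps - \eps\,(\partial_1 R)_\eps$, so the contribution of this term to $f_\eps$ is actually $\mathcal{O}(\eps^2)$ in $H^{-1}$. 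The crucial improvement is that $R$'s Fourier coefficients carry $1/j^2$ decay from two $y$-antidifferentiations, which lets Cauchy--Schwarz in $j$ convert the $\sup_j$-bound into convergent sums of the form $\sum_j \|\partial_x(\langle x\rangle^\rho q_j)\|_{L^2}^2/j^4 < \infty$, yielding the required uniform bound on $\|\langle x\rangle^{2\sigma} R_\eps\|_{L^2}$ and completing the estimate $\|T_{q_\eps}\|_{L^2\to L^2}\le C\|f_\eps\|_{H^{-1}} = \mathcal{O}(\eps)$.
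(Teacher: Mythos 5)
Your first step is correct and is essentially the paper's own computation reorganized: the kernel calculation is that of Proposition~\ref{prop:TQ}, and the Lorentzian convolution identity shows that the Hilbert--Schmidt norm of $T_{q_\eps}$ is comparable to a weighted $H^{-1}$ norm of $q_\eps^\sharp=\langle x\rangle^{2\sigma}q_\eps$, which is exactly the quantity the paper estimates (on the Fourier side, via the splitting $I_1+I_2$). In your second step there is first a small omission: since $x\mapsto q(x,\cdot)$, hence $Q_\eps$, has jumps at the points $a_l$, the identity $q_\eps=\eps\,\partial_x Q_\eps-\eps(\partial_1Q)_\eps$ holds only classically on each smoothness interval; the distributional derivative, which is what $\|\partial_x u\|_{H^{-1}}\le\|u\|_{L^2}$ requires, carries extra Dirac masses $\eps\sum_l\langle a_l\rangle^{2\sigma}\left[Q_\eps\right]_{a_l}\delta_{a_l}$. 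These are harmless (each $\left[Q_\eps\right]_{a_l}$ is bounded uniformly in $\eps$ because $\sum_j|[q_j]_{a_l}|/|j|<\infty$, and $\delta_{a_l}\in H^{-1}$), but they must be included, since discontinuous $q_\eps$ is precisely the case of interest here.

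The genuine gap is your treatment of the term $\eps\,\langle x\rangle^{2\sigma}(\partial_1Q)_\eps$. You correctly identify that a naive $L^2$ bound fails because Hypothesis \textbf{(V.5)} gives only $\sup_j$ control of $\|\frac{d}{dx}(\langle x\rangle^{\rho}q_j)\|_{L^2}$, but the proposed repair by iterating the antiderivative trick produces, besides the $R_\eps$ terms you do bound, the remainder $\eps^2\langle x\rangle^{2\sigma}(\partial_1R)_\eps$, whose Fourier coefficients involve $\partial_x^2 q_j$ carrying the full weight $\langle x\rangle^{2\sigma}$. Hypotheses \textbf{(V)} contain no weighted bound on second $x$-derivatives: the $pC^3_xL^2_{y,per}$ condition gives only sup-norm control on the partition pieces, two of which are unbounded, hence no decay of $\partial_x^2 q_j$ at infinity, so $\langle x\rangle^{2\sigma}(\partial_1R)_\eps$ need not lie in $L^2$ or even in $H^{-1}$; you never estimate it, and as stated the iteration does not close. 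This is exactly why the paper stays on the Fourier side: writing $\widehat{q_\eps^\sharp}(\zeta)=\sum_j\widehat{q_j^\sharp}(\zeta-2\pi j/\eps)$, in the region $|\zeta|\ge\pi j/\eps$ the $H^{-1}$-type weight is itself $\mathcal{O}(\eps^2/j^2)$, while in the complementary region the argument of $\widehat{q_j^\sharp}$ has modulus at least $\pi j/\eps$, so a single integration by parts per smooth piece (using only $(q_j^\sharp)'\in L^2$ uniformly in $j$ together with the jump data) gives $|\widehat{q_j^\sharp}|\le C\eps/j$ there, and the leftover factors $j^{-2}$ make the sum converge; only one $x$-derivative of $q_j$ is ever needed. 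To salvage your route you should show $\|\langle x\rangle^{2\sigma}(\partial_1Q)_\eps\|_{H^{-1}}=\mathcal{O}(1)$ directly by this frequency-shift argument (exploiting the $1/j$ gained from the $y$-antiderivative and the localization of the spectrum near $2\pi j/\eps$), rather than integrating by parts in $x$ a second time.
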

 \medskip
 
 \begin{proposition}\label{prop:TR}
$T_{R_{V_0}}(k)$ is a bounded operator from $L^2$ to $L^2$.
\end{proposition}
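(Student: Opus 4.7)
\noindent\textbf{Proof plan for Proposition~\ref{prop:TR}.} The goal is to show that $T_{R_{V_0}}(k)=\D\chi R_{V_0}(k)\chi\D$ is a bounded operator on $L^2$, uniformly for $k\in K$. My approach is to convert the $L^2\to L^2$ boundedness of $T_{R_{V_0}}(k)$ into a statement about mapping $\chi R_{V_0}(k)\chi:H^{-1}\to H^1$ with a uniform bound, and then to establish this mapping property by a resolvent identity that replaces the singular operator $\D^2=I-\partial_x^2$ by the much better-behaved operator $H_0-k^2$.

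\noindent\textbf{Step 1: reduction to a gain-of-two-derivatives estimate.} Since $\D=(I-\partial_x^2)^{1/2}$ is an isometry $H^s\to H^{s-1}$, the operator $T_{R_{V_0}}(k)$ is $L^2$-bounded if and only if the sandwiched resolvent $\chi R_{V_0}(k)\chi$ maps $H^{-1}\to H^{1}$ with a bound independent of $k\in K$. I plan to write, for $f,g\in L^2$,
\[
\langle T_{R_{V_0}}(k)f,g\rangle\ =\ \langle \chi R_{V_0}(k)\chi\,\D f,\,\D g\rangle,
\]
and then show that $\D f\in H^{-1}$, $\chi\D f\in H^{-1,\sigma}$ for any $\sigma\leq\sigma_0$ (since $\chi=\langle x\rangle^{-\sigma_0}$ with $\sigma_0>4$ means multiplication by $\chi$ is a bounded multiplier on $H^{-1}$ via the expansion $\chi\D=\D\chi+[\chi,\D]$, where the commutator is a zeroth-order bounded operator on $L^2$).

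\noindent\textbf{Step 2: the key identity and decomposition.} Using $(-\partial_x^2+V_0-k^2)R_{V_0}(k)=I$ we obtain the algebraic identity
\[
\D^{2}R_{V_0}(k)\ =\ I\ +\ (1+k^2-V_0)\,R_{V_0}(k),
\]
which in turn yields $R_{V_0}(k)=\D^{-2}+\D^{-2}(1+k^2-V_0)R_{V_0}(k)$. Inserting this into $T_{R_{V_0}}(k)$ produces
\[
T_{R_{V_0}}(k)\ =\ \D\chi\D^{-2}\chi\D\ +\ \D\chi\D^{-2}(1+k^2-V_{reg})R_{V_0}(k)\chi\D\ -\ \D\chi\D^{-2}V_{sing}R_{V_0}(k)\chi\D.
\]
The first term is controlled by writing $\D\chi\D^{-2}\chi\D=(\D\chi\D^{-1})(\D^{-1}\chi\D)$ and noting that each factor is bounded on $L^2$: indeed $\D\chi\D^{-1}=\chi+\D[\D^{-1},\chi]$, and $[\D^{\pm1},\chi]$ is a pseudodifferential operator of order $0$ with Schwartz-class symbol decay, hence bounded on $L^2$. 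The second term is handled by noting that $\D^{-2}$ gains two derivatives, $V_{reg}\in L^{1,2}$ (and actually in $L^{\infty}$ after the trivial decomposition), and then using the explicit resolvent kernel~\eqref{outgoing-resolvent} together with the Jost-solution bounds from Appendix~\ref{sec:Jost} to show that $\chi R_{V_0}(k)\chi$ already has a smooth, bounded, pointwise-decaying kernel (Hilbert-Schmidt on $L^2$), with derivatives controlled up to the diagonal.

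\noindent\textbf{Step 3: handling the singular part $V_{sing}$.} The hardest part, and the main obstacle, is the third term $\D\chi\D^{-2}V_{sing}R_{V_0}(k)\chi\D$. Here $V_{sing}=\sum_j c_j\delta(x-x_j)$ forces us to evaluate $R_{V_0}(k)\chi\D f$ pointwise at the finitely many points $x_j$. In 1D the Sobolev embedding $H^{1/2+}\hookrightarrow C^0$ makes point evaluation a bounded functional on any Sobolev space of index $>1/2$, so I will show that the composition $\chi\D^{-2}\delta_{x_j}$ lies in $H^{1}$ with a uniform bound (it is essentially the Green's function of $I-\partial_x^2$ multiplied by $\chi$), and dually $\delta_{x_j}R_{V_0}(k)\chi\D:L^2\to\CC$ is bounded uniformly in $k\in K$ by the explicit form~\eqref{outgoing-resolvent} evaluated at $x=x_j$, using $|e_{V_0\pm}(x_j;k)|\leq C(K)$ uniformly in $k\in K$ under Hypothesis~\textbf{(G)}.

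\noindent\textbf{Step 4: uniformity in $k$.} Finally, I will check that the resolvent prefactor $1/(-2ik\,t(k))$ appearing in~\eqref{outgoing-resolvent} is uniformly bounded for $k\in K$: in the generic case this follows from~\eqref{t-as-k20} and the limit in Remark~\ref{rem:k=0}, while in the non-generic case Hypothesis~\textbf{(G)} excludes a neighborhood of $k=0$, so $1/(kt(k))$ is bounded on $K$. Combined with uniform Jost bounds this produces the required uniform-in-$k$ operator bound, completing the proof. The delicate step is Step 3, where one must reconcile the $\delta$-like singularities of $V_{sing}$ with the weighted-Sobolev gain of $R_{V_0}(k)$; everything else is essentially pseudodifferential commutator bookkeeping plus Hilbert-Schmidt estimates on an explicit kernel.
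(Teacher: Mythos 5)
Your route is genuinely different from the paper's. The paper conjugates by $\DV=(I-\Delta+V_0)^{1/2}$, uses boundedness of the wave operators on $H^{\pm1}$ (Lemma~\ref{lem:Wbounded}) to control $\D\DV^{-1}$ and $\DV^{-1}\D$, and then decomposes $\DV\chi R_{V_0}\chi\DV$ into $A_I+A_{II}^{(a)}+A_{II}^{(b)}+A_{III}$; the commutator pieces are estimated in the distorted Fourier representation, where the multiplier $(\eta^2-k^2)^{-1}$ forces the near/far frequency splitting and a principal-value-type cancellation (Lemma~\ref{lem:Kbound} and the $\mathcal{S}^\varepsilon,\mathcal{E}^\varepsilon,\mathcal{R}^\varepsilon$ analysis). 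You instead stay entirely with the explicit one-dimensional Green's function~\eqref{outgoing-resolvent} and a resolvent identity to generate the two-derivative gain; this is more elementary and avoids the wave-operator input and the PV argument, but it is tied to the 1D explicit kernel, whereas the paper's scheme is the one from~\cite{GW:05} that extends beyond this setting. The part of your plan that does overlap with the paper is the Hilbert--Schmidt bound on $\chi R_{V_0}(k)\chi$ via $|R_{V_0}(x,y;k)|\le C\langle x\rangle\langle y\rangle$, which is exactly how $A_I$ is treated (Proposition~\ref{prop:A1}).

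However, as written there is a genuine gap precisely where the work lies, namely in your second and third terms $\D\chi\D^{-2}(1+k^2-V_0)R_{V_0}(k)\chi\D$. The rightmost block $R_{V_0}(k)\chi\D$ applied to $f\in L^2$ is not controlled by anything you state: the outgoing resolvent at real energy $k^2$ (embedded in the continuous spectrum) maps into no unweighted Sobolev space, and $R_{V_0}(k)\chi\D f$ grows like $\langle x\rangle$, so the reduction ``$\chi R_{V_0}\chi:H^{-1}\to H^1$'', the Hilbert--Schmidt bound on $\chi R_{V_0}\chi$, and unweighted mapping properties of $\D^{-2}$ simply do not apply to this term. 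To close it you must pair the right-hand $\D$ against the kernel, i.e.\ prove a $k$-uniform bound on $\big\| \D_y\big(R_{V_0}(x,\cdot;k)\chi\big)\big\|_{L^2_y}\le C\langle x\rangle$ using the derivative bounds~\eqref{eqn:m-bounds} on the Jost solutions (your Step 3 cites only $|e_{V_0\pm}(x_j;k)|\le C(K)$, which is not enough for $\delta_{x_j}R_{V_0}(k)\chi\D$), and then push the resulting object (a finite measure, an $L^1$ piece, and a linearly growing function) through $\D\chi\D^{-2}$ by explicit computation with the $e^{-|x|}$ kernel and the weight $\chi$, not by Sobolev bookkeeping. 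Two supporting claims are also false as stated: Hypotheses~\textbf{(V)} give only $V_{reg}\in L^{1,2}$, not $L^\infty$, and the kernel of $\chi R_{V_0}(k)\chi$ is not smooth (its first derivatives jump across the diagonal and at $\supp V_{sing}$); likewise the prefactor $1/(-2ik\,t(k))$ is \emph{not} bounded near $k=0$ in the generic case --- it is the combination $f_-f_+/\mathrm{Wron}(f_+,f_-)$ of Remark~\ref{rem:k=0} that is uniformly bounded on $K$, and the argument must be phrased that way.
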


Propositions~\ref{prop:TQ} and~\ref{prop:Tqeps-small} are proved below.
The proof of Proposition~\ref{prop:TR} is somewhat more technical proof and is 
found in Appendix~\ref{proof-TR}.
 
 We now prove Proposition~\ref{prop:TQ}. The proof of Proposition~\ref{prop:Tqeps-small} begins on page
 \pageref{proof:Tqeps-small}.

\begin{proof}\label{proof:TQ}
 {\bf Proof of Proposition~\ref{prop:TQ}:}
We begin by introducing the notation 
\begin{equation}
 Q^\sharp = \chi^{-1}Q \chi^{-1}.
\end{equation}

Then, one uses the following calculation:
\begin{align*}
\D^{-1} Q^\sharp \D^{-1}f (x) = & \D^{-1}Q^\sharp \D^{-1}\left(\frac{1}{2\pi}\int_\xi e^{ix\xi}\hat{f}(\xi)\ d\xi \right)\\
 = & \int_\xi \frac{\hat{f}(\xi)}{2\pi} \D^{-1}Q^\sharp \D^{-1} e^{ix\xi} d\xi \\
 = & \int_\xi \frac{\hat{f}(\xi)}{4\pi^2} \D^{-1}Q^\sharp e^{ix\xi}(1+\xi^2)^{-1/2} d\xi \\
 = & \int_\xi \frac{\hat{f}(\xi)}{4\pi^2}(1+\xi^2)^{-1/2} \frac{1}{2 \pi} \int_\eta e^{ix\eta}(1+\eta^2)^{-1/2} \widehat{Q^\sharp e^{i\eta\xi}}\ d\eta\ d\xi \\
% = & \int_\xi \frac{\hat{f}(\xi)}{8\pi^3}(1+\xi^2)^{-1/2} \int_\eta e^{ix\eta}(m^2+\eta^2)^{-1/2} \widehat{Q^\sharp}(\xi-\eta)\ d\eta\ d\xi \\
 = & \int_\xi \frac{1}{8\pi^3} \left(\int_\zeta e^{-iy\xi}f(\zeta)\ d\zeta\right) \int_\eta (\langle \xi\rangle \langle\eta\rangle)^{-1} e^{ix\eta} \widehat{Q^\sharp}(\xi-\eta) \ d\eta\ d\xi \\
 = & \int_\zeta f(\zeta) K(x,\zeta)\ d\zeta.
\end{align*}
 with the kernel
 \begin{equation}
 K(x,\zeta)\equiv \frac{1}{8\pi^3}\int_\xi \int_\eta (1+\xi^2)^{-1/2}(1+\eta^2)^{-1/2} e^{i(x\eta-\zeta\xi)}\widehat{Q^\sharp}(\xi-\eta)\ d\eta\ d\xi
 \end{equation}

We want to prove that $\iint |K(x,\zeta)|^2\ dx\ d\zeta <+\infty$, {\it i.e.} $K\in L^2(\RR^2)$. One has 
\begin{align*}
\widehat{K}(s,z)  = & \iint_{\RR^2} K(x,\zeta)e^{-ixs} e^{-i\zeta z}\ dx\ d\zeta \\
 = & \frac{1}{8\pi^3}\iint_{x,\zeta} \iint_{\eta,\xi} \langle \xi\rangle^{-1}\ \langle\eta\rangle^{-1}e^{ix(\eta-s)}e^{-i\zeta(\xi+z)}\widehat{Q^\sharp}(\xi-\eta)\ dx \ d\zeta\ d\eta\ d\xi \\
 %= & \iint_{\eta,\xi} \widehat{Q^\sharp}(\xi-\eta)(m^2+\xi^2)^{-1/2}(m^2+\eta^2)^{-1/2} \delta(\eta-s) \delta(\xi+t) d\eta d\xi \\
 = & \frac{1}{8\pi^3} \frac{\widehat{Q^\sharp}(-s-z)}{(1+s^2)^{1/2}(1+z^2)^{1/2}}.
\end{align*}
Therefore, we deduce 
\[\iint_{x,\zeta} |K(x,\zeta)|^2\ dx\ d\zeta = \iint_{s,z} |\widehat{K}(s,z)|^2\ ds\ dz = \frac{1}{8\pi^3} \int_s \frac{1}{1+s^2} \int_z \frac{|\widehat{Q^\sharp}(s+z)|^2 }{1+z^2}\ dz\ ds.\]

Since $Q^\sharp \in L^{2}(\RR)$, one has immediately $\iint_{x,\zeta} |K(x,\zeta)|^2\ dx\ d\zeta <\infty$, and
\[ \big\| K(x,\zeta) \big\|_{L^2(\RR^2)} \leq C \big\| Q^\sharp \big\|_{L^2(\RR)}.\]
Therefore $T_Q$ is a Hilbert-Schmidt integral operator, and is therefore bounded, with
\[ \big\| T_Q \big\|_{L^2\to L^2} \leq C \big\| Q^\sharp \big\|_{L^2(\RR)}.\]
This completes the proof of Proposition~\ref{prop:TQ}.
\end{proof}
\medskip
\begin{proof} \label{proof:Tqeps-small}
\noindent{\bf Proof of Proposition~\ref{prop:Tqeps-small}:}
\medskip

 Consider $T_Q$, where $Q=q_\epsilon(x)=q\left(x,\frac{x}\epsilon\right)$ as in~\eqref{qeps}. From the proof below,
one has $T_{q_\epsilon} f(x)=\int_\zeta f(\zeta) K_\epsilon(x,\zeta)\ d\zeta,$
with the kernel $K_\epsilon(x,\zeta)$ satisfying
\[\widehat{K_\epsilon}(s,z) = \frac{1}{8\pi^3}\frac{\widehat{q_\epsilon^\sharp}(-s-z)}{(1+s^2)^{1/2}(1+z^2)^{1/2}}.\]
Using the decomposition in Fourier series of $q_\epsilon(x)=q\left(x,\frac{x}\epsilon\right)$, one has
\[q_\epsilon^\sharp(x)\equiv\chi^{-1}q_\epsilon\chi^{-1}(x)=\sum_{|j|\geq 1} q_j^\sharp(x)e^{2i\pi j(x/\epsilon)},\] and therefore
\[\widehat{q_\epsilon^\sharp}(\xi)=\sum_{|j|\geq 1}\int_x q_j^\sharp(x)e^{2i\pi j(x/\epsilon)}e^{-ix\xi}dx=\sum_{|j|\geq 1}\widehat{q_j^\sharp}\left(\frac{2\pi j}{\epsilon}-\xi\right).\]
One deduces then
\begin{align*}
 \iint_{s,z} |\widehat{K_\epsilon}(s,z)|^2\ ds\ dz &=\frac{1}{8\pi^3} \sum_{|j|\geq 1} \iint_{s,z} \frac{|\widehat{q_j^\sharp}(s+z+(2\pi j/\epsilon))|^2}{(1+s^2)(1+z^2)}\ ds\ dz \\
&=\ \frac{1}{8\pi^3} \sum_{|j|\geq 1} \int_\RR dz \int_\RR d\eta \frac{|\widehat{q_j^\sharp}(\eta+2\pi j/\epsilon))|^2}{(1+(\eta-z)^2)(1+z^2)}\nn\\ 
 &= \sum_{|j|\geq 1}  \int_\RR dz \int_{|\eta|\geq\frac{\pi j}{\epsilon}} d\eta \frac{|\widehat{q_j^\sharp}(\eta+(2\pi j/\epsilon))|^2}{(1+(\eta-z)^2)(1+z^2)}\\
 &\ \ \ + \sum_{|j|\geq 1} \int_\RR dz \int_{|\eta|\leq\frac{\pi j}{\epsilon}} d\eta \frac{|\widehat{q_j^\sharp}(\eta+(2\pi j/\epsilon))|^2}{(1+(\eta-z)^2)(1+z^2)} \equiv\ I_1\ +\ I_2.
  \end{align*}
 {\bf Estimation of $I_1$}:\medskip 
 
 \begin{align*}
 & \int_\RR dz \int_{|\eta|\geq\frac{\pi j}{\epsilon}} \frac{|\widehat{q_j^\sharp}(\eta+\frac{2\pi j}{\epsilon})|^2}{(1+(\eta-z)^2)(1+z^2)}\ d\eta\ =\ \left(\ \int_{|z|\ge\frac{\pi j}{2\epsilon}} dz + 
 \int_{|z|\le\frac{\pi j}{2\epsilon}} dz\ \right)\ \int_{|\eta|\geq\frac{\pi j}{\epsilon}} d\eta\\
 & \ \ \ \ \ \ = \ I_{1,A} \ +\ I_{1,B}, \ \ {\rm with} \\
& I_{1,A} = \int_{|z|\ge\frac{\pi j}{2\epsilon}} dz \int_{|\eta|\geq\frac{\pi j}{\epsilon}} d\eta 
  \le C \frac{\epsilon^2}{j^2} \int_{|\eta|\geq\frac{\pi j}{\epsilon}} \left|\widehat{q_j^\sharp}\left(\eta+\frac{2\pi j}{\epsilon}\right)\right|^2 \int_{|z|\ge\frac{\pi j}{2\epsilon}} \frac{1}{1+(\eta-z)^2} dz d\eta\\
 & \ \ \ \ \ \le\ C'\ \frac{\epsilon^2}{j^2} \|q_j^\sharp\|_{L^2}^2; \ \ \text{ and as }\ |\eta-z|\ge \frac{\pi j}{2\epsilon} \ \text{ for }\ |z|\leq\frac{\pi j}{2\epsilon},\ |\eta|\ge\frac{\pi j}{\epsilon},\\
&I_{1,B} = \int_{|z|\le\frac{\pi j}{2\epsilon}} dz\ \int_{|\eta|\geq\frac{\pi j}{\epsilon} } d\eta
 \le C \frac{\epsilon^2}{j^2}\ \int_{|z|\le\frac{\pi j}{2\epsilon}} \frac{1}{1+z^2}\ dz\ 
 \int_{|\eta|\ge\frac{\pi j}{\epsilon}}\ \left|\widehat{q_j^\sharp}\left(\eta+\frac{2\pi j}{\epsilon}\right)\right|^2\ d\eta\\
 & \ \ \ \ \ \le\ C'\ \frac{\epsilon^2}{j^2} \|q_j^\sharp\|_{L^2}^2.\end{align*}

Now, summing on $j$, one obtains $I_1(\epsilon)={\mathcal O}(\epsilon^2).$

\medskip

{\bf Estimation of $I_2$:}\ We first show that if we assume only that $\sum_{|j|\ge1} \|q_j^\sharp\|^2_{L^2}<\infty$, then
 $I_2(\epsilon)\to0$ as $\epsilon\to0$, and therefore $\|T_{q_\epsilon}\|_{L^2\to L^2}^2=o(1)+{\mathcal O}(\epsilon^2)=o(1)$ as $\epsilon\to$ with no specified rate.
 
  We then show that if $q_\eps$ is as in hypotheses {\bf (V)} then $\|T_{q_\epsilon}\|_{L^2\to L^2}= {\mathcal O}(\epsilon)$ as $\epsilon\to0$.
 \medskip
 
 Assume $\sum_{|j|\ge1} \|q_j^\sharp\|^2_{L^2}<\infty$. Then,
 \begin{align*}
 I_2\ &\equiv\
 \sum_{|j|\geq 1} \int_\RR dz \int_{|\eta|\leq\frac{\pi j}{\epsilon}} d\eta \frac{|\widehat{q_j^\sharp}(\eta+(2\pi j/\epsilon))|^2}{(1+(\eta-z)^2)(1+z^2)}\nn\\
 &\le\  \int_\RR\ \frac{1}{(1+z^2)} \ dz\ \sum_{|j|\geq 1} \int_{|\eta|\leq\frac{\pi j}{\epsilon}}\ |\widehat{q_j^\sharp}(\eta+(2\pi j/\epsilon))|^2\ d\eta\ \\
 &\le\ C \sum_{|j|\geq 1} \int_{\frac{\pi j}{\epsilon}\le\tau\leq\frac{3\pi j}{\epsilon}}\ |\widehat{q_j^\sharp}(\tau)|^2\ d\tau.
  \end{align*}
  Note that
$
 \sum_{|j|\geq 1} \int_\RR |\widehat{q_j^\sharp}(\tau)|^2\ d\tau\ =\ \sum_{|j|\ge1} \|q_j^\sharp\|^2_{L^2}<\infty,
$
 implying $I_2=o(1)$ as $\epsilon\to0$. 
 \bigskip
 
 We now turn to the case where $q_\epsilon$ satisfies the condition in Hypotheses~{\bf (V)}
 in order to establish that $\|T_{q_\epsilon}\|_{L^2\to L^2}= {\mathcal O}(\epsilon)$ as $\epsilon\to0$. The estimate for $I_1(\epsilon)$ is as above:
 $
 I_1(\epsilon) \ =\ {\mathcal O}(\epsilon^2).\nn
$
 
 Now, we estimate $I_2(\epsilon)$ using the fact that since $q_j^\sharp \in L^2$ and $(q_j^\sharp)'\in L^2$:
 \begin{align*}
 \left| \widehat{q_j^\sharp}(\tau)\right| &= \left|\sum_{l=0}^{M}\int_{a_l}^{a_{l+1}}q_j^\sharp(x)e^{-i\tau x}\ dx\right| \\
  &=\left|\frac1{i\tau} \sum_{l=0}^{M}\left( \int_{a_l}^{a_{l+1}}(q_j^\sharp)'(x)e^{-i\tau x}\ dx -q_j^\sharp(a_{l+1}^-)e^{-i\tau a_{l+1}}+q_j^\sharp(a_{l}+)e^{-i\tau a_l} \right) \right| \\
  &\leq C \frac1\tau \left(\big\|(q_j^\sharp)' \big\|_{L^2} + \sum_{l=1}^{M} \left[q_j^\sharp(x)\right]_{a_j}\right) = \mathcal{O}\left(\frac1\tau\right).
 \end{align*}

 Therefore, one has
    \begin{align*}
 I_2\ &\equiv\
 \sum_{|j|\geq 1} \int_\RR dz \int_{|\eta|\leq\frac{\pi j}{\epsilon}} d\eta \frac{|\widehat{q_j^\sharp}(\eta+(2\pi j/\epsilon))|^2}{(1+(\eta-z)^2)(1+z^2)}\nn\\
 &\le\ C \int_\RR\ \frac{1}{(1+z^2)} \ dz\ \sum_{|j|\geq 1} \int_{|\eta|\leq\frac{\pi j}{\epsilon}}\ \left(\frac{\epsilon}{\pi j}\right)^2 \frac{1}{1+(\eta-z)^2} d\eta\ \le\ C' \epsilon^2 \sum_{|j|\geq 1} \frac{1}{j^2}.
  \end{align*}
 One deduces finally that
$
\big\|T_{q_\epsilon}\|_{L^2 \to L^2}=\big\|K_\epsilon\big\|_{L^2(\RR^2)}=I_1+I_2=\mathcal{O}(\epsilon)$.
 This completes the proof of Proposition~\ref{prop:Tqeps-small}.
\end{proof}

\subsection{Application to the transmission coefficient, $t(k)=t[k;Q]$}
\label{sec:Texpansion-LS}
This section is devoted to the proof of Theorem~\ref{thm:Texpansion-LS}. The heart of the matter is to view $t(k)$ as a functional of the perturbing microstructure potential, $Q(x)$ 
 \begin{equation}
 Q\ \mapsto\ t[Q]
 \label{qto-tq}
 \end{equation}
and to use the Lippman-Schwinger expansion of Theorem~\ref{thm:Expand-LS} to expand $t[Q]$ for small $|||Q|||$:
\begin{equation}
t[Q]\ =\ t_0^{hom}\ +\ t_1[Q]\ +\ t_2[Q,Q]\ +\ t_3[Q,Q,Q]\ +\ \dots,
\label{tQ-expand}\end{equation}
where $t_j[Q,Q,\dots,Q]$ is $j-$ linear in $Q$.
 The transmission coefficient expansion of Theorem~\ref{thm:Texpansion-LS} is recovered from the small $|||Q|||$ asymptotics of the first several terms of the expansion of $t[q_\epsilon]$. Finally, the error terms are estimated.
\medskip

Recall that from~\eqref{eqn:RT} the transmission coefficient, $t_W(k)$, associated with the distorted plane wave $e_{W+}(x;k)$, is given by
\begin{equation}
t_W(k)\ =\ \lim_{x\to+\infty}\ e^{-ikx}\ e_{W+}(x;k).
\nn\end{equation}
We denote the transmission coefficients of $e_{V_0+}(x;k)$ and $e_{V_0+Q,+}(x;k)$, respectively,
 \begin{align*}
 t_{V_0}(k)\ &\equiv\ t_0(k)\ \equiv\ t_0^{hom}(k),\nn\\
 t_{V}(k)\ &\equiv\ t(k)\ =\ \lim_{x\to+\infty}\ e^{-ikx}\ e_{V_0+Q,+}(x;k)\ =\ t_0^{hom}(k)\ +\ \lim_{x\to+\infty} e^{-ikx}\ u_s(x;k).
 \end{align*}
 To obtain the desired leading order expansion of $t(k)$ of Theorem~\ref{thm:Texpansion-LS} we now derive the small $|||Q|||$ asymptotics of the {\it linear and quadratic terms in $Q$} of~\eqref{QtoM}.\medskip
 
\noindent{\bf Calculation of $t_1[Q]$:}\medskip

One has from~\eqref{outgoing-resolvent} that
\begin{align*}
-R_{V_0}(k)\ Q\ e_{V_0+}(x;k) &= \ \int_{-\infty}^{x} Q(\zeta)\ e_{V_0+}(\zeta;k) e_{V_0-}(\zeta;k) \ d\zeta\ \frac{e_{V_0+}(x;k)}{2ik\ t_0^{hom}}\\ 
&\ \ \ \ +\int_{x}^{+\infty} Q(\zeta)\ e_{V_0+}(\zeta;k) e_{V_0+}(\zeta;k) \ d\zeta\ 
 \frac{e_{V_0-}(x;k)}{2ik\ t_0^{hom}}\nn\\
 &\sim\ t_1[Q]\ e^{ikx},\ \ {\rm as}\ x\to\infty,
 \end{align*}
where
\begin{equation}\label{t1Q}
t_1[Q]\ \equiv \ \frac{1}{2ik}\int_{-\infty}^{\infty} Q(\zeta)\ e_{V_0+}(\zeta;k) e_{V_0-}(\zeta;k) \ d\zeta.
\end{equation}

\noindent{\bf Calculation of $t_2[Q,Q]$:}\medskip

 One has from~\eqref{outgoing-resolvent} that
\begin{align*}
&R_{V_0}(k)\ Q\ R_{V_0}(k)\ Q(\zeta)\ e_{V_0+}(x;k) \\
& \qquad = 
\int_{-\infty}^{x} Q(\zeta) R_{V_0}(k)(Q(\zeta)\ e_{V_0+}(\zeta;k)) e_{V_0-}(\zeta;k) \ d\zeta\ 
 \frac{e_{V_0+}(x;k)}{2ik\ t_0^{hom}}\\&\qquad \qquad +\ \int_{x}^{+\infty} Q(\zeta) R_{V_0}(k)(Q(\zeta)\ e_{V_0+}(\zeta;k)) e_{V_0+}(\zeta;k)
 \ d\zeta\ \frac{e_{V_0-}(x;k)}{2ik\ t_0^{hom}}\nn\\
 &\qquad \sim\ t_2[Q,Q]\ e^{ikx}, \end{align*}
 where 
 \begin{align}\label{t2QQ}
 t_2[Q,Q]&\equiv\frac{1}{2ik}\int_{-\infty}^{\infty} Q(\zeta) R_{V_0}(k)(Q(\zeta)\ e_{V_0+}(\zeta;k)) e_{V_0-}(\zeta;k) \ d\zeta \nn \\
 &=\frac{1}{2ik}\frac1{-2ik\ t_0^{hom}}\int_{-\infty}^{+\infty} Q(\zeta) e_{V_0-}(\zeta;k) \left(\ I_l(\zeta) + I_r(\zeta) \ \right)\ d\zeta, \ \ {\rm with} \\
 \label{t2qq}
& \ \ \ \nn
\begin{array}{c}
 I_l(\zeta) \ = \ \int_{-\infty}^{\zeta} Q(z)\ e_{V_0+}(z;k)\ e_{V_0-}(z;k)\ e_{V_0+}(\zeta;k) \ dz,\\
 I_r(\zeta) \ = \ \int_{\zeta}^{+\infty} Q(z)\ e_{V_0+}(z;k)\ e_{V_0+}(z;k)\ e_{V_0-}(\zeta;k) \ dz.
\end{array}\end{align}

\noindent{\bf Estimation of the error terms:}\medskip

The final step for the proof of Theorem~\ref{thm:Texpansion-LS} consists in a bound on the contribution to the transmission coefficient from the remainder term in expansion~\eqref{tQ-expand}. This is given by the following Theorem: 
\begin{theorem}\label{prop:convergence-asymptotic-expansion-LS}
Let $K$ denote a compact subset of $\RR$, satisfying Hypothesis~\textbf{(G)}.
Introduce for $k\in K$
\begin{equation}
t_{rem}(k;Q) \ \equiv \ t(k;Q)\ -\ t^{hom}_0(k)\ -\ t_1[Q]\ -\ t_2[Q,Q].
\label{trem-def-LS}
\end{equation}
 Then we have, uniformly in $k\in K$: 
\begin{enumerate}
 \item If $V$ has compact support, then $t_{rem}(k)\ =\ \mathcal{O}(|||Q|||^3)$.
 \item If $V$ is exponentially decreasing, then $t_{rem}(k)\ =\ \mathcal{O}(|||Q|||^{3-})$.
 \item If $\langle x \rangle^{\rho+1}V_0(x)\in L^{1}(\RR)$ and $\langle x \rangle^\rho Q(x)\in L^{2}(\RR)$, $\rho>8$, then there exists $2<\beta<3$ such that $t_{rem}(k)\ =\ \mathcal{O}(|||Q|||^{\beta})$.
\end{enumerate}
\end{theorem}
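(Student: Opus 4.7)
The plan is to combine the convergent Neumann expansion of Theorem~\ref{thm:Expand-LS} with the $x\to+\infty$ asymptotics of the outgoing resolvent to reduce $t_{rem}$ to a single bilinear pairing, and then to bound this pairing in the preconditioned $L^2$ framework introduced in Section~\ref{sec:Formulation-LS-equation}.

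First I would establish the identity
\[
\lim_{x\to+\infty} e^{-ikx}\, R_{V_0}(k) F(x) \;=\; \frac{-1}{2ik}\int_{\RR} e_{V_0-}(y;k)\,F(y)\,dy, \qquad F\in L^1(\RR),
\]
which follows by inspection from the Green's function formula~\eqref{outgoing-resolvent} together with $e_{V_0+}(x;k)\sim t_0^{hom}(k)e^{ikx}$ as $x\to+\infty$. Applying this identity iteratively to the terms of the Neumann series of Theorem~\ref{thm:Expand-LS} identifies
\[
t_m[Q,\dots,Q] \;=\; \lim_{x\to+\infty} e^{-ikx}\bigl[(-R_{V_0}(k)Q)^m e_{V_0+}\bigr](x;k)
\]
as a multilinear integral against $e_{V_0-}$; this reproduces the explicit formulas for $t_1[Q]$ and $t_2[Q,Q]$ in Theorem~\ref{thm:Texpansion-LS}, and collecting $m\geq 3$ yields the fundamental reduction
\[
t_{rem}(k;Q) \;=\; \frac{1}{2ik}\int_\RR e_{V_0-}(y;k)\,Q(y)\,\psi(y;k)\,dy,
 \qquad \psi \;\equiv\; \sum_{m\geq 2}(-R_{V_0}(k)Q)^{m}\,e_{V_0+}.
\]

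Next I would recast the bilinear pairing using $\chi^{-1}Q\chi^{-1}=\D T_Q\D$ and the self-adjointness of $\D$ to obtain
\[
\int e_{V_0-}\,Q\,\psi\,dy \;=\; \langle E_{V_0-},\, T_Q\,\D\chi\psi\rangle_{L^2},\qquad E_{V_0\pm}\equiv\D\chi e_{V_0\pm}.
\]
The factorization $(T_{R_{V_0}}T_Q)^m E_{V_0+}=\D\chi(R_{V_0}Q)^m e_{V_0+}$ is immediate by induction from the definitions of $T_{R_{V_0}}$ and $T_Q$, so $\D\chi\psi=\sum_{m\geq 2}(-T_{R_{V_0}}T_Q)^m E_{V_0+}$ is a geometric tail and Theorem~\ref{thm:Expand-LS} gives $\|\D\chi\psi\|_{L^2}\leq C(K)\,\|E_{V_0+}\|_{L^2}\,|||Q|||^2$. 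Combined with $\|T_Q\|_{L^2\to L^2}\leq|||Q|||$ (Proposition~\ref{prop:TQ}) and Cauchy--Schwarz, this yields the master estimate
\[
|t_{rem}(k;Q)|\;\leq\;\frac{C(K)}{|k|}\,\|E_{V_0-}\|_{L^2}\,\|E_{V_0+}\|_{L^2}\,|||Q|||^3.
\]

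The three cases then correspond to three regimes of control on $\|E_{V_0\pm}\|_{L^2}$: since $e_{V_0\pm}$ is bounded but non-decaying at one end, $\chi e_{V_0\pm}\in H^1$ is finite only when $\chi=\langle x\rangle^{-\sigma}$ is compatible with the decay of $V_0$ through the ODE $\partial_x^2e_{V_0\pm}=(V_0-k^2)e_{V_0\pm}$. In Case~1 ($V$ compactly supported) the weight $\sigma$ may be taken arbitrarily large and $\|E_{V_0\pm}\|_{L^2}$ is trivially bounded uniformly in $k\in K$, giving $\mathcal{O}(|||Q|||^3)$. In Case~2 (exponential decay), a Paley--Wiener-type weighted estimate on $e_{V_0\pm}-e^{\pm ikx}$ delivers the same conclusion modulo an arbitrarily small polynomial loss, giving $\mathcal{O}(|||Q|||^{3-})$. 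In Case~3 the weight $\sigma=\rho/2$ in $|||Q|||$ is rigidly prescribed by the decay of $Q$, and only a weaker weighted bound on $\|E_{V_0\pm}\|_{L^2}$ is available; interpolating between the master estimate above and a direct $L^1$--$L^\infty$ pairing on $\int e_{V_0-} Q\psi\,dy$ (obtained by pointwise Sobolev embedding of $\chi\psi\in H^1$) produces an exponent $\beta\in(2,3)$ determined by $\rho$. The main obstacle is precisely Case~3: because $\sigma=\rho/2$ cannot be enlarged without leaving the hypotheses, one must carefully track how the polynomial decay $\langle x\rangle^{\rho+1}V_0\in L^1$ propagates through the weighted Sobolev regularity of $e_{V_0\pm}$ via integration by parts on the defining ODE, and then identify the optimal interpolation breakpoint that fixes $\beta$. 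Uniformity in $k\in K$ is guaranteed by Hypothesis~\textbf{(G)} together with Proposition~\ref{prop:TR}, which keeps $\|T_{R_{V_0}}(k)\|_{L^2\to L^2}$ and the generic-case limit of $\|E_{V_0\pm}\|_{L^2}$ bounded on the compact set $K$.
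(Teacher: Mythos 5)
Your plan hinges on the reduction $t_{rem}=\frac{1}{2ik}\int_\RR e_{V_0-}\,Q\,\psi$ with $\psi=\sum_{m\ge2}(-R_{V_0}(k)Q)^m e_{V_0+}$, obtained by ``collecting $m\ge 3$'' from the Neumann series, and this is exactly where the gap is. The series of Theorem~\ref{thm:Expand-LS} converges only in the weighted sense ($\D\chi(\cdot)\in L^2$), so the tail is controlled pointwise only up to a factor $\chi^{-1}(x)=\langle x\rangle^{\sigma}$ that blows up as $x\to\infty$; interchanging $\lim_{x\to+\infty}e^{-ikx}(\cdot)$ with the sum over $m$ is therefore not justified by anything you have established. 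Moreover your auxiliary identity $\lim_{x\to+\infty}e^{-ikx}R_{V_0}(k)F(x)=\frac{-1}{2ik}\int e_{V_0-}F$ is false for mere $F\in L^1$: by \eqref{outgoing-resolvent} and \eqref{eqn:m-bounds}, $e_{V_0-}$ may grow linearly at $+\infty$, and the second Green's-function term $e^{-ikx}e_{V_0-}(x)\int_x^\infty e_{V_0+}F$ only vanishes if $\langle x\rangle\int_x^\infty|F|\to0$; applied with $F=Q\psi$, where the only available bound is $|\psi(x)|\le C\chi^{-1}(x)\,|||Q|||^2$, this forces precisely the weighted decay bookkeeping on $Q$ and $V_0$ that you skip. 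Justifying the reduction (e.g.\ by resumming the tail as $-R_{V_0}(k)[Q\psi]$ and verifying $e_{V_0-}Q\psi\in L^1$ plus the vanishing of the wrong-way term) is the real content of the proof, not a preliminary ``by inspection'' step.

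Your mechanism for the three cases is also misplaced. The quantity $\|E_{V_0\pm}\|_{L^2}=\|\D\chi e_{V_0\pm}\|_{L^2}$ is finite under Hypotheses~\textbf{(V)} alone — the paper bounds it by $\|\chi(x)\langle x\rangle\|_{L^2}$ using \eqref{eqn:m-bounds} — and does not degrade with the decay class of $V$; so if your reduction and master estimate were valid as stated they would already give $\mathcal{O}(|||Q|||^3)$ in all three cases, leaving your Case~3 ``interpolation'' (for which no inequality or breakpoint is specified) with nothing to do. That the unproved step would yield a conclusion stronger than the theorem should itself have been a warning that the difficulty lies elsewhere. The paper's route is different and avoids the limit interchange entirely: it bounds the full remainder $f_{rem}=-(I+T_{R_{V_0}}T_Q)^{-1}(T_{R_{V_0}}T_Q)^3\D\chi e_{V_0+}$ in $L^2$ by $C|||Q|||^3$, converts this to the pointwise bound \eqref{eqn:tr-bound-LS} with the unavoidable $\chi^{-1}(x)$ loss, and then uses the decay of $V$ to quantify how fast $\chi^{-1}\D^{-1}f_{rem}(x)-t_{rem}(k)e^{ikx}$ tends to zero (exactly beyond $\supp V$ in Case~1, like $e^{-\alpha x/2}$ in Case~2, like $\langle x\rangle^{-\rho}$ in Case~3), finally optimizing the evaluation point $x$ in terms of $|||Q|||$; it is this trade-off that produces the exponents $3$, $3-$, and $\beta=\frac{3\rho}{\rho+\alpha}\in(2,3)$. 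Your proposal never engages with this trade-off, so as written it does not prove the statement.
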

\medskip

\begin{proof}
It is convenient to first introduce 
\begin{align}
\label{form1-LS} f_{rem} \ & \equiv \ -\left(I+T_{R_{V_0}}T_{Q} \right)^{-1}(T_{R_{V_0}}T_{Q})^3 \D\chi e_{V_0+}(x;k) \\
\label{form2-LS} & \equiv \ \D\chi u_s \ + \ \D\chi R_{V_0}(k)\ Q(x)\ e_{V_0+}(x;k) \nn \\
& \qquad \ - \ \D\chi R_{V_0}(k)\ Q\ R_{V_0}(k)\ Q(x)\ e_{V_0+}(x;k).
\end{align}
Using~\eqref{eqn:m-bounds}, one deduces that $\D\chi e_{V_0+}(x;k)\in L^2_x$, with
\begin{align*}\left\| \D\chi e_{V_0+} \right\|_{L^2} &= \left\| \langle\eta\rangle\widehat{\chi e_{V_0+}}(\eta;k) \right\|_{L^2_\eta} \\
 &\leq \left\| \chi(x) e_{V_0+}(x;k) \right\|_{L^2_x} \ + \ \left\| \partial_x\left(\chi(x) e_{V_0+}(x;k)\right)\right\|_{L^2_x}\ \le\ \left\| \chi(x) \langle x \rangle \right\|_{L^2_x}.
\end{align*}
Therefore, thanks to Propositions~\ref{prop:TQ} and~\ref{prop:TR}, and using~\eqref{form1-LS}, one has for $|||Q|||$ small enough,
\begin{equation}\label{eqn:finL2-LS} f_{rem}\in L^2 \ \ \ \mbox{ and }\ \ \ \|f_{rem}\|_{L^2}\ \leq\ C_\chi\ |||Q|||^3.\end{equation}
The following pointwise bound can be also be deduced 
\begin{align*}
 \left|\D^{-1} f_{rem} \right |\ &\leq \left|\int_\eta \langle\eta\rangle^{-1}\widehat{f_{rem}}(\eta) e^{i\eta x} \right |\ \leq \| \langle\eta\rangle^{-1}\|_{L^2_\eta}\ \left\|f_{rem}\right\|_{L^2},
\end{align*}
which implies
\begin{equation}\label{eqn:tr-bound-LS} \left|\chi^{-1}(x)\D^{-1} f_{rem} (x) \right| \leq C_\chi\ \chi^{-1}(x)\ |||Q|||^3 .\end{equation}

From~\eqref{form2-LS} we have that $t_{rem}(k)$ is the complex number for which 
\[ \lim_{x\to\infty}\left(\ \chi^{-1}(x)\D^{-1} f_{rem} \ - \ t_{rem}(k)e^{ikx}\ \right)\ = \ 0.\]

We now use the decay properties of the potential $V$ to estimate the magnitude of $t_{rem}(k)$ for $|||Q|||$ small.

\medskip

{\bf Case 1:\ $V$ has compact support}\\ Assume $\supp V \subset [-M,M],\ M>0$. Using the explicit representation of $R_{V_0}$,~\eqref{eqn:RV0}, for $x>M$ we have:
\begin{align*}
 -R_{V_0}(k)Q\ e_{V_0+}(x;k)\ &=\  \frac{1}{2ik\ t_0^{hom}}\ \int_{-\infty}^{x} Q(\zeta)\ e_{V_0+}(\zeta;k)\ e_{V_0-}(\zeta;k)e_{V_0+}(x;k)\ d\zeta\ \\
 &\ \ \ \ \ +\ 
 \frac{1}{2ik\ t_0^{hom}}\int_{x}^{+\infty} Q(\zeta)\ e_{V_0+}(\zeta;k)\ e_{V_0+}(\zeta;k)e_{V_0-}(x;k)\ d\zeta \\
 &= \ \frac{1}{2ik\ t_0^{hom}}\ \int_{-\infty}^{+\infty} Q(\zeta)\ e_{V_0+}(\zeta;k)\ e_{V_0-}(\zeta;k)e_{V_0+}(x;k)\ d\zeta \\
 &= \ \frac{ t_1[Q]}{t_0^{hom}} e_{V_0+}(x;k)\ =\ t_1[Q]\ e^{ikx}.
\end{align*}
Similarly, for the quadratic in $Q$-term we have
 \begin{equation}
 R_{V_0}(k)\ Q\ R_{V_0}(k)\ Q\ e_{V_0+}(x;k)\ =\ t_2[Q,Q]\ e^{ikx}.
 \nn\end{equation}
 Therefore,
 \begin{equation}
 u_s\ =\ t_1[Q]\ e^{ikx}\ +\ t_2[Q,Q]\ e^{ikx}\ +\ t_{rem}\ e^{ikx},
 \nn\end{equation}
 where for $x>M$
\[\chi^{-1}\D^{-1} f_{rem} (x) \ = \ t_{rem}(k)e^{ikx}.\]
Therefore, using the pointwise bound~\eqref{eqn:tr-bound-LS} we have
\[ |t_{rem}(k)|\ \leq \ C_\chi \chi^{-1}(M) \ |||Q|||^3\ =\ \mathcal{O}(|||Q|||^3). \]

\medskip

{\bf Case 2:\ $V$ is exponentially decreasing}\\
Assume $|V_0(x)|+|Q(x)|\leq C e^{-\alpha |x|}$ for some $C,\ \alpha>0$ and $x>M$.
As in the first case, the formula for the resolvent~\eqref{eqn:RV0} leads to 
\begin{align*}
 &-R_{V_0}(k)Q\ e_{V_0+}(x;k)\ =\ t_1[Q]\ e^{ikx} \ +\ \frac{t_1[Q]}{t_0^{hom}}
 \left( e_{V_0+}(x;k) - t_0^{hom} e^{ikx} \right)\\
 & \quad +\frac{1}{2ik\ t_0^{hom}}\int_{x}^{+\infty} Q(\zeta) e_{V_0+}(\zeta;k) \Big( e_{V_0+}(\zeta;k)e_{V_0-}(x;k) -e_{V_0+}(x;k)e_{V_0-}(\zeta;k)\Big)\ d\zeta. 
\end{align*}
Using~\eqref{eqn:m-bounds}, one can easily bound for $x>M$
\begin{align*}
 \left|\frac{1}{2ik\ t_0^{hom}}\int_{x}^{+\infty} Q(\zeta)\ e_{V_0+}(\zeta;k)\ ( e_{V_0+}(\zeta;k)e_{V_0-}(x;k)-e_{V_0+}(x;k)e_{V_0-}(\zeta;k))\ d\zeta \right| &\\
  \ \ \ \ \leq C\ \int_{x}^{+\infty} C e^{-\alpha |\zeta|} \langle \zeta \rangle \ d\zeta\ \leq \ C' e^{-\alpha/2 |x|}.&
\end{align*}
Now, we use the estimate~\eqref{eqn:m-bound1}
\[|m_+(x;k)-1|\leq \frac{1+\max(-x,0)}{1+|k|} \int_x^\infty (1+|s|)|V_0(s)| ds,\]
so that $|e_{V_0+}(x;k)-t_0^{hom}e^{ikx}|\leq Ce^{-\alpha/2 x}$ for $x>M$. Finally, one obtains 
\begin{align*}
\left| R_{V_0}(k)Q\ e_{V_0+}(x;k)\ -t_1[Q]\ e^{ikx} \right| \ \leq \ Ce^{-\alpha/2 x}.
\end{align*}

A similar estimate holds for $u_s \ = \ e_{V+}(x;k)-e_{V_0+}(x;k)$, and for the quadratic term $R_{V_0}(k)\ Q\ R_{V_0}(k)\ Q\ e_{V_0+}(x;k)$. Therefore, for $x>M$ we have
\[\left| \chi^{-1}\D^{-1} f_{rem} (x) \ - \ t_{rem}(k)e^{ikx}\right| \leq \ C e^{-\alpha/2 x}. \]

 Again the pointwise bound~\eqref{eqn:tr-bound-LS} implies, for $x>M$, that
\[ |t_{rem}(k)|\ \leq \ C_\chi \chi^{-1}(x)|||Q|||^3 \ + \ C e^{-\alpha/2 x}. \]
Finally, choosing $x=-\frac6\alpha\ln|||Q|||$, one has for $|||Q|||$ small enough,
\[|t_{rem}(k)| \leq\ C|||Q|||^3(1+\langle \ln |||Q||| \rangle ) \leq C|||Q|||^{3-}. \]

\medskip

{\bf Case 3:\ $\langle x \rangle^{\rho+1}V_0(x)\in L^{1}(\RR)$ and $\langle x \rangle^\rho Q(x)\in L^{2}(\RR)$, with $\rho>8$}\\
We use again the formula of the resolvent~\eqref{eqn:RV0}:
\begin{align*}
 &-R_{V_0}(k)Q\ e_{V_0+}(x;k)\ =\ t_1[Q]\ e^{ikx} \ +\ \frac{t_1[Q]}{t_0^{hom}}
 \left( e_{V_0+}(x;k) - t_0^{hom} e^{ikx} \right)\\
 & \quad +\frac{1}{2ik\ t_0^{hom}}\int_{x}^{+\infty} Q(\zeta) e_{V_0+}(\zeta;k) \Big( e_{V_0+}(\zeta;k)e_{V_0-}(x;k) -e_{V_0+}(x;k)e_{V_0-}(\zeta;k)\Big)\ d\zeta. 
\end{align*}
Using the estimate~\eqref{eqn:m-bound1} leads to
\[|e_{V_0+}(x;k) - t_0^{hom}(k) e^{ikx}|\leq C \int_x^\infty \frac1{(1+|s|)^{\rho}}(1+|s|)^{\rho+1}|V_0(s)|\ ds \leq C \frac1{\langle x \rangle^{\rho}} \|V_0\|_{L^{1,\rho+1}}.\]
Therefore, one has 
\begin{align*}
 \left|\frac{1}{2ik\ t_0^{hom}}\int_{x}^{+\infty} Q(\zeta)\ e_{V_0+}(\zeta;k)\ ( e_{V_0+}(\zeta;k)e_{V_0-}(x;k)-e_{V_0+}(x;k)e_{V_0-}(\zeta;k))\ d\zeta \right| &\\
 \leq C \int_{x}^{+\infty} |Q(\zeta)| \frac1{\langle \zeta \rangle^{\rho}} \left(\frac{\langle x \rangle}{\langle \zeta \rangle^{\rho}}+\frac{\langle \zeta \rangle}{\langle x \rangle^{\rho}}\right) \ d\zeta \leq \frac{C}{\langle x \rangle^{\rho}} \left\|\frac1{\langle \zeta \rangle^{\rho-1}}\right\|_{L^2_\zeta} \left\|Q(\zeta)\right\|_{L^2_\zeta},&
\end{align*}
from which we deduce 
\begin{align*}
\left| R_{V_0}(k)Q\ e_{V_0+}(x;k)\ -t_1[Q]\ e^{ikx} \right| \leq \frac{C}{\langle x \rangle^{\rho}} \left(\|V_0\|_{L^{1,\rho+1}} +\left\|\frac1{\langle \zeta \rangle^{\rho-1}}\right\|_{L^2_\zeta} \left\| Q\right\|_{L^2}\right).
\end{align*}
Similar estimates hold for $u_s \ = \ e_{V+}(x;k)-e_{V_0+}(x;k)$, and for the quadratic term $R_{V_0}(k)\ Q\ R_{V_0}(k)\ Q\ e_{V_0+}(x;k)$. Therefore, for $x>M$, one has
\[\left| \chi^{-1}\D^{-1} f_{rem} (x) \ - \ t_{rem}(k)e^{ikx}\right| \leq \ \frac{C}{\langle x \rangle^{\rho}} \|V_0\|_{L^{1,\rho+1}}. \]

Since $\chi(x)=\langle x \rangle^{-\alpha}$ with $\alpha > 4$, the pointwise bound~\eqref{eqn:tr-bound-LS} yields
\[t_{rem}(k)\ \leq \ C_\chi \langle x \rangle^{\alpha}|||Q|||^3 \ + \ C \frac1{\langle x \rangle^{\rho}}, \]
so that choosing $x=|||Q|||^{-3/(\rho+\alpha)}$, which tends to infinity as $|||Q|||$ tends to $0$, one has 
\[ |t_{rem}(k)|\ \leq\ C\ |||Q|||^{\frac{3\rho}{\rho+\alpha}}.\]
It follows that with $\alpha>4$ and $\rho>2\alpha$, one has
\[ |t_{rem}(k)|\ =\ \mathcal{O}(|||Q|||^\beta),\ \ 2<\beta\equiv\frac{3\rho}{\rho+\alpha}.\]
This completes the proof.
\end{proof}

\subsection{Completion of the proof of Theorem~\ref{thm:Texpansion}}
\label{sec:Texpansion}
In this section, we show how to derive the corrected multi-scale / homogenization expansion of section~\ref{sechomo} from the rigorous results of the previous section with a potential $V=V_0+q_\epsilon$ satisfying Hypotheses~{\bf (V)}, and using Proposition~\ref{prop:Tqeps-small}. Theorem~\ref{thm:Texpansion} follows then as a direct consequence. 
%%%

\noindent{\bf The small $\epsilon$ asymptotics of $t_1[q_\epsilon]$:}\medskip

We use the decomposition of $q_\epsilon$ in Fourier series in $y$ 
\[q_\epsilon(x)=q\left(x,\frac{x}\epsilon\right)=\sum_{|j|\geq 1} q_j(x) e^{2i\pi j(x/\epsilon)},\] that we plug into $t_1[q_\epsilon]$, given in~\eqref{t1Q}:
\[t_1[q_\epsilon]\ =\ \frac{1}{2ik} \sum_{|j|\geq 1} t_1[q_\epsilon]^j \mbox{, with } t_1[q_\epsilon]^j =\int_{-\infty}^{+\infty} q_j(\zeta)e_{V_0+}(\zeta;k) e_{V_0-}(\zeta;k) e^{2i\pi j(\zeta/\epsilon)}\ d\zeta.\]
We assume that $q_j$ is piecewise $C^3$, so that there exists $-\infty=a_0<a_1<\dots<a_M<a_{M+1}=\infty$, such that $q_j\in C^3(a_l,a_{l+1})$. Then, one has
\begin{align*}
t_1^{j,l}[q_\epsilon] &= \frac{1}{2ik}\int_{a_l}^{a_{l+1}} q_j(\zeta) e_{V_0+}(\zeta;k) e_{V_0-}(\zeta;k) e^{2i\pi j(\zeta/\epsilon)} \ d\zeta \\
&= \frac{-1}{2ik} \int_{a_l}^{a_{l+1}} \partial_\zeta(q_j(\zeta) e_{V_0+}(\zeta;k) e_{V_0-}(\zeta;k) ) \frac{\epsilon}{2i\pi j } e^{2i\pi j(\zeta/\epsilon)} \ d\zeta +b_1^{j,l} \\
&= \frac{1}{2ik}\int_{a_l}^{a_{l+1}} \partial_\zeta^2 (q_j(\zeta) e_{V_0+}(\zeta;k) e_{V_0-}(\zeta;k) ) \left(\frac{\epsilon}{2i\pi j }\right)^2 e^{2i\pi j(\zeta/\epsilon)} \ d\zeta+b_1^{j,l}+b_2^{j,l},
\end{align*}
with the following boundary terms
\[\begin{array}{rl}
 b_1^{j,l}&=\frac{-\epsilon}{4k \ \pi j }\left(q_j(a_{l+1}^-) e_{V_0+}(a_{l+1};k) e_{V_0-}(a_{l+1};k) ) e^{2i\pi j(a_{l+1}/\epsilon)}\right.\\
 & \ \ \ \ \left.- q_j(a_l^+) e_{V_0+}(a_l;k) e_{V_0-}(a_l;k) ) e^{2i\pi j(a_l/\epsilon)}\right),\\
 b_2^{j,l}&=\frac{-i\ \epsilon^2}{8k\ \pi^2 j^2 } \left(\left.\partial_\zeta (q_j(\zeta) e_{V_0+}(\zeta;k) e_{V_0-}(\zeta;k) )\right|_{\zeta={a_{l+1}^-}} e^{2i\pi j(a_{l+1}/\epsilon)}\right. \\ &\ \ \ \ \left. - \left.\partial_\zeta (q_j(\zeta) e_{V_0+}(\zeta;k) e_{V_0-}(\zeta;k) )\right|_{\zeta={a_{l}^+}} e^{2i\pi j(a_{l}/\epsilon)} \right).
\end{array}\]
Now, one has
\[\begin{array}{r}
\partial_x^2\big(q_j(x)e_{V_0+}(x;k)e_{V_0-}(x;k)\big) = \dfrac{d^2q_j}{dx^2}(x)e_{V_0+}(x;k) e_{V_0-}(x;k) \hfill \\
 \qquad\qquad+\ 2 \dfrac{dq_j}{dx}(x)\partial_x(e_{V_0+}(x;k)e_{V_0-}(x;k) )\ +\ 2 q_j(x)\partial_x e_{V_0+}(x;k)\partial_x e_{V_0-}(x;k) \\
 +\ q_j(x)\big((\partial_x^2 e_{V_0+}(x;k)) e_{V_0-}(x;k)\ +\ e_{V_0+}(x;k)\partial_x^2 e_{V_0-}(x;k)\big).
\end{array}\]
The first three terms are piecewise-$C^1$, so that oscillatory integrals predict that
\begin{align} \int_{a_l}^{a_{l+1}} \Big(\frac{d^2q_j}{d\zeta^2}(\zeta)e_{V_0+}(\zeta;k) e_{V_0-}(\zeta;k) +\frac{dq_j}{d\zeta}(\zeta)\partial_\zeta(e_{V_0+}(\zeta;k) e_{V_0-}(\zeta;k)) & \nn \\
 \ \ \ +2q_j(\zeta)\partial_\zeta e_{V_0+}(\zeta;k)\partial_\zeta e_{V_0-}(\zeta;k)\Big) e^{2i\pi j(\zeta/\epsilon)} \ d\zeta & = \mathcal{O}(\epsilon).
\label{first3}\end{align}
For the fourth term, we use the fact that $e_{V_0+}$ and $e_{V_0-}$ satisfy $\big(-\frac{d^2}{d x^2}+V_0-k^2\big)u=0$, so that one has, with $\Omega_j \ = \ \{x_0,\dots,x_{N-1}\}\cap (a_j,a_{j+1})$,
\begin{align*}
t_1^{j,l}[q_\epsilon] %\\
& =\frac{i\ \epsilon^2}{8k \pi^2j^2} \sum_{x_i\in\Omega_j} 2 c_i q_j(x_i)e_{V_0+}(x_i;k)e_{V_0-}(x_i;k)e^{\frac{2i\pi j\ x_i}\epsilon} +b_1^{j,l}+b_2^{j,l}+\mathcal{O}\left(\epsilon^3/j^2\right) \\
& = \frac{i\ \epsilon^2}{8k\ \pi^2j^2} \sum_{x_i\in\Omega_j} \left[\partial_x (q_j(x) e_{V_0+}(x;k) e_{V_0-}(x;k) )\right]_{a_j} e^{\frac{2i\pi j\ x_i}\epsilon} +b_1^{j,l}+\mathcal{O}\left(\epsilon^3/j^2\right).
\end{align*}
Finally, we have $t_1[q_\epsilon]\ =\ \sum_{l=0}^{M-1}\ \sum_{|j|\geq 1}\ t_1^{j,l}[q_\epsilon]\ +\ \mathcal{O}(\epsilon^3)$, and
one recovers immediately terms of the expansion of Theorem~\ref{thm:Texpansion}:
\[ \begin{array}{l}
\displaystyle   \sum_{l=0}^{M-1} \sum_{|j|\geq 1} b_1^{j,l} = \epsilon t_1^\epsilon \ \ \ {\rm and}\\
\displaystyle \sum_{l=0}^{M-1} \sum_{|j|\geq 1} \frac{i\ \epsilon^2}{8k\ \pi^2j^2} \sum_{x_i\in\Omega_j} \Big[\partial_x (q_j(x) e_{V_0+}(x;k) e_{V_0-}(x;k) )\Big]_{a_j} e^{2i\pi j(x_i/\epsilon)}= \epsilon^2 t_2^\epsilon +\mathcal{O}(\epsilon^3), \end{array}
\]
so that\ $ t_1[q_\epsilon] = \epsilon t_1^\epsilon(k) +\epsilon^2 t_2^\epsilon(k) + \mathcal{O}(\epsilon^3)$.
\medskip

\noindent{\bf The small $\epsilon$ asymptotics of $t_2[q_\epsilon,q_\epsilon]$:}\medskip

Let us assume that $\zeta$ is fixed outside $\supp V_{sing}$, and outside the discontinuities of $q_j,\partial_xq_j$ (this particular case arises for a finite number of values of $\zeta$, and therefore brings no contribution to the transmission coefficient, when integrated).
Then integrating by part leads to the following expansion for $\epsilon$ small:
\begin{align*}
I_l^j(\zeta) & \ \equiv\ -e_{V_0+}(\zeta;k)\ \int_{-\infty}^{\zeta} \partial_z \big(q_j(z) e_{V_0+}(z;k)e_{V_0-}(z;k)\big)\ \frac{\epsilon}{2i\pi j}e^{2i\pi j(z/\epsilon)}\ dz\\
& \ =\ \frac{\epsilon ^2}{4 \pi^2j^2} \ e_{V_0+}(\zeta;k)\ \left( \ -\int_{-\infty}^{\zeta} \partial_z^2 \big(q_j(z) e_{V_0+}(z;k)e_{V_0-}(z;k)\big)\ e^{2i\pi j(z/\epsilon)} \ dz \right.\\
& \ \ \ \ \left. + \ \left[\partial_z \big(q_j(z) e_{V_0+}(z;k)e_{V_0-}(z;k)\big)\ e^{2i\pi j(./\epsilon)}\right]_{-\infty}^{\zeta} \right).
 \end{align*}
% \begin{align*}
% &e_{V_0+}(\zeta;k) \int_{-\infty}^{\zeta} q_j(z)e_{V_0+}(z;k) e_{V_0-}(z;k)e^{2i\pi j(z/\epsilon)} \ dz+ e_{V_0-}(\zeta;k) \int_{\zeta}^{+\infty} q_j(z)e_{V_0+}(z;k) e_{V_0+}(z;k)e^{2i\pi j(z/\epsilon)} \ dz \\
% &\ \ \ =\ -e_{V_0+}(\zeta;k)\ \int_{-\infty}^{\zeta} \partial_z \big(q_j(z) e_{V_0+}(z;k)e_{V_0-}(z;k)\big)\ \frac{\epsilon}{2i\pi j}e^{2i\pi j(z/\epsilon)}\ dz\\
% &\ \ \ \ \ \ \ \ -\ e_{V_0-}(\zeta;k)\ \int_\zeta^{+\infty} \partial_z \big(q_j(z) e_{V_0+}(z;k)e_{V_0-}(z;k)\big)\ \frac{\epsilon}{2i\pi j}e^{2i\pi j(z/\epsilon)}\ dz \\
% &\ \ \ =\ \frac{-\epsilon ^2}{4 \pi^2j^2} \left( e_{V_0+}(\zeta;k)\ \int_{-\infty}^{\zeta} \partial_z^2 \big(q_j(z) e_{V_0+}(z;k)e_{V_0-}(z;k)\big)\ e^{2i\pi j(z/\epsilon)} \ dz \right.\\
% &\ \ \ \ \ \ \ \ +\left. \ e_{V_0-}(\zeta;k)\ \int_\zeta^{+\infty} \partial_z^2 \big(q_j(z) e_{V_0+}(z;k)e_{V_0-}(z;k)\big)e^{2i\pi j(z/\epsilon)} \ dz \right) \\
% & \ \ \ \ \ \ \ \ + \frac{\epsilon ^2}{4 \pi^2j^2} \left(\ e_{V_0+}(\zeta;k)\ \left[\partial_z \big(q_j(z) e_{V_0+}(z;k)e_{V_0-}(z;k)\big)\ e^{2i\pi j(./\epsilon)}\right]_{-\infty}^{\zeta}\right.\\
% &\ \ \ \ \ \ \ \ \left. + \ e_{V_0-}(\zeta;k)\ \left[\partial_z \big(q_j(z) e_{V_0+}(z;k)e_{V_0-}(z;k)\big)\ e^{2i\pi j(./\epsilon)}\right]_\zeta^{+\infty}\ \right).
% \end{align*}
The first term, treated as previously and using the fact that the functions $q_j$, $e_{V_0+}$, and $e_{V_0-}$ are piecewise-$C^3$, brings a contribution of order $\mathcal{O}(\epsilon^3)$.

Now, using the same analysis on $I_r^j(\zeta)$ and the Wronskian identity~\eqref{wronskian}, one obtains the following expansion for the $-d\zeta$ integrand of~\eqref{t2QQ}:
\begin{align*}
 I_l^j(\zeta) + I_r^j(\zeta) &\ = \ \Big( e_{V_0+}(\zeta;k)\partial_\zeta (q_j(\zeta) e_{V_0+}(\zeta;k)e_{V_0-}(\zeta;k)) \\
 & \ \ \ \ \ - \ e_{V_0-}(\zeta;k)\partial_\zeta (q_j(\zeta) e_{V_0+}(\zeta;k)e_{V_0-}(\zeta;k)) \Big)\ +\ \mathcal{O}(\epsilon^3)\\
 & \ = \ -2ik \ t_0^{hom}\ q_j(\zeta)\ e_{V_0+}(\zeta;k) \ e^{2i\pi j(\zeta/\epsilon)}\ +\ \mathcal{O}(\epsilon^3).
\end{align*}

Therefore, one has
\begin{align}
t_2[q_\epsilon ,q_\epsilon ]\ & = \frac{1}{2ik}\frac1{-2ik\ t_0^{hom}} \sum_{|j|\ge1}\frac{\epsilon ^2}{4 \pi^2j^2} \int_{-\infty}^{+\infty} \sum_{|m|\ge1} q_m(\zeta) e^{2i\pi m(\zeta/\epsilon)}e_{V_0-}(\zeta;k) \nn \\
& \ \ \ \ \left( -2ik \ t_0^{hom} q_j(\zeta)e_{V_0+}(\zeta;k) e^{2i\pi j(\zeta/\epsilon)} \right) \ d\zeta\nn \ +\ \mathcal{O}(\epsilon^3) \\
&=\epsilon^2\frac{i }{8k\pi^2} \int_{-\infty}^{+\infty} \sum_{|j|\ge1} \frac{q_{-j}(\zeta)q_{j}(\zeta)}{j^2} e_{V_0-}(\zeta;k) e_{V_0+}(\zeta;k) \ d\zeta \ +\ \mathcal{O}(\epsilon^3).
\end{align}

One recovers finally:\ $t_2[q_\epsilon,q_\epsilon]  = \epsilon^2 t_2^{hom} +\mathcal{O}(\epsilon ^3).$

\medskip

\noindent{\bf Estimate of $t_{rem}^\epsilon$:}\medskip

Using Proposition~\ref{prop:Tqeps-small} with Theorem~\ref{prop:convergence-asymptotic-expansion-LS} yields:
\begin{proposition}\label{prop:convergence-asymptotic-expansion}
Let $K$ denote a compact subset of $\RR$, satisfying Hypothesis~\textbf{(G)}. Introduce for $k\in K$
\begin{equation}
t_{rem}^\epsilon(k) \ \equiv \ t^\epsilon(k) - t^{hom}_0(k)-\epsilon t_1^\epsilon(k)-\epsilon^2\left(\ t^{hom}_2(k)+t_2^\epsilon(k)\ \right).
\label{trem-def}
\end{equation}
 Then we have 
\begin{enumerate}
 \item If $V$ has compact support, then $t_{rem}^\epsilon(k)\ =\ \mathcal{O}(\epsilon^3)$.
 \item If $V$ is exponentially decreasing, then $t_{rem}^\epsilon(k)\ =\ \mathcal{O}(\epsilon^{3-})$.
 \item If $\langle x \rangle^\rho V_0\in L^{1}$, $\rho>9$, then there exists $2<\beta<3$ such that $t_{rem}^\epsilon(k)\ =\ \mathcal{O}(\epsilon^{\beta})$.
\end{enumerate}
\end{proposition}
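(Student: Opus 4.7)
The plan is to bootstrap from Theorem~\ref{prop:convergence-asymptotic-expansion-LS} by setting $Q=q_\epsilon$ and converting the bounds in $|||Q|||$ into bounds in $\epsilon$ via Proposition~\ref{prop:Tqeps-small}, then to absorb the remaining discrepancy into the $\mathcal{O}(\epsilon^3)$ errors already computed above for the small-$\epsilon$ asymptotics of $t_1[q_\epsilon]$ and $t_2[q_\epsilon,q_\epsilon]$. First I would write the telescoping identity
\begin{equation*}
t_{rem}^\epsilon(k)\ =\ \Big(\,t^\epsilon(k)-t_0^{hom}(k)-t_1[q_\epsilon]-t_2[q_\epsilon,q_\epsilon]\,\Big)\ +\ \Big(\,t_1[q_\epsilon]-\epsilon t_1^\epsilon(k)-\epsilon^2 t_2^\epsilon(k)\,\Big)\ +\ \Big(\,t_2[q_\epsilon,q_\epsilon]-\epsilon^2 t_2^{hom}(k)\,\Big),
\end{equation*}
so that the task reduces to bounding three separate pieces.

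The first bracket is exactly $t_{rem}(k;q_\epsilon)$ in the notation of Theorem~\ref{prop:convergence-asymptotic-expansion-LS}. Since $q_\epsilon$ inherits support/decay from $V$, I apply the three cases of that theorem, which yield $t_{rem}(k;q_\epsilon)=\mathcal{O}(|||q_\epsilon|||^3)$, $\mathcal{O}(|||q_\epsilon|||^{3-})$, and $\mathcal{O}(|||q_\epsilon|||^\beta)$ with $2<\beta<3$, respectively. Next, Proposition~\ref{prop:Tqeps-small} (which requires the decay/regularity hypothesis on $q_\epsilon$ in Hypotheses~\textbf{(V)}) gives $|||q_\epsilon|||=\|T_{q_\epsilon}\|_{L^2\to L^2}=\mathcal{O}(\epsilon)$. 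Substituting into the three bounds converts them into $\mathcal{O}(\epsilon^3)$, $\mathcal{O}(\epsilon^{3-})$ and $\mathcal{O}(\epsilon^\beta)$, respectively. The hypothesis $\langle x\rangle^\rho V_0\in L^1$ with $\rho>9$ in case~(3) of the present proposition is precisely what is needed to invoke case~(3) of Theorem~\ref{prop:convergence-asymptotic-expansion-LS} (after checking that the implied $L^{1,\rho+1}$ hypothesis on $V_0$ and the $L^2$ hypothesis with weight on $q_\epsilon$ follow from our assumption on the exponent, using that $q_\epsilon$ decays at rate $\rho>8$).

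The remaining two brackets have already been handled in section~\ref{sec:Texpansion}: the oscillatory-integral computation of $t_1[q_\epsilon]$ via piecewise integration-by-parts, exploiting the equation $(-\partial_x^2+V_0-k^2)e_{V_0\pm}=0$ to re-express the second $x$-derivatives of $e_{V_0\pm}$ in terms of (i) bulk terms combining into the $\mathcal{O}(\epsilon)$-vanishing integral~\eqref{first3} and (ii) jumps at points of $\supp V_{sing}$, gave $t_1[q_\epsilon]=\epsilon t_1^\epsilon(k)+\epsilon^2 t_2^\epsilon(k)+\mathcal{O}(\epsilon^3)$; the analogous Wronskian-based computation of $t_2[q_\epsilon,q_\epsilon]$ gave $t_2[q_\epsilon,q_\epsilon]=\epsilon^2 t_2^{hom}(k)+\mathcal{O}(\epsilon^3)$. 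These $\mathcal{O}(\epsilon^3)$ remainders are absorbable into each of cases (1)--(3), so combining the three estimates completes the proof.

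The main obstacle is verifying \textbf{uniformity in $k\in K$} for the three pieces above, especially the summability in $|j|\ge 1$ of the oscillatory-integral remainders appearing in the $t_1[q_\epsilon]$ computation (which brings factors $j^{-2}$ and derivative terms in $q_j$) and the careful accounting of boundary contributions at the interfaces $\{a_l\}\cup\supp V_{sing}$. For this I would lean on the summability estimates on $q_j$ and its derivative supplied by Hypotheses~\textbf{(V)}, together with the uniform bounds on $e_{V_0\pm}(x;k)$ and their spatial derivatives from~\eqref{eqn:m-bounds} in Appendix~\ref{sec:Jost}, which are uniform over the compact set $K$ excluding $k=0$ in the non-generic case (Hypothesis~\textbf{(G)}).
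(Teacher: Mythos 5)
Your proposal is correct and follows essentially the same route as the paper: the paper proves this proposition precisely by combining Theorem~\ref{prop:convergence-asymptotic-expansion-LS} (applied to $Q=q_\epsilon$) with the bound $|||q_\epsilon|||=\mathcal{O}(\epsilon)$ of Proposition~\ref{prop:Tqeps-small}, together with the section~\ref{sec:Texpansion} computations $t_1[q_\epsilon]=\epsilon t_1^\epsilon+\epsilon^2 t_2^\epsilon+\mathcal{O}(\epsilon^3)$ and $t_2[q_\epsilon,q_\epsilon]=\epsilon^2 t_2^{hom}+\mathcal{O}(\epsilon^3)$, exactly as in your telescoping decomposition. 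Your added remarks on uniformity in $k\in K$ and on matching the decay exponents in case~(3) are consistent with the paper's hypotheses and do not alter the argument.
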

The proof of Theorem~\ref{thm:Texpansion} is now complete.
%APPENDIX

\appendix
\section{The numerical computations}
In this section we outline the numerical method we used to obtain results displayed in figures~\ref{figT-T0} and~\ref{figSmoothDirac}. 

We approach the computation of $t(k)$, the transmission coefficient associated with the potential $V(x)$, by 
 numerical approximation of the function 
\[u(x;k)\ \equiv \ \frac1{t(k)}e_{V-}(x;k),\]
where $e_{V-}(x;k)$ denotes the distorted plane wave generate by an incoming wave from positive infinity; see~\eqref{eqn:RT}.
We rewrite the equation
\[
\left(-\frac{d^2}{d x^2}+V(x)-k^2\right)u(x;k)\ =\ 0,
\]
equivalently in terms of the variable $U(x;k) \equiv (u(x;k), \partial_x u(x;k))^T$ as the first order system
\begin{equation}\label{schrod-sys}
\frac{d}{dx}U \ =\ \begin{pmatrix} 0 & 1 \\ V(x)-k^2 & 0 \end{pmatrix} U.
\end{equation}
Note that if $V$ is assumed to have compact support ($\supp V \subset [-M,M]$ with $M>0$), then 
\begin{align}
 \label{U-left} U(x;k) \ &\equiv \ \begin{pmatrix} e^{-ikx} \\ -ike^{-ikx} \end{pmatrix}\ \ \ \mbox{for } x<-M, \\
 \label{U-right} U(x;k) \ &\equiv \ \begin{pmatrix} \frac{r_r(k)}{t(k)}e^{ikx}+\frac1{t(k)}e^{-ikx} \\ \frac{ik\ r_r(k)}{t(k)}e^{ikx}-\frac{ik}{t(k)}e^{-ikx} \end{pmatrix}\ \ \ \mbox{for } x>M.
\end{align}
Starting with the initial data given by~\eqref{U-left}, we numerically solve the system of first order ODEs defined by~\eqref{schrod-sys} up to $x>M$, and~\eqref{U-right} allows to recover the desired value of $t(k)$.
At the location of the singularities $x=x_j$, the jump conditions~\eqref{JumpCondition} allow to obtain $U(x+;k)$ from $U(x-;k)$ via a transfer matrix. Between the singularities, one approximatively solves~\eqref{schrod-sys} using for example Runge-Kutta formulae. We used the Matlab solver \verb+ode45+; see~\cite{SRmatlab:97} for more information about the Matlab ODE Suite. %Finally, one easily recovers the transmission coefficient $t(k)$ from~\eqref{U-right}.

\medskip 

We conclude this section by stating the precise functions and parameters used to obtain the plots displayed in figures~\ref{figT-T0} and~\ref{figSmoothDirac}.

\label{sec:numerics}
For the case when $V_0$ has singularities, as in the left and center panels of figure~\ref{figT-T0}, we set 
\[V_0=V_{sing}(x)\equiv 40 \left(\delta(x) + \delta(x-0.5) + \delta(x-1) \right).\]
Otherwise, we set 
\[V_0=V_{reg}(x)\equiv 40 \left(\delta_\rho(x) + \delta_\rho(x-0.5)+ \delta_\rho(x-1) \right),\]
with $\delta_\rho(x)\equiv \frac{1}{\rho\sqrt{\pi}} \mathrm{e}^{-x^2/\rho^2}$ the smoothed out approximation. One has $\rho=0.1$ for the right panels of figures~\ref{figT-T0} and~\ref{figSmoothDirac}, and respectively $\rho=0.01$ and $\rho=0.001$ for the center and left panels of figure~\ref{figSmoothDirac}.

 We set $q_\epsilon(x)\ =\ f(x) \sin(2\pi x/\epsilon)$, with $f(x)\equiv 0$ for $x\in\RR\setminus[-2/3;2/3]$, and elsewhere 
\[\left\{\begin{array}{l}
     f(x)=40\ \ \ \mbox{in the discontinuous case (left panel of figure~\ref{figT-T0}), or} \\
	 f(x)=40 e^{-\frac{x^2}{(x-2/3)(x+2/3)}}\ \ \ \mbox{in the smooth cases (all other panels).}
     \end{array}\right.\]
Finally, we set $k=5.5$, since it corresponds to a case where $t_0^{hom}(k)$ approaches unity when $V_0=V_{sing}$. %(see figure~\ref{fig:t(k)}). 

\section{The Jost solutions}\label{sec:Jost}
In this section, we provide a construction of the Jost solutions and a rigorous derivation of their properties, including bounds that are used in the proof of Proposition~\ref{prop:TR}, Appendix~\ref{proof-TR}.
We recall that by Definition~\ref{def:JostSolutions}, the Jost solutions are the unique solutions $f_\pm(x;k)$ of
 \begin{equation}
 \left(\ H_W\ -\ k^2\ \right)u\ \equiv\ \left(-\frac{d^2}{d x^2}+W(x)-k^2\right) u =0.
 \label{schrod-eqn-Jost}
 \end{equation}
such that $f_\pm(x;k)=e^{\pm ikx}\ m_\pm(x;k)$ and 
 \[ \lim_{x\to \pm\infty} m_\pm(x;k) \ = \ 1.\]
 The existence of Jost solutions for regular potentials $W\in L^{1,3/2+}(\RR)$ is established in~\cite{DT:79}. The generalization to potentials allowing a singular component
 \begin{align*}
 W&=W_{reg}+W_{sing}, \ \ {\rm with}\\
 & \ W_{reg}\ \in\ L^{1,3/2+}(\RR), \\
 & \ W_{sing}\ =\ \sum_{j=0}^{N-1} c_j \ \delta (x-x_j),\ \ {\rm where}\ \ 
 c_j, x_j \in \RR,\ \ x_j < x_{j+1}.
\end{align*}
 can be found in~\cite{DMW:10}. 
 
 \medskip
 
As an intermediate step of the proof, one introduces an equivalent definition of the Jost solution, as solutions of integral equations.
In the case where $W$ is regular, one has
\begin{align}\label{eqn:integralequation}
m_+ (x;k) & = 1 + \int_x^\infty D_k (\zeta-x) W(\zeta) m_+ (\zeta;k) d\zeta, \\
m_- (x;k) & = 1 + \int_{-\infty}^x D_k (x-\zeta) W(\zeta) m_- (\zeta;k) d\zeta, \qquad 
D_k (x) = \int_0^x e^{2ik\zeta} d\zeta. \nn
\end{align}
If $W$ has regular and singular components, we work with a variant of equations~\eqref{eqn:integralequation}:
\begin{align*}
m_+ (x;k) & = 1 + \int_x^\infty D_k (\zeta-x) W(\zeta) m_+ (\zeta;k) d\zeta \ + \ \sum_{x_j>x} D_k (x_j-x)c_j m_+ (x_j;k), \\
m_- (x;k) & = 1 + \int_{-\infty}^x D_k (x-\zeta) W(\zeta) m_- (\zeta;k) d\zeta \ + \ \sum_{x_j<x} D_k (x_j-x)c_j m_+ (x_j;k).
\end{align*}
 From these integral equations, one deduces 
 \begin{align}\label{eqn:m-bound1}
|m_+(x;k)-1|&\leq \frac{1+\max(-x,0)}{1+|k|} \int_x^\infty (1+|s|)|W(s)| ds,\nn \\
|m_-(x;k)-1|&\leq \frac{1+\max(-x,0)}{1+|k|} \int_{-\infty}^{-x} (1+|s|)|W(s)| ds. 
\end{align}
Then, since $m_+$ satisfies 
\begin{align*}
 \partial_x m_+(x;k)&=\int_x^\infty e^{2ik(t-x)}W(t)m_+(t;k), \mbox{ and}\\
 \partial_k m_+(x;k)&=\int_x^\infty D_k(t-x) W(t)\partial_k m_+(t;k)\ + \ \int_x^\infty \partial_k D_k(t-x) W(t) m_+(t;k),
\end{align*}
one obtains easily the following uniform bounds
\begin{align}\label{eqn:m-bounds}
&|m_+(x;k)|\leq C \langle x \rangle,\quad &|\partial_x m_+(x;k)|\leq C, \nn \\
&|\partial_k m_+(x;k)|\leq C \langle x \rangle^2, \quad &|\partial_x\partial_k m_+(x;k)|\leq C \langle x \rangle, 
\end{align}
where $C$ is independent of $k$. The same bounds clearly hold for $m_-(x;k)$.

\section{Proof of Proposition~\ref{prop:TR}}
\label{proof-TR}
This Section is dedicated to the proof of Proposition~\ref{prop:TR}, namely
\[
T_{R_{V_0}}(k)\ \equiv\ \D\chi R_{V_0}(k)\chi\D\mbox{ is a bounded operator from }L^2\mbox{ to }L^2.
\]

This result has been proved by in~\cite{GW:05}, for $V_0\equiv 0$, and spatial dimensions $n=1,2,3$. We generalize this result in the one dimensional case for $V_0=V_{reg}+V_{sing}$ as in~\eqref{Vdecomp}, so that singularities in the potential are allowed.

Our proof requires the use of the generalized Fourier transform, described in terms of the distorted plane waves. We introduce
\[
\Psi (x;\zeta) \ = \ \frac{1}{\sqrt{2 \pi}} \left\{ \begin{array}{cc}
e_{V_0+} (x;\zeta) & \zeta \geq 0, \\
e_{V_0-} (x;-\zeta) & \zeta < 0 ,
\end{array} \right. \equiv \ \frac{1}{\sqrt{2 \pi}} \left\{ \begin{array}{cc}
t(\zeta) m_+(x;\zeta)e^{ix\zeta} & \zeta \geq 0, \\
t(-\zeta) m_-(x;-\zeta) e^{ix\zeta} & \zeta < 0 ,
\end{array} \right. 
\]
with $m_+(x;\zeta)\to 0$ as $x\to\infty$ and $m_-(x;\zeta)\to0$ as $x\to-\infty$. 

Then $\mathcal{F}$ and $\mathcal{F}^*$ the distorted Fourier transform and its adjoint are defined by
\begin{align*}
\begin{array}{rl}
 \mathcal{F} & : \begin{array}{ll}
L^2 &\to L^2 \\
\displaystyle \phi &\mapsto \mathcal{F}[\phi](\xi) \equiv \int_{-\infty}^{+\infty} \phi (x) \overline{\Psi (x,\xi)}\ dx,\end{array} \\
 \mathcal{F}^* & : \begin{array}{ll}
L^2 &\to L^2 \\
\displaystyle \Phi &\mapsto \int_{-\infty}^{+\infty} \Phi(\xi) \Psi (x,\xi)\ d\xi.\end{array}
\end{array}\end{align*}

One has the following property:
\begin{align*}
P_c \phi = \mathcal{F}^* \mathcal{F} \phi,
\end{align*}
where $P_c$ denotes the spectral projection onto the continuous spectral subspace associated with the operator 
\begin{equation}
H\equiv-\partial_x^2+V_0.
\label{Hdef}
\end{equation}

To construct a smoothing operator which commutes with functions of $H$, it is convenient to introduce,
 using the distorted plane wave spectral representation of $H$:
\begin{equation}
\DV\ f\ =\ ( I-\Delta+V_0)^{1/2}\ f\ =\ \int_\RR \langle\eta\rangle\mathcal{F}[f](\eta)\ \Psi(x;\eta)\ d\eta
\label{DVf}
\end{equation}

Therefore, one has 
\begin{align}
T_{R_{V_0}}= &\D \DV^{-1} \DV\chi R_{V_0}(k)\chi\DV\DV^{-1}\D \\
\equiv & \D \DV^{-1}\circ \tilde{T}_{R_{V_0}}\circ \DV^{-1}\D. \label{decomp}
\end{align}

There are thus three terms to estimate. 
In order to deal with $\D \DV^{-1}$ and $\DV^{-1}\D$, we introduce the classical wave operator, $W$ and its adjoint $W^*$, defined by
\begin{align}
\label{eqn:wpm}
W &\equiv s - \lim_{t \to \infty} e^{it H} e^{-it H_0},\\
W^* &\equiv s - \lim_{t \to \infty} e^{i t H_0} e^{-it H} P_c,
\label{eqn:wpmstar}
\end{align}
with $H\equiv-\partial_x^2+V_0$ and $H_0\equiv-\partial_x^2$.
The wave operators have the property to intertwine between the continuous part of $H$ and $H_0$, so that for any Borel function $f$:
\[f(H)P_c=W f(H_0) W^\ast.\]
Especially, one has $\DV=W\D W^\ast$, so that 
\begin{equation}
\label{DDV}
\D \DV^{-1}=\D W \D^{-1} W^\ast.
\end{equation}

Let us state the following result, that has been introduced in~\cite{Weder} and extended in~\cite{DMW:10} to potentials $V_0=V_{reg}+V_{sing}$ as in~\eqref{Vdecomp}, thus allowing Dirac delta functions:
\begin{lemma}\label{lem:Wbounded}
$W$ and $W^\ast$ have extensions to bounded operators on $H^k$, for $k=-1,0,1$.
\end{lemma}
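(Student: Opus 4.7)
\medskip

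My plan is to treat Lemma~\ref{lem:Wbounded} as essentially a synthesis of two ingredients: (i) Weder's stationary formula for $W$ in terms of the Jost solutions and its $L^p$-boundedness in the regular-potential setting, and (ii) the extension of Cook-type completeness arguments to potentials of the form $V_0 = V_{reg}+V_{sing}$ carried out in~\cite{DMW:10}. I would organize the argument in three cases, $k=0$, $k=1$, $k=-1$, and exploit the intertwining property throughout.

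First I would handle $k=0$. The operator $H=-\partial_x^2+V_0$, defined as the self-adjoint operator associated with the closed quadratic form
\[
q(u,u)\ =\ \int_\RR |\partial_x u|^2 \ dx\ +\ \int_\RR V_{reg}(x)|u|^2\ dx\ +\ \sum_{j=0}^{N-1} c_j\ |u(x_j)|^2,\qquad u\in H^1(\RR),
\]
is self-adjoint (see~\cite{DMW:10}), has no singular continuous spectrum, and, by Hypothesis~\textbf{(V)}, has no eigenvalues. The existence and $L^2$-boundedness of $W$ and $W^*$ as defined in~\eqref{eqn:wpm}-\eqref{eqn:wpmstar} then follows by the standard partial-isometry property: $W^*W=I$ on $L^2$ and $WW^*=P_c$. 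This handles $k=0$.

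For $k=1$, I would use the intertwining identity $f(H)W=W f(H_0)$, valid for any Borel $f$, applied to $f(\lambda)=(1+\lambda)^{1/2}$. Since $(1+H)^{1/2}$ is comparable to the $H^1$-norm (this is the point where~\cite{DMW:10} plays its role: the form domain of $H$ is $H^1(\RR)$, with $\|u\|_{H^1}^2 \sim q(u,u)+\|u\|_{L^2}^2$, because $V_{reg}\in L^{1,2}$ is form-small relative to $-\partial_x^2$ and the $\delta$-terms are controlled by Sobolev embedding $H^1\hookrightarrow C^0$), one obtains
\[
\|Wf\|_{H^1}\ \sim\ \|(1+H)^{1/2}Wf\|_{L^2}\ =\ \|W(1+H_0)^{1/2}f\|_{L^2}\ \le\ C\|(1+H_0)^{1/2}f\|_{L^2}\ =\ C\|f\|_{H^1},
\]
where in the penultimate step I use the $L^2$-boundedness of $W$ from the first paragraph. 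The same computation with $W^*$ in place of $W$, and using $W^*=W^*P_c$, yields $\|W^*f\|_{H^1}\le C\|f\|_{H^1}$. The main obstacle here is precisely the equivalence $(1+H)^{1/2}\sim(1-\Delta)^{1/2}$ as operators defined on $H^1$; this is where the hypothesis $V_{reg}\in L^{1,2}\subset L^1$ and the real-valuedness/non-negativity assumption $V^\epsilon\ge 0$ are used to ensure that the form $q$ is bounded and coercive on $H^1$.

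Finally, for $k=-1$ I would argue by duality. Since $H^{-1}=(H^1)^*$ with respect to the $L^2$-pairing, and since $W$ and $W^*$ are adjoints of each other on $L^2$, the $H^1$-boundedness of $W^*$ established in the previous paragraph gives, by transposition, the $H^{-1}$-boundedness of $W$; similarly for $W^*$. This completes the plan. The single genuine obstacle is the $H^1$ estimate in the second step, which is where the singular-potential extension of Weder's result from~\cite{DMW:10} is essential; the $L^2$ case and the duality step are then essentially formal.
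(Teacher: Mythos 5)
Your argument is essentially correct, but it is not the route the paper takes: the paper offers no proof of Lemma~\ref{lem:Wbounded} at all, simply invoking Weder's theorem on $W^{k,p}$-boundedness of the one-dimensional wave operators \cite{Weder} and its extension to potentials with Dirac-delta components in \cite{DMW:10}. Those results rest on the stationary representation of $W$ in terms of the Jost solutions and the scattering data, and they deliver boundedness on $W^{k,p}$ for all $1<p<\infty$ (with low-energy/genericity issues handled there). Your route is genuinely different and softer: $L^2$-boundedness is free from existence and asymptotic completeness (partial isometry, $W^*W=I$, $WW^*=P_c$); the $H^1$ case follows from the intertwining identity applied to $f(\lambda)=(\lambda+\lambda_0)^{1/2}$ together with the identification of the form domain of $H=-\partial_x^2+V_0$ with $H^1$ (both $V_{reg}\in L^{1,2}\subset L^1$ and the delta terms are infinitesimally form-bounded via $H^1\hookrightarrow L^\infty$); and the $H^{-1}$ case is duality. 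This works precisely because the lemma only asserts $L^2$-based Sobolev indices $k=-1,0,1$, for which the functional-calculus argument suffices, and it is all the paper actually uses in the proof of Proposition~\ref{prop:TR}; what it cannot reproduce is the $p\neq 2$ content of \cite{Weder,DMW:10}, which is simply not needed here. Two points of care: phrase the norm equivalence with $(\lambda_0+H)^{1/2}$ for $\lambda_0$ large rather than $(1+H)^{1/2}$ unless you first verify $H\ge0$ (positivity of the averaged potential $V_0$ does not follow verbatim from $V^\epsilon\ge0$ without an extra limiting argument in the fast variable), and note that your proof still borrows from \cite{DMW:10} (or classical trace-class scattering theory) the existence and completeness of $W$, $W^*$ for $V_0=V_{reg}+V_{sing}$ — so the reference is not eliminated, only its $W^{k,p}$ machinery.
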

Using this last result and the known fact that $\D^s$ is bounded from $H^k$ to $H^{k-s}$, we obtain directly from~\eqref{DDV} that
\[\D \DV^{-1} \mbox{ is bounded from }L^2 \mbox{ to }L^2.\]
Similarly, 
\[\DV^{-1} \D \mbox{ is bounded from }L^2 \mbox{ to }L^2.\]
\medskip

In order to deal with the last term of~\eqref{decomp}, we decompose $\tilde{T}_{R_{V_0}}$ as a sum of four operators, commuting $\D$.
\begin{align*}
\tilde{T}_{R_{V_0}} \equiv & \DV\chi R_{V_0}(k)\chi\DV \\
= & \left(\chi\DV +[\DV,\chi]\right)R_{V_0}(k)\left(\DV\chi+[\chi,\DV]\right)\\
= & \chi\DV R_{V_0}(k) \DV\chi + ([\DV,\chi])(R_{V_0}(k)\DV\chi)\\&+(\chi\DV R_{V_0}(k))([\chi,\DV]) + ([\DV,\chi])( R_{V_0}(k))([\chi,\DV]) \\
= & A_I + A_{II}^{(a)}+A_{II}^{(b)}+A_{III}.
\end{align*}
Each of these terms is proved to be bounded from $L^2$ to $L^2$.
We treat each term separately in Propositions~\ref{prop:A1},~\ref{prop:A2a},~\ref{prop:A2b}, and~\ref{prop:A3}.

\bigskip

\begin{proposition}\label{prop:A1} $A_{I}\ \equiv\ \chi\DV R_{V_0}(k) \DV\chi$ is bounded $L^2 \to L^2$, {\it i.e.}
\begin{equation}
\left\|\ \chi\DV R_{V_0}(k) \DV\chi\ g\ \right\|_{L^2}\ \le\ C\ \|g\|_{L^2},\ \ g\in L^2(\RR)
\label{toprove1}\end{equation}
\end{proposition}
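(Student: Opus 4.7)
The strategy is to reduce $A_I$ to a weighted outgoing resolvent and invoke a Hilbert--Schmidt estimate obtained from the explicit kernel~\eqref{outgoing-resolvent} and the Jost bounds~\eqref{eqn:m-bounds}. The key observation is that $\DV$ and $R_{V_0}(k)$ are both (boundary values of) functions of the self-adjoint operator $H=-\partial_x^2+V_0$. Since $V_0\ge 0$ by Hypotheses~\textbf{(V)}, $H$ has no bound states and $P_c=I$, so the distorted Fourier representation~\eqref{DVf} diagonalizes both operators simultaneously.

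First, using $\DV^{2}=H+I$, the identity $(H+I)(H-k^{2})^{-1}=I+(1+k^{2})(H-k^{2})^{-1}$ (read off from the spectral multipliers $\langle\eta\rangle^{2}/(\eta^{2}-k^{2}-i0)=1+(1+k^{2})/(\eta^{2}-k^{2}-i0)$) yields
\[
\DV\,R_{V_0}(k)\,\DV \;=\; I \;+\; (1+k^{2})\,R_{V_0}(k).
\]
Conjugating by $\chi$ reduces the problem to the bound
\[
A_I \;=\; \chi^{2} \;+\; (1+k^{2})\,\chi\,R_{V_0}(k)\,\chi.
\]
The first summand is bounded on $L^{2}$ since $\chi=\langle x\rangle^{-\sigma}\in L^{\infty}$; thus it suffices to prove that $\chi\,R_{V_0}(k)\,\chi$ is bounded on $L^{2}$, uniformly in $k\in K$.

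For this I will use the explicit outgoing Green's function~\eqref{outgoing-resolvent}, which gives the pointwise bound
\[
|\chi(x)\,R_{V_0}(x,y;k)\,\chi(y)|
\;\le\; \frac{1}{2|k|\,|t(k)|}\,\chi(x)\chi(y)\,|e_{V_0+}(\max(x,y);k)|\,|e_{V_0-}(\min(x,y);k)|.
\]
By the uniform bounds $|m_{\pm}(x;k)|\le C\langle x\rangle$ of~\eqref{eqn:m-bounds}, we have $|e_{V_0\pm}(x;k)|\le C_K\langle x\rangle$ for $k\in K$, and hence
\[
|\chi(x)\,R_{V_0}(x,y;k)\,\chi(y)| \;\le\; \frac{C_K}{|k|\,|t(k)|}\,\langle x\rangle^{1-\sigma}\langle y\rangle^{1-\sigma}.
\]
Since $\sigma>4$, the right-hand side belongs to $L^{2}(\RR_{x}\times\RR_{y})$, so $\chi R_{V_0}(k)\chi$ is Hilbert--Schmidt, hence bounded on $L^{2}$. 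Hypothesis~\textbf{(G)} guarantees the prefactor $(|k||t(k)|)^{-1}$ is uniformly bounded on $K$: in the non-generic case $K$ excludes $k=0$, while in the generic case $t(k)/k$ has a finite non-zero limit at $0$ by~\eqref{t-as-k20}.

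The main technical delicacy is the justification of the functional-calculus identity $\DV R_{V_0}(k)\DV = \DV^{2}R_{V_0}(k)$ for the \emph{outgoing} resolvent on the real axis, rather than the resolvent at complex energy where self-adjointness applies directly. I will handle this by working in the distorted Fourier representation~\eqref{DVf}: both $\DV$ and $R_{V_0}(k)$ act as Fourier multipliers in $\eta$, and the identity reduces to the algebraic identity between their symbols on the boundary. Alternatively, one may verify the identity for $k^{2}$ replaced by $k^{2}+i\varepsilon$ (where standard self-adjoint functional calculus applies) and pass to the limit $\varepsilon\downarrow 0$ after pairing against a dense set of Schwartz functions, using the kernel bound above to control the limit.
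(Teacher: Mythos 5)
Your reduction is the same as the paper's: you arrive at $A_I=\chi^2+(1+k^2)\,\chi R_{V_0}(k)\chi$ and finish with a Hilbert--Schmidt bound on $\chi R_{V_0}(k)\chi$ coming from the explicit kernel~\eqref{outgoing-resolvent} and the Jost bounds~\eqref{eqn:m-bounds}. The only methodological difference is how the commutation $\DV R_{V_0}(k)\DV=\DV^2R_{V_0}(k)$ is justified: the paper conjugates by the wave operators ($\DV=W\D W^\ast$, $R_{V_0}(k)=WR_0(k)W^\ast$, cf.\ Lemma~\ref{lem:Wbounded}), while you work in the distorted Fourier representation~\eqref{DVf} (or regularize to $k^2+i\varepsilon$ and pass to the limit). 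Either route is fine and they buy the same thing; the paper's choice has the small advantage that the boundedness of $W,W^\ast$ is already needed elsewhere in the proof of Proposition~\ref{prop:TR}.

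There is, however, one step that is wrong as stated: the claim that Hypothesis~\textbf{(G)} makes the prefactor $\left(|k|\,|t(k)|\right)^{-1}$ uniformly bounded on $K$. In the generic case $K$ may contain $k=0$, and there $t(0)=0$ with $t(k)=\mathcal{O}(k)$ by~\eqref{t-as-k20}, so $\left(|k|\,|t(k)|\right)^{-1}\sim|k|^{-2}$ blows up. The point you are missing is a cancellation you discarded when you bounded $|e_{V_0\pm}(x;k)|\le C_K\langle x\rangle$: each distorted plane wave carries a factor $t(k)$, so the correct estimate is $\frac{|e_{V_0+}(x;k)\,e_{V_0-}(y;k)|}{2|k|\,|t(k)|}=\frac{|t(k)|}{2|k|}\,|m_+(x;k)|\,|m_-(y;k)|\le C\,\langle x\rangle\langle y\rangle$ uniformly for $k\in K$, since $t(k)/k$ stays bounded in the generic case. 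Equivalently, write the kernel in the Jost/Wronskian form $f_-(y;k)f_+(x;k)$ divided by the Wronskian, as in Remark~\ref{rem:k=0} and~\eqref{wronskian-jost}, where genericity keeps the denominator bounded away from zero near $k=0$. This is precisely the uniform bound $|R_{V_0}(x,y;k)|\le C\langle x\rangle\langle y\rangle$ the paper uses; once you replace your prefactor claim by this cancellation, your Hilbert--Schmidt argument (with $\sigma>4$) and the rest of the proof go through.
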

First we commute $\DV$ and $R_{V_0}$. It is obvious that $R_0$ and $\D$ commute, so that using the wave operators introduced above (so that $\DV=W\D W^\ast$ and $R_{V_0}(k)=W R_0(k) W^\ast$, with $W$ unitary).
\begin{align*}
 A_I \ &=\ \chi\DV R_{V_0}(k) \DV\chi \\ &=\ \chi W\D W^\ast W R_0(k) W^\ast W\D W^\ast \chi \\ &=\ \chi W R_0(k) \D^2 W^\ast \chi \\ &=\ \chi R_{V_0}(k) \DV^2 \chi.
\end{align*}
Then, applying the identity $\DV^2= I-\Delta+V_0$, one obtains
\[A_I\ = \ (1+k^2)\chi R_{V_0}(k) \chi\ +\ \chi^2 .\]
Finally, using~\eqref{outgoing-resolvent} together with~\eqref{eqn:m-bounds}, one has the pointwise bound
\[ \left| R_{V_0}(x,y;k) \right|\ \leq \ C\langle x \rangle \langle y \rangle\] 
with $C$ uniform in $k$. It follows that for $f\in L^2$,
\[\big\vert \chi R_{V_0}(k) \chi f \big\vert_{L^2}\ =\ \left\vert \chi(x) \int_\zeta R_{V_0}(x,\zeta;k)\chi(\zeta) f (\zeta) \ d\zeta \right\vert_{L^2_x}\ \leq \ C\left\vert \chi(x) \langle x \rangle\right\vert_{L^2_x}^2\big\vert f \big\vert_{L^2},\]
so that $A_I$ is bounded from $L^2$ to $L^2$ , with
\begin{equation}\label{bound:A1}
\big\| A_I \big\|_{L^2\to L^2} \leq C\left(\left\vert \chi(x) \langle x \rangle\right\vert_{L^2_x}^2+\big\vert \chi \big\vert_{L^\infty}\right).
\end{equation}

\bigskip

Before carrying on with estimating the term $A_{II}^{(a)}$, let us state the following Lemma.
\begin{lemma}\label{lem:Kbound}
Let $K$ be defined for $(\xi,\eta)\in \RR\times\RR$ by
\begin{equation}
K(\xi,\eta)\ \equiv\ \left(\langle \xi \rangle - \langle \eta \rangle \right)\
 \int_\zeta \overline{\Psi(\zeta;\xi)}\ \Psi(\zeta;\eta)\ \chi(\zeta)\ d\zeta\
 \label{Kdef}
\end{equation}
Then $K(\xi,\eta)$ satisfies the following upper bounds:
\begin{align}
\left|\ K(\xi,\eta)\ \right|\ & \leq \frac{C_\chi}{ 1+ |\xi-\eta|},\label{chi-bound} \\
\left|\ \partial_\eta K(\xi,\eta)\ \right|\ +\ \left|\ \partial_\xi K(\xi,\eta)\ \right|\ & \leq \frac{C_\chi'}{ 1+ |\xi-\eta|},\label{dchi-bound}
\end{align}
with the $C_\chi$ and $C_\chi'$ constants depending on the function $\chi$ with
\begin{align*}
C_\chi &\equiv C\ \left(\sum_{j=0}^2\ \left\|\langle \zeta\rangle^j\ \partial^j_\zeta\chi\right\|_{L^1_\zeta}+\left\|\langle \zeta\rangle^2\ \chi\right\|_{L^1_\zeta}+\left\|\langle \zeta\rangle^2\ \chi\right\|_{L^\infty_\zeta}\right) \\
C'_\chi &\equiv C\ \left(\sum_{j=0}^2\ \left\|\langle \zeta\rangle^{j+1}\ \partial^j_\zeta\chi\right\|_{L^1_\zeta}+\left\|\langle \zeta\rangle^3\ \chi\right\|_{L^1_\zeta}+\left\|\langle \zeta\rangle^3\ \chi\right\|_{L^\infty_\zeta}\right) \\
\end{align*}
\end{lemma}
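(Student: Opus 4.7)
The plan is to treat the integral defining $K(\xi,\eta)$ as an oscillatory integral in $\zeta$ with phase $\zeta(\eta-\xi)$, and to combine the resulting decay in $|\xi-\eta|$ with the elementary Lipschitz bound $|\langle\xi\rangle-\langle\eta\rangle|\leq|\xi-\eta|$. Concretely, writing the distorted plane wave as $\Psi(\zeta;\xi)=(2\pi)^{-1/2}t(|\xi|)m_{\pm}(\zeta;|\xi|)e^{i\zeta\xi}$, with the sign matching the sign of $\xi$, the kernel becomes
\[
K(\xi,\eta)=\frac{\langle\xi\rangle-\langle\eta\rangle}{2\pi}\int_\RR \Phi(\zeta;\xi,\eta)\,\chi(\zeta)\,e^{i\zeta(\eta-\xi)}\,d\zeta,
\]
where $\Phi(\zeta;\xi,\eta)$ is a product of bounded transmission coefficients and the appropriate $\overline{m_{\pm}(\zeta;|\xi|)}\,m_{\pm}(\zeta;|\eta|)$ factors. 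Thanks to the uniform bounds~\eqref{eqn:m-bounds}, $|\Phi|\leq C\langle\zeta\rangle^2$ and $|\partial_\zeta\Phi|\leq C\langle\zeta\rangle$ (and similarly with the second derivative away from the singular support). It therefore suffices to show that the $\zeta$-integral decays like $(1+|\xi-\eta|)^{-2}$.

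For the regime $|\xi-\eta|\leq 1$, the bound follows from a crude estimate: since the decay rate of $\chi$ exceeds $4$, $\langle\zeta\rangle^2\chi\in L^1(\RR)$, so $\left|\int \Phi\chi\,e^{i\zeta(\eta-\xi)}\,d\zeta\right|\leq C\|\langle\zeta\rangle^2\chi\|_{L^1}$. Combined with $|\langle\xi\rangle-\langle\eta\rangle|\leq|\xi-\eta|\leq 1$, this yields $|K|\leq C_\chi$, which is compatible with the claimed bound in this regime.

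For $|\xi-\eta|>1$ I would integrate by parts twice in $\zeta$, using $e^{i\zeta(\eta-\xi)}=-(\eta-\xi)^{-2}\partial_\zeta^2 e^{i\zeta(\eta-\xi)}$. The product $\Phi\chi$ is continuous on $\RR$ (the Jost solutions are continuous), but its first derivative has jumps at the points $x_j\in\mathrm{supp}\,V_{sing}$ prescribed by~\eqref{JumpCondition}. I would therefore split the integral over the intervals $(x_j,x_{j+1})$ and integrate by parts piecewise. The first integration by parts produces no boundary contribution at the $x_j$'s (by continuity) nor at $\pm\infty$ (by decay of $\chi$). The second integration by parts produces a finite sum of jump terms $(\eta-\xi)^{-2}\sum_j[\partial_\zeta(\Phi\chi)]_{x_j}e^{ix_j(\eta-\xi)}$, each controlled by $C\,\|\langle\zeta\rangle^2\chi\|_{L^\infty}$ thanks to the explicit formula $[m'_\pm]_{x_j}=c_j m_\pm(x_j)$, and a remainder $(\eta-\xi)^{-2}\int\partial_\zeta^2(\Phi\chi)e^{i\zeta(\eta-\xi)}d\zeta$. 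The $L^1$-norm of $\partial_\zeta^2(\Phi\chi)$ is bounded via the Leibniz rule, the Schr\"odinger identity $m''_\pm=V_{reg}m_\pm\mp 2ikm'_\pm$ (valid off $\mathrm{supp}\,V_{sing}$), the uniform bounds~\eqref{eqn:m-bounds}, the decay assumption $V_{reg}\in L^{1,2}$, and the weighted norms $\|\langle\zeta\rangle^j\partial_\zeta^j\chi\|_{L^1}$ that appear in $C_\chi$. Combining the three contributions and multiplying by $|\langle\xi\rangle-\langle\eta\rangle|\leq|\xi-\eta|$ yields $|K|\leq C_\chi/|\xi-\eta|$ in this regime.

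For the derivative bound~\eqref{dchi-bound} the same strategy applies after differentiating $K$ in $\xi$ or $\eta$. The derivative either lands on the factor $\langle\xi\rangle-\langle\eta\rangle$ (whose derivative is bounded by $1$) or on the distorted plane wave $\Psi(\zeta;\xi)$. The latter brings down a factor $i\zeta$ from $\partial_\xi e^{i\zeta\xi}$ together with $\partial_\xi m_\pm$ controlled by $C\langle\zeta\rangle^2$ via~\eqref{eqn:m-bounds}. Each such factor raises the power of $\langle\zeta\rangle$ by one in every integrand, which is exactly what converts $\|\langle\zeta\rangle^j\partial_\zeta^j\chi\|_{L^1}$ into $\|\langle\zeta\rangle^{j+1}\partial_\zeta^j\chi\|_{L^1}$ in the definition of $C'_\chi$. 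The main technical obstacle throughout is careful bookkeeping of the boundary and jump contributions produced by the two integrations by parts at the finite singular set $\{x_0,\dots,x_{N-1}\}$, and verifying that the distributional content of $m''_\pm$ (namely the piece $V_{sing}m_\pm$) only contributes through the bounded jump sums and never through an uncontrolled $k$-dependent remainder.
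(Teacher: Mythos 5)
Your proposal follows essentially the same route as the paper's proof: factor out $\left|\langle \xi \rangle - \langle \eta \rangle\right| \le |\xi-\eta|$, use the crude $\|\langle\zeta\rangle^{2}\chi\|_{L^1}$ bound for $|\xi-\eta|\le 1$, integrate by parts twice in $\zeta$ for $|\xi-\eta|\ge 1$ using the uniform Jost bounds~\eqref{eqn:m-bounds} and the equation satisfied by $\partial_\zeta^2 m_\pm$ (whose singular part yields the $\|\langle\zeta\rangle^{2}\chi\|_{L^\infty}$ contribution), and then repeat with one extra power of $\langle\zeta\rangle$ for the derivative bound~\eqref{dchi-bound}. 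The only difference is presentational: you carry out the second integration by parts piecewise on the intervals $(x_j,x_{j+1})$ and collect the jump terms $[\partial_\zeta m_\pm]_{x_j}=c_j m_\pm(x_j)$ explicitly, whereas the paper keeps a global integration by parts and reads off the same contributions from the distributional identity $\partial_\zeta^2 m_+=-2ik\,\partial_\zeta m_+ + V_0\, m_+$.
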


%%OK

\begin{proof}
We consider the case where $\xi\ge0$ and $\eta\ge0$. The other cases follow similarly. Therefore, one has 
\[ K(\xi,\eta)\ = \left(\langle \xi \rangle - \langle \eta \rangle \right)\ I(\xi,\eta), \ \ {\rm with}\]
\begin{equation}
I(\xi,\eta) \equiv \int_\zeta \overline{\Psi(\zeta;\xi)} \Psi(\zeta;\eta) \chi(\zeta)\ d\zeta = \int_\zeta e^{i\zeta(\eta-\xi)}\overline{t(\xi)m_+(\zeta;\xi)}t(\eta) m_+(\zeta;\eta) \chi(\zeta)\ d\zeta.
\label{I-def} %% label right?
\end{equation}

Throughout the proof, we will use extensively the uniform bounds on $m_+$ given in~\eqref{eqn:m-bounds}.

First, by the uniform boundedness of $t(\xi)$ and $\langle \zeta\rangle^{-1}\ m_+(\zeta,\xi)$ in $\zeta$ and $\xi$, one has
\begin{equation}
\left| I(\xi,\eta) \right| \le  |t(\xi)| |t(\eta)|\  \int_\zeta\ | \langle \zeta\rangle^{-1}m_+(\zeta;\eta)\ \langle \zeta\rangle^{-1}m_+(\zeta;\xi)|\ \langle \zeta\rangle^{2}|\chi(\zeta)|\ d\zeta\ \le C\ \|\langle \zeta\rangle^{2}\chi\|_{L^1_\zeta}.
\label{eta-xi-small}\end{equation}

For $|\eta-\xi|\ge1$ we write 
\begin{align}
I(\xi,\eta) &\equiv \frac{1}{(i(\eta-\xi))^2} \overline{t(\xi)}t(\eta)\ \int_\zeta \left(\ \frac{d^2}{d\zeta^2}\ e^{i\zeta(\eta-\xi)}\ \right)\ \overline{m_+(\zeta;\eta)}\ m_+(\zeta;\xi)\ \chi(\zeta)\ d\zeta
\label{I1}\\
& = \frac{1}{(i(\eta-\xi))^2}\ \overline{t(\xi)}t(\eta) \int_\zeta e^{i\zeta(\eta-\xi)}\ \frac{d^2}{d\zeta^2}\ \left(\ \overline{m_+(\zeta;\eta)}\ m_+(\zeta;\xi)\ \chi(\zeta)\right)\ d\zeta.\label{I2}
\end{align}
The most singular terms in the integrand of~\eqref{I2} are those containing $\partial_\zeta^2 m_+$. In particular, recall the relation $\partial_x^2 m_+=-2ik\partial_x m_++V_0 m_+$, where $V_0$ contains Dirac mass singularities. %\footnote{\bf more detail here?}
Thus, for $|\xi-\eta|\ge1$ we have
\begin{equation}
\left| I(\xi,\eta) \right|\ \le\ C\ \left(\sum_{j=0}^2\ \left\|\langle \zeta\rangle^j\ \partial^j_\zeta\chi\right\|_{L^1_\zeta}+\left\|\langle \zeta\rangle\ \chi\right\|_{L^1_\zeta}+\left\|\langle \zeta\rangle^2\ \chi\right\|_{L^\infty}\right)\ \cdot\ \frac{1}{ |\xi-\eta|^2}\label{I3}\end{equation}

Applying~\eqref{eta-xi-small} for $|\eta-\xi|\le1$ and~\eqref{I3} for $|\eta-\xi|\ge1$ yields
\begin{equation}
\left| I(\xi,\eta) \right|\ \le\ C_\chi \frac{1}{ 1+ |\xi-\eta|^2}.\nn\end{equation}
Finally, since $\left|K(\xi,\eta)\right| \ = \ \left|I(\xi,\eta)\right| \left|\langle\xi\rangle-\langle\eta\rangle\right|\ \leq\ C\left|I(\xi,\eta)\right|  \left|\xi-\eta\right|$, 
multiplication by $|\xi-\eta|$ implies~\eqref{chi-bound}. 

\medskip

Using the same method as previously, one obtains similarly
\begin{align*}
\left| \partial_\eta I(\xi,\eta) \right|\ \le\ C_\chi' \frac{1}{ 1+ |\xi-\eta|^2}. \end{align*}
Finally, one has $\left|\partial_\eta K(\xi,\eta)\right| \ \leq \ \left|\partial_\eta I(\xi,\eta)\right| \left|\langle\xi\rangle-\langle\eta\rangle\right|\ + \left|I(\xi,\eta)\right|$, 
so that we deduce the first part of~\eqref{dchi-bound}. By symmetry, one obtains the same estimate for $\partial_\xi K(\xi,\eta)$, which concludes the proof of Lemma~\ref{lem:Kbound}.
\end{proof}

\bigskip

\begin{proposition}\label{prop:A2a} $A_{II}^{(a)}\equiv[\DV,\chi]R_{V_0}\DV \chi$ is bounded $L^2 \to L^2$, {\it i.e.}
\begin{equation}
\left\|\ [\DV,\chi]R_{V_0}\DV \chi\ g\ \right\|_{L^2}\ \le\ C\ \|g\|_{L^2},\ \ g\in L^2(\RR).
\label{toprove2a}\end{equation}
\end{proposition}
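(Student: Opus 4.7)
The plan is to pass to the distorted Fourier representation of $A_{II}^{(a)}$. Since $\DV$ and $R_{V_0}(k)$ act, on the distorted Fourier side, as multiplication by $\langle\eta\rangle$ and $(\eta^2-(k+i0)^2)^{-1}$ respectively, and since a computation mirroring the one in the proof of Lemma~\ref{lem:Kbound} identifies $K(\xi,\eta)$ as the integral kernel of $[\DV,\chi]$ in the distorted-Fourier variables, one obtains
\[
\mathcal{F}[A_{II}^{(a)} f](\xi) = \int_{\RR} K(\xi,\eta)\,\frac{\langle\eta\rangle}{\eta^2 - (k+i0)^2}\,\mathcal{F}[\chi f](\eta)\,d\eta.
\]
By Plancherel on the continuous spectral subspace, establishing~\eqref{toprove2a} reduces to bounding the $L^2_\xi$ norm of this expression by $C(K)\|f\|_{L^2}$ uniformly in $k\in K$. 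Note that $\|\mathcal{F}[\chi f]\|_{L^2} \leq \|\chi\|_{L^\infty}\|f\|_{L^2}$ controls the $L^2$ size of the input; additional regularity of $\mathcal{F}[\chi f]$, inherited from the decay of $\chi$, will be used to deal with the resolvent poles.

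Next, I would invoke the Sokhotski--Plemelj identity to decompose the resolvent kernel as
\[
\frac{1}{\eta^2 - (k+i0)^2} = \frac{1}{2k}\Bigl[\mathrm{P.V.}\frac{1}{\eta-k} - \mathrm{P.V.}\frac{1}{\eta+k}\Bigr] + \frac{i\pi}{2k}\bigl[\delta(\eta-k) + \delta(\eta+k)\bigr],
\]
separating distributional pole contributions from principal-value parts. The delta terms produce a rank-two contribution of the form $K(\xi,\pm k)\mathcal{F}[\chi f](\pm k)$ (with a uniformly bounded prefactor), whose $L^2_\xi$ norm is controlled by $\|K(\cdot,\pm k)\|_{L^2_\xi}\,|\mathcal{F}[\chi f](\pm k)|$. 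This is $\mathcal{O}(\|f\|_{L^2})$ uniformly for $k\in K$: Lemma~\ref{lem:Kbound} yields $|K(\xi,\pm k)| \leq C/(1+|\xi\mp k|)$ and hence $\|K(\cdot,\pm k)\|_{L^2_\xi} \leq C$, while the pointwise bound $|\mathcal{F}[\chi f](\pm k)| \leq \|\chi f\|_{L^1} \leq \|\chi\|_{L^2}\|f\|_{L^2}$ is finite because $\chi = \langle x\rangle^{-\sigma}$ with $\sigma > 4$.

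For the principal-value contribution, I would extract the pole behavior by writing, near each pole $\eta = \pm k$,
\[
\frac{K(\xi,\eta)\langle\eta\rangle}{\eta\mp k} = \frac{K(\xi,\pm k)\langle k\rangle}{\eta\mp k} + \frac{K(\xi,\eta)\langle\eta\rangle - K(\xi,\pm k)\langle k\rangle}{\eta\mp k},
\]
so that the remainder has a bounded quotient thanks to the $C^1$ estimate $|\partial_\eta K(\xi,\eta)| \leq C/(1+|\xi-\eta|)$ of Lemma~\ref{lem:Kbound}, giving a kernel of size $O(1/(1+|\xi-\eta|))$ in a neighborhood of $\eta = \pm k$. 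The surviving singular piece factors as $K(\xi,\pm k)$ times a scalar principal-value integral $\mathrm{P.V.}\int \mathcal{F}[\chi f](\eta)/(\eta\mp k)\,d\eta$, which is finite and uniformly bounded in $k\in K$ thanks to the $C^1$-regularity of $\mathcal{F}[\chi f]$ (guaranteed by $(1+|x|)\chi\in L^1$). Away from the poles, in the region where $|\eta|$ is large, I would invoke the finer factorization $K(\xi,\eta) = (\langle\xi\rangle - \langle\eta\rangle)I(\xi,\eta)$ with $|I(\xi,\eta)| \leq C/(1+|\xi-\eta|^2)$ established inside the proof of Lemma~\ref{lem:Kbound}, combined with the bulk decay $\langle\eta\rangle/|\eta^2-k^2| \lesssim 1/\langle\eta\rangle$, so that the resulting kernel admits a two-sided Schur test, by an argument parallel to the one that showed $\DV\chi\DV^{-1}$ is $L^2$-bounded.

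The main obstacle is the pole of $R_{V_0}(k)$ at $\eta = \pm k$ combined with the slow $(1+|\xi-\eta|)^{-1}$ tail of $K$: a naive absolute-value Schur test is doomed because $\int (1+|\xi-\eta|)^{-1}\,d\xi$ diverges, so one must exploit the cancellation built into the $i0$ prescription via Sokhotski--Plemelj, together with pointwise and $C^1$ control of $\mathcal{F}[\chi f]$ inherited from the decay of $\chi$, in order to close the bound uniformly for $k\in K$.
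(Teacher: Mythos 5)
Your proposal follows essentially the same route as the paper's proof: pass to the distorted Fourier representation, identify the $K(\xi,\eta)$ of Lemma~\ref{lem:Kbound} as the kernel of $[\DV,\chi]$, use that $R_{V_0}\DV\chi$ acts as multiplication by $\langle\eta\rangle/(\eta^2-k^2)$ on $\mathcal{F}[\chi g]$, reduce by Plancherel to an $L^2_\xi$ bound, split into regions near and far from $\eta=\pm k$, and near the poles subtract the frozen value $K(\xi,\pm k)$, controlling the difference quotient by the $\partial_\eta K$ estimate of Lemma~\ref{lem:Kbound} and the surviving singular piece through the $C^1$ regularity (equivalently, the antisymmetry cancellation) of $\mathcal{F}[\chi g]$. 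This is precisely the paper's decomposition into $\mathcal{S}^\varepsilon$, $\mathcal{E}^\varepsilon$, $\mathcal{R}^\varepsilon$; your explicit Sokhotski--Plemelj splitting, with its rank-two delta contribution, is a harmless variant of the paper's $\varepsilon$-regularized principal value, and that extra term is estimated correctly.

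The one step that does not work as written is the far region. You claim the kernel $K(\xi,\eta)\,\langle\eta\rangle/(\eta^2-k^2)$, away from the poles, ``admits a two-sided Schur test'' using $|K(\xi,\eta)|\lesssim |\xi-\eta|/(1+|\xi-\eta|^2)$ and $\langle\eta\rangle/|\eta^2-k^2|\lesssim \langle\eta\rangle^{-1}$. But the Schur condition fails in the $\xi$-direction: for each fixed $\eta$ one has $\int_\xi (1+|\xi-\eta|)^{-1}\,d\xi=\infty$ (the finer bound on $I$ still only gives an $O\bigl((1+|\xi-\eta|)^{-1}\bigr)$ kernel), and the extra $\langle\eta\rangle^{-1}$ decay improves only the $\eta$-integration, not the $\xi$-integration, so $\sup_\eta\int_\xi |M(\xi,\eta)|\,d\xi$ diverges. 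The estimate itself is true, and the paper closes it by Young's inequality $L^2\ast L^1\subset L^2$: majorize $|K(\xi,\eta)|$ by the convolution kernel $C\langle\xi-\eta\rangle^{-1}\in L^2_\xi$, and observe that $\langle\eta\rangle^{-1}\bigl[1-\varphi\bigr]\,|\mathcal{F}[\chi g](\eta)|\in L^1_\eta$ with norm $\le \|\langle\cdot\rangle^{-1}\|_{L^2}\|\mathcal{F}[\chi g]\|_{L^2}$; a weighted Schur test would also do, but the unweighted one you invoke does not. A smaller imprecision: the pointwise and $C^1$ control of $\mathcal{F}[\chi g]$ must absorb the growth $|\Psi(\zeta;\eta)|\lesssim\langle\zeta\rangle$ and $|\partial_\eta\Psi(\zeta;\eta)|\lesssim\langle\zeta\rangle^2$ from~\eqref{eqn:m-bounds}, so what is needed is $\langle\zeta\rangle^{2}\chi\in L^2$ rather than $(1+|x|)\chi\in L^1$; since $\chi=\langle x\rangle^{-\sigma}$ with $\sigma>4$, this is satisfied and the conclusion stands.
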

\begin{proof}
Our strategy is as follows. We view the operator $A_{II}^{(a)}$ as a composition of two operators
\begin{equation}
 A_{II}^{(a)}\ =\ [\DV ,\chi]\ \circ\ R_{V_0}\DV \chi
 \nn\end{equation}
 and first find a representation of each operator with respect to the distorted Fourier basis. We then directly prove the boundedness of $A_{II}^{(a)}:L^2\mapsto L^2$ using this spectral representation and an appropriate frequency localization argument.\bigskip
 
In terms of the distorted Fourier transform, one has 
\begin{align}
 [\DV ,\chi]f(x) &= [\DV ,\chi]\left(\int_\eta \Psi(x;\eta)\mathcal{F}[f](\eta)\ d\eta \right)\nn\\
 &= \int_\eta \mathcal{F}[ f](\eta) \Big( \DV (\chi(x) \Psi(x;\eta))-\chi\DV \Psi(x;\eta) \Big)\ d\eta, \label{comm-rep}
 \end{align}
 Now, since $\DV \Psi(x;\eta)= \langle \eta \rangle \Psi(x;\eta)$, one has
 \begin{align*} \DV (\chi(x) \Psi(x;\eta)) &= \int_\xi \Psi(x;\xi)\int_\zeta \DV (\chi \Psi(\cdot;\eta))(\zeta)\overline{\Psi(\zeta;\xi)}\ d\zeta\ d\xi \\
 &= \int_\xi \Psi(x;\xi)\int_\zeta \chi(\zeta) \Psi(\zeta;\eta)\DV \overline{\Psi(\cdot;\xi)}(\zeta)\ d\zeta\ d\xi \\
 &= \int_\xi \Psi(x;\xi)\int_\zeta \chi(\zeta) \Psi(\zeta;\eta) \langle \xi \rangle \overline{\Psi(\zeta;\xi)}\ d\zeta\ d\xi.
 \end{align*}
 Therefore, we finally deduce
 \begin{align}
[\DV ,\chi] f(x) &= \int_\eta \mathcal{F}[f](\eta) \left(\int_\xi \Psi(x;\xi)\int_\zeta \Psi(\zeta;\eta)\overline{\Psi(\zeta;\xi)} \chi(\zeta) \left(\langle \xi \rangle - \langle \eta \rangle \right) \ d\zeta\ d\xi \ \right) d\eta \nn \\
 &= \int_\xi \Psi(x;\xi) \int_\eta\left(\langle \xi \rangle - \langle \eta \rangle \right) \int_\zeta \overline{\Psi(\zeta;\xi)} \Psi(\zeta;\eta) \chi(\zeta) \mathcal{F}[f](\eta) \ d\eta  d\xi. \label{defA}
\end{align}

To represent the operator $R_{V_0}\DV \chi $ in terms of the distorted Fourier basis we note:
 \begin{align}
  \mathcal{F}[R_{V_0}\DV \chi g](\eta) &= \int_\eta \overline{\Psi(z;\eta)}\ (R_{V_0}\DV \chi g)(z) \ dz\nn \\
  &= \int_z (R_{V_0} \DV \overline{\Psi(z;\eta)}) \chi(z) g(z) \ dz \nn\\
  &= \int_z \frac{\langle \eta\rangle}{\eta^2-k^2}\ \overline{\Psi(z;\eta)} \chi(z)g(z) \ dz\nn \\
  &= \frac{\langle \eta\rangle}{\eta^2-k^2}\ \mathcal{F}[\chi g](\eta). \label{factor2}
 \end{align}

Combining~\eqref{defA} and~\eqref{factor2}, one has 
 \begin{align}
[\DV,\chi]R_{V_0}\DV \chi g(x) &= \int_\xi \Psi(x;\xi) \int_\eta\left(\langle \xi \rangle - \langle \eta \rangle \right) \int_\zeta \overline{\Psi(\zeta;\xi)} \Psi(\zeta;\eta) \chi(\zeta)\ d\zeta\nn \\ &\qquad \frac{\langle \eta\rangle}{\eta^2-k^2} \mathcal{F}[\chi g](\eta) \ d\eta \ d\xi \nn \\
&=\int_\xi \Psi(x;\xi) T^{II}[g](\xi) d\xi.
\end{align}
By the Plancherel Theorem, the $L^2$ estimate of $A_{II}^{(a)}$ is equivalent to the bound
\begin{equation}
 \left\| T^{II}[g] \right\|_{L^2}=\left\|\int_\eta \int_\zeta \overline{\Psi(\zeta;\xi)}\ \Psi(\zeta;\eta)\ \chi(\zeta)\ d\zeta\frac{\left(\langle \xi \rangle - \langle \eta \rangle \right)\langle \eta\rangle}{\eta^2-k^2} \ \mathcal{F}[\chi g](\eta) \ d\eta\right\|_{L^2_\xi} \le C \|g\|_{L^2}\label{toprove2}
 \end{equation}

We now proceed with a proof of~\eqref{toprove2}. First we define $\varphi_{|\kappa|<\delta_0}$ to be the positive smooth function satisfying
\begin{equation}\label{def-varphi}
 \varphi_{|\kappa|<\delta_0}\mbox{ equal to one for $|\kappa|\le\delta_0/2$, zero for $|\kappa|>\delta_0$ and symmetric about $\kappa=0$.}
\end{equation}
We use $\varphi$ to localize at frequencies near $\eta=\pm k$ and frequencies away from $\eta=\pm k$.

\begin{equation}
T^{II}[g]\ \equiv\ T^{II}_{near}[g]+T^{II}_{far}[g]
\label{TIIdecomp}
\end{equation}
where
\begin{align}
T^{II}_{far}[g](\xi)\ &\equiv\ \int_\eta K(\xi,\eta)\ \frac{\langle\eta\rangle}{\eta^2-k^2} \ \left[1-\varphi_{||\eta|-|k||<\delta_0}(\eta)\right] \mathcal{F}[\chi g](\eta)\ d\eta, \label{TIIfar}\\
 T^{II}_{near}[g](\xi)\ &\equiv\ \int_\eta K(\xi,\eta)\ \frac{\langle\eta\rangle}{\eta^2-k^2} \ \varphi_{||\eta|-|k||<\delta_0}(\eta)\ \mathcal{F}[\chi g](\eta)\ d\eta, \label{TIInear}
\end{align}
with $K$ defined as in~\eqref{Kdef} by
\[
K(\xi,\eta)\ \equiv\ \left(\langle \xi \rangle - \langle \eta \rangle \right)\
 \int_\zeta \overline{\Psi(\zeta;\xi)}\ \Psi(\zeta;\eta)\ \chi(\zeta)\ d\zeta.
\]

 \noindent {\bf Bound on $T^{II}_{far}[g](\xi)$:} We bound the expression
\begin{equation}
T^{II}_{far}[g](\xi)\ \equiv\ \int_\eta K(\xi,\eta)\ \frac{\langle\eta\rangle}{\eta^2-k^2}\ \left[1-\varphi_{||\eta|-|k||<\delta_0}(\eta)\right] \mathcal{F}[\chi g](\eta)\ d\eta. 
\end{equation}

By Lemma~\ref{lem:Kbound}, $K(\xi,\eta)$ satisfies the following pointwise bound, which is valid for all $\xi,\eta\in\RR$: 
\begin{align}
\left|\ K(\xi,\eta)\ \right|\ &\le\ C_\chi\ \frac{1}{1+|\xi-\eta |}. 
\nn\end{align}

Recall now the special case of Young's inequality: 
\begin{equation}
\| h\star g\|_2\le \| h\|_2\ \| g\|_1 \ .
\nn\end{equation}
This, together with the pointwise bound of $K(\xi,\eta)$, yields:
\begin{align}
\left\|\ T^{II}_{far}[g]\ \right\|_2\ &=\ \left\|\ \int K(\xi,\eta)\ \frac{\langle \eta\rangle}{\eta^2-k^2}\ \left[1-\varphi_{||\eta|-|k||<\delta_0}(\eta)\right]\ \left|\ \mathcal{F}[\chi g](\eta)\ \right|\ d\eta\ \ \right\|_{L^2_\xi}\nn\\
&\le\ C_\chi\ \left\| \frac{1}{\langle\eta\rangle}\ \right\|_{L^2_\eta}\  \left\| \frac{\langle \eta\rangle}{\eta^2-k^2} \left[1-\varphi_{||\eta|-|k||<\delta_0}(\eta)\right] \mathcal{F}[\chi g](\eta)\ \right\|_{L^1_\eta}\nn\\
&\le\ C_\chi\ \left\|\frac{1}{\langle\eta\rangle}\ \right\|_{L^2_\eta}^2\ \left\| \mathcal{F}[\chi g]\ \right\|_{L^2_\eta}\ \leq\ C_\chi\ 
 \left\|\ \chi g\ \right\|_{L^2_\eta}\nn\\
 & \le\ C_\chi\ \|\chi\|_{L^\infty}\ \|g\|_{L^2}.
\nn\end{align}

 \bigskip
 
 \noindent {\bf Bound on $T^{II}_{near}[g](\xi)$:}\bigskip

 \begin{align}
T^{II,\varepsilon}_{near}[g](\xi)\ &\equiv\ \int_\eta K(\xi,\eta)\ \frac{\langle \eta\rangle}{\eta^2-k^2} \varphi_{\varepsilon\le ||\eta|-|k||<\delta_0}(\eta)\ \mathcal{F}[\chi g](\eta)\ d\eta \nn\\
&= \frac{1}{2k}\ \int_\eta K(\xi,\eta)\ \langle \eta\rangle\ \varphi_{\varepsilon\le||\eta|-|k||<\delta_0}(\eta)\ \left( \frac{\mathcal{F}[\chi g](\eta)}{\eta-k}\ -\ \frac{\mathcal{F}[\chi g](\eta)}{\eta+k}\ \right) d\eta\nn\\
&\equiv \int \Lambda^\varepsilon(\xi,\eta)\ \frac{\mathcal{F}[\chi g](\eta)}{\eta-k} d\eta\ 
+\ \int \Lambda^\varepsilon(\xi,\eta)\ \frac{\mathcal{F}[\chi g](\eta)}{\eta+k}\ d\eta,\nn
\end{align}
where 
\begin{align}
\Lambda^\varepsilon(\xi,\eta)\ &\equiv\
\frac{1}{2k}\ \langle \eta\rangle\ \varphi_{\varepsilon\le||\eta|-|k||<\delta_0}(\eta)\ K(\xi,\eta).
\label{Lambda-def}
\end{align}
and $K(\xi,\eta)$ is displayed in~\eqref{Kdef}.
Note that by Lemma~\ref{lem:Kbound},
\begin{equation}\label{Lambda-bound}
\left|\ \Lambda^\varepsilon(\xi,\eta)\ \right|\ \le\ C_\chi \frac{1}{ 1+ |\xi-\eta|} \varphi_{\varepsilon\le||\eta|-|k||<\delta_0}(\eta). 
\end{equation}

We bound the first term in the above expansion of $T^{II,\varepsilon}_{near}$. The second term is treated similarly. We have

\begin{align*}
 \int \Lambda^\varepsilon(\xi,\eta)\ \frac{\mathcal{F}[\chi g](\eta)}{\eta-k} d\eta\
 &=\ \mathcal{S}^\varepsilon(\xi)\ +\ \mathcal{E}^\varepsilon(\xi) +\ \mathcal{R}^\varepsilon(\xi), \ \ {\rm where} \\
& \mathcal{S}^\varepsilon(\xi)\equiv \Lambda^\varepsilon(\xi,k)\ \int \frac{\mathcal{F}[\chi g](\eta)}{\eta-k}\ 
 \mathbf{1}_{\varepsilon\le |\eta-k|\le \delta_0/4}\ d\eta, \\
& \mathcal{E}^\varepsilon(\xi)\equiv \ \int \left( \Lambda^\varepsilon(\xi,\eta)\ -\  \Lambda^\varepsilon(\xi,k) \right) \frac{\mathcal{F}[\chi g](\eta)}{\eta-k}\ \mathbf{1}_{\varepsilon\le |\eta-k|\le \delta_0/4}\ d\eta,\\
& \mathcal{R}^\varepsilon(\xi)\equiv \ \int \Lambda^\varepsilon(\xi,\eta)\ \frac{\mathcal{F}[\chi g](\eta)}{\eta-k} \mathbf{1}_{ |\eta-k|\ge \delta_0/4}\ d\eta.
 \end{align*}
 
 One bounds $\mathcal{R}^\varepsilon$ using~\eqref{Lambda-bound} by
 \begin{align}
\|\ \mathcal{R}^\varepsilon\ \|_{L^2}\ &\le\ \frac{4 C_\chi}{\delta_0}\left\| \frac{1}{1+|\eta|}\ \right\|_{L^2_\eta} \|\varphi_{\varepsilon\le||\eta|-|k||<\delta_0}(\eta)\mathcal{F}[\chi g](\eta)\|_{L^1} \nn \\ 
&\le\ \frac{4 C_\chi}{\delta_0} \|\varphi_{\varepsilon\le||\eta|-|k||<\delta_0} \|_{L^2_\eta} \|\chi\|_{L^\infty}\ \|g\|_{L^2}.
\label{cRest}
\end{align}
 
 Moreover, we have
\begin{align}
\left|\ \frac{ \Lambda^\varepsilon(\xi,\eta)- \Lambda^\varepsilon(\xi,k)}{\eta-k}\ \right|\ &\le\
\left|\ \left. \partial_\eta\ \Lambda^\varepsilon(\xi,\eta)\ \right|_{\eta=\tilde\eta\in \{\varepsilon\le |\eta-k|\le \delta_0/4\}}\ \right|\ \nn \\
&\leq \mathbf{1}_{ |\eta-k|\le \delta_0}\left(\ \left|\ \partial_\eta\ K(\xi,\eta)\ \right|\ \langle \eta\rangle\ \ + \ \left| K(\xi,\eta)\right|\ \right).\end{align}
From the estimates of Lemma~\ref{lem:Kbound}, and using Young's inequality, one deduces
\begin{equation}
\|\ \mathcal{E}^\varepsilon\ \|_{L^2}\ \le\ C_\chi'\ \|\chi\|_{L^\infty} \|g\|_{L^2}.
\label{cEest}
\end{equation}

We treat the singular integral $\mathcal{S}^\varepsilon$ as follows. By antisymmetry of the function
 $(\eta-k)^{-1}\ \mathbf{1}_{\varepsilon\le |\eta-k|\le \delta_0/4}(\eta)$ we have
 \begin{align}
\int \mathbf{1}_{\varepsilon\le |\eta-k|\le \delta_0/4}\ \frac{1}{\eta-k}\ \mathcal{F}[\chi g](\eta)\ d\eta =\ 
  \int \mathbf{1}_{\varepsilon\le |\eta-k|\le \delta_0/4}\ \frac{\mathcal{F}[\chi g](\eta)-\mathcal{F}[\chi g](k)}{\eta-k}\ \ d\eta.
\nn\end{align}

Moreover, we have
\begin{align}
\left|\ \frac{\mathcal{F}[\chi g](\eta)-\mathcal{F}[\chi g](k)}{\eta-k}\ \right|\ &\le\
\left|\ \left. \partial_\eta\ \mathcal{F}[\chi g](\eta)\ \right|_{\eta=\tilde\eta\in \{\varepsilon\le |\eta-k|\le \delta_0/4\}}\ \right|.
\end{align}
By the uniform boundedness of $\langle \zeta\rangle^{-2}\partial_\eta m_+(\zeta,\eta)$ and $\langle \zeta\rangle^{-2}\partial_\eta m_-(\zeta,\eta)$ in $\RR\times\RR$, we have that
\begin{align*}\left| \partial_\eta\left(\ \mathcal{F}[\chi g](\eta)\ \right) \right | &= \left| \int_\zeta \partial_\eta \overline{\Psi(\zeta;\eta)} \chi(\zeta)g(\zeta)\ d\zeta \right | \\
 &\leq \sup_{(\zeta;\eta)\in\RR\times\RR}|\langle \zeta\rangle^{-2}\partial_\eta\ \Psi(\zeta;\eta)| \|\langle \zeta\rangle^{2}\chi\|_{L^2_\zeta}\|g\|_{L^2}
 \ \le\ C\ \|\langle \zeta\rangle^{2}\chi\|_{L^2_\zeta}\|g\|_{L^2}.
\end{align*}
Therefore,
\begin{equation}
\left|\ \int \mathbf{1}_{\varepsilon\le |\eta-k|\le \delta_0/4}(\eta)\ \frac{1}{\eta-k}\ \mathcal{F}[\chi g](\eta)\ d\eta \ \right|\ \le\ C\ \|\langle \zeta\rangle^{2}\chi\|_{L^2_\zeta}\|g\|_{L^2},
\nn\end{equation}
from which it follows that 
\begin{equation}
\left|\ \mathcal{S}^\varepsilon(\xi)\ \right|\ \le\ C\ \|\langle \zeta\rangle^{2}\chi\|_{L^2_\zeta}\|g\|_{L^2}\ \left| \Lambda^\varepsilon(\xi,k)\ \right| \ \le\ C_\chi \frac{1}{ 1+ |\xi-k|}\ \|g\|_{L^2}.\label{cSest}
\end{equation}

Thus we have from~\eqref{cRest},~\eqref{cEest} and~\eqref{cSest},
 \begin{equation}
 \left\|\ T^{II}_{near}[g]\ \right\|_2\ \le\ C_\chi'\ \|g\|_2 .
 \nn\end{equation}

Using the estimates of $T^{II}_{far}[g]$ and $T^{II}_{near}[g]$ yields~\eqref{toprove2}. Therefore, $A_{II}^{(a)}$ is bounded from $L^2$ to $L^2$. This completes the proof of Proposition~\ref{prop:A2b}.
\end{proof}

\bigskip

\begin{proposition}\label{prop:A2b} $A_{II}^{(b)}\equiv\chi\DV R_{V_0}(k))([\chi,\DV])$ is bounded from $L^2$ to $L^2$.
\end{proposition}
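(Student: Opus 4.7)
The plan is to reduce Proposition~\ref{prop:A2b} to the already-established Proposition~\ref{prop:A2a} by duality, rather than repeating the oscillatory-integral analysis in the distorted Fourier representation. First I would record the elementary adjoint identities. Since $\chi$ is real-valued and bounded, and $\DV = (I - \Delta + V_0)^{1/2}$ is self-adjoint on $L^2$ (being the functional calculus of the self-adjoint operator $H = -\partial_x^2 + V_0$), one has
\[
(\chi \DV)^* = \DV \chi, \qquad [\chi, \DV]^* = \DV \chi - \chi \DV = [\DV, \chi].
\]
For the outgoing resolvent, I would next verify that $R_{V_0}(k)^* = R_{V_0}(-k)$ for real $k$: this is immediate from the explicit kernel~\eqref{outgoing-resolvent} together with the symmetries $\overline{t(k)} = t(-k)$ and $\overline{e_{V_0\pm}(x;k)} = e_{V_0\pm}(x;-k)$, which themselves follow from complex conjugation of~\eqref{schrod-eqn} and the characterization of the distorted plane waves via their asymptotic behavior~\eqref{eqn:RT}.

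Combining these identities,
\[
(A_{II}^{(b)})^* \;=\; [\chi, \DV]^*\; R_{V_0}(k)^*\; (\chi \DV)^* \;=\; [\DV, \chi]\; R_{V_0}(-k)\; \DV \chi,
\]
which is exactly the operator $A_{II}^{(a)}$ of Proposition~\ref{prop:A2a} with spectral parameter $-k$ in place of $k$. Now if $K \subset \RR$ is compact and satisfies Hypothesis~\textbf{(G)}, the reflected set $-K$ is also compact and satisfies Hypothesis~\textbf{(G)} (the condition $0 \notin K$ is preserved under $k \mapsto -k$). Applying Proposition~\ref{prop:A2a} with $K$ replaced by $-K$ yields a uniform $L^2 \to L^2$ bound on $(A_{II}^{(b)})^*$ for $k \in K$. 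By duality, $\|A_{II}^{(b)}\|_{L^2 \to L^2} = \|(A_{II}^{(b)})^*\|_{L^2 \to L^2}$, which concludes the argument.

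I do not expect a serious obstacle here, since the hard analytic work—the singular-integral estimates near the mass shell $|\eta| = |k|$ and the convolution bound via Lemma~\ref{lem:Kbound}—has already been carried out in the proof of Proposition~\ref{prop:A2a}. The only point requiring a brief verification is the identification $R_{V_0}(k)^* = R_{V_0}(-k)$, where one must be mindful that the outgoing radiation condition is exchanged with the incoming one under conjugation, but since Proposition~\ref{prop:A2a}'s bounds depend on $k$ only through $k^2$ and the uniform-in-$k$ properties on compact subsets of $\RR$ avoiding $0$ in the non-generic case, the substitution $k \mapsto -k$ is harmless.
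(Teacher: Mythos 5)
Your argument is correct and is exactly the route the paper takes: Proposition~\ref{prop:A2b} is obtained from Proposition~\ref{prop:A2a} by duality, using $(\chi\DV)^*=\DV\chi$, $[\chi,\DV]^*=[\DV,\chi]$, and the self-adjointness properties of the resolvent. Your additional verification that $R_{V_0}(k)^*=R_{V_0}(-k)$ and that the bounds of Proposition~\ref{prop:A2a} are unaffected by $k\mapsto -k$ (they depend on $k$ only through $k^2$ and $|k|$) merely fills in the detail the paper leaves implicit.
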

This follows from Proposition~\ref{prop:A2a} and duality. 

\bigskip

Finally, we consider the operator $A_{III}\equiv [\DV,\chi]\circ R_{V_0}(k)\circ [\chi,\DV]$.

\begin{proposition}\label{prop:A3}
The operator $A_{III}$ is bounded from $L^2$ to $L^2$.
\end{proposition}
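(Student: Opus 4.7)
My plan is to follow the spectral strategy of Proposition~\ref{prop:A2a}. Since $[\chi,\DV]=-[\DV,\chi]$, the computation leading to~\eqref{defA} shows that $[\chi,\DV]$ has distorted Fourier spectral kernel $-K(\xi,\eta)$, while $R_{V_0}(k)$ acts in the same basis as multiplication by $(\eta^2-k^2)^{-1}$. By Plancherel for $\mathcal{F}$, the desired $L^2\to L^2$ bound on $A_{III}$ reduces to
\begin{equation*}
\Bigl\|\int K(\xi,\eta)\,\frac{G(\eta)}{\eta^2-k^2}\,d\eta\Bigr\|_{L^2_\xi}\ \le\ C\,\|\mathcal{F}[f]\|_{L^2},\quad G(\eta)\equiv\int K(\eta,\mu)\,\mathcal{F}[f](\mu)\,d\mu,
\end{equation*}
with the $\eta$-integral understood as a principal value near $\eta=\pm k$.

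The first preparatory step is to control $G$ uniformly in $\eta$. Lemma~\ref{lem:Kbound} supplies $|K(\eta,\mu)|+|\partial_\eta K(\eta,\mu)|\le C(1+|\eta-\mu|)^{-1}$, a function lying in $L^2_\mu$ uniformly in $\eta$; Cauchy-Schwarz on the defining integral then yields $\|G\|_{L^\infty}+\|\partial_\eta G\|_{L^\infty}\le C\|\mathcal{F}[f]\|_{L^2}$, so $G$ is globally Lipschitz with Lipschitz constant controlled by $\|\mathcal{F}[f]\|_{L^2}$.

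Next, I would split the $\eta$-integral via the cutoff $\varphi\equiv\varphi_{||\eta|-|k||<\delta_0}$ from~\eqref{def-varphi}. For the far piece, I factor $G$ out using $\|G\|_{L^\infty}$ and apply Young's convolution inequality $\|h_1\ast h_2\|_{L^2}\le\|h_1\|_{L^2}\|h_2\|_{L^1}$ with $h_1(s)=(1+|s|)^{-1}\in L^2(\RR)$ (which dominates $K(\xi,\eta)$) and $h_2(\eta)=(1-\varphi)/|\eta^2-k^2|\in L^1(\RR)$ (integrable as $\eta^{-2}$ at infinity and cut off near $\pm k$). For the near piece, I decompose $(\eta^2-k^2)^{-1}=(2k)^{-1}\bigl[(\eta-k)^{-1}-(\eta+k)^{-1}\bigr]$ and treat each pole as in Proposition~\ref{prop:A2a}: subtract and add $K(\xi,\pm k)G(\pm k)$ so that the principal-value contribution vanishes by symmetry of $\varphi$, while the remainder $\int[K(\xi,\eta)G(\eta)-K(\xi,\pm k)G(\pm k)](\eta\mp k)^{-1}\varphi\,d\eta$ admits, via the Lipschitz bound above together with Lemma~\ref{lem:Kbound}, the pointwise estimate $\le C\delta_0\,\|\mathcal{F}[f]\|_{L^2}\,(1+|\xi\mp k|)^{-1}$, whose $L^2_\xi$ norm is finite. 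A standard $\varepsilon\downarrow0$ regularisation, analogous to~\eqref{TIIfar}--\eqref{TIInear}, legitimises the principal value.

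The main obstacle, I expect, is propagating the spectral Lipschitz regularity in $\eta$ uniformly in $\xi$ through both factors $K(\xi,\eta)$ and $G(\eta)$ at once. In Proposition~\ref{prop:A2a} only one spectral factor of $K$ appears, the other being $\langle\eta\rangle\mathcal{F}[\chi g](\eta)$, whose smoothness is supplied directly by the spatial decay of $\chi g$. Here $G(\eta)$ is itself a spectral integral of $K$ against $\mathcal{F}[f]$, so its regularity must be extracted from the $\eta$-derivative estimate~\eqref{dchi-bound} of Lemma~\ref{lem:Kbound}. The decay $\sigma>4$ in the definition~\eqref{defChi} of $\chi$ is exactly what is required for Lemma~\ref{lem:Kbound} to deliver both the $L^\infty$ and the Lipschitz bound on $G$ with constants bounded only by $\|\mathcal{F}[f]\|_{L^2}$, which is what closes the argument.
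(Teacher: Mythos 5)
Your proposal is correct and follows essentially the same route as the paper: pass to the distorted Fourier (spectral) representation so that both commutators act through the kernel $K$ of Lemma~\ref{lem:Kbound} and $R_{V_0}(k)$ acts as multiplication by $(\eta^2-k^2)^{-1}$, bound the inner spectral integral and its $\eta$-derivative by Cauchy--Schwarz, split near/far with $\varphi_{||\eta|-|k||<\delta_0}$, use Young's inequality for the far piece, and handle the poles by partial fractions, the symmetry/antisymmetry cancellation of the principal value, and the Lipschitz (difference-quotient) bounds from~\eqref{dchi-bound}, with the same $\varepsilon$-regularisation. Your only deviation is organizational: you combine the paper's $\mathcal{S}^\varepsilon$ and $\mathcal{E}^\varepsilon$ terms into a single product-rule estimate on $K(\xi,\cdot)G(\cdot)$, which changes nothing of substance.
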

\begin{proof}
By~\eqref{comm-rep}, one has
\begin{align}
A_{III}[g](x) &=\ \int_\xi \Psi(x;\xi) \int_\eta \left(\langle \xi \rangle - \langle \eta \rangle \right) \int_\zeta \overline{\Psi(\zeta;\xi)} \Psi(\zeta;\eta) \chi(\zeta)\ d\zeta \nn \\
&\qquad \frac{1}{\eta^2-k^2} \mathcal{F}\left[\ [\DV,\chi]g\ \right](\eta) \ d\eta \ d\xi \nn \\
&= \int_\xi \Psi(x;\xi) \int_\eta K(\xi,\eta)
\frac{1}{\eta^2-k^2} \mathcal{F}\left[\ [\DV,\chi]g\ \right](\eta) \ d\eta \ d\xi \nn \\
&= \int_\xi \Psi(x;\xi)\ \int_\eta K(\xi,\eta)\ \frac1{\eta^2-k^2}\ \int_\theta K(\eta,\theta) \mathcal{F}[g](\theta) \ d\theta \ d\eta \ d\xi
\end{align}
\bigskip

By the Plancherel Theorem, the $L^2$ estimate of $A_{III}$ is equivalent to the bound
\begin{equation}
 \left\|\ T^{III}[g]\ \right\|_{L^2}=\left\|\int_\eta K(\xi,\eta)\ \frac1{\eta^2-k^2}\ \int_\theta K(\eta,\theta) \mathcal{F}[g](\theta) \ d\theta \ d\eta \right\|_{L^2_\xi}\ \le\ C\ \|g\|_{L^2}.\label{toproveIII2}
 \end{equation}

We now proceed with a proof of~\eqref{toproveIII2}. 
We use $\varphi_{|\kappa|<\delta_0}$, defined as in~\eqref{def-varphi}, to localize at frequencies near $\eta=\pm k$ and frequencies away from $\eta=\pm k$.

\begin{equation}
T^{III}[g]\ \equiv\ T^{III}_{near}[g]+T^{III}_{far}[g]
\label{TIIIdecomp}
\end{equation}
where
\begin{align}
T^{III}_{far}[g](\xi)\ &\equiv\ \int_\eta K(\xi,\eta)\ \left(1-\varphi_{||\eta|-|k||<\delta_0}(\eta) \right) \ \frac1{\eta^2-k^2}\ \int_\theta K(\eta,\theta) \mathcal{F}[g](\theta) \ d\theta \ d\eta \label{TIIIfar}\\
 T^{III}_{near}[g](\xi)\ &\equiv\ \int_\eta K(\xi,\eta)\ \ \varphi_{||\eta|-|k||<\delta_0}(\eta)\ \ \frac1{\eta^2-k^2}\ \int_\theta K(\eta,\theta) \mathcal{F}[g](\theta) \ d\theta \ d\eta \label{TIIInear}
\end{align}

 \noindent {\bf Bound on $T^{III}_{far}[g](\xi)$:}\ \ We recall Lemma~\ref{lem:Kbound}, stating that
$K(\xi,\eta)$ satisfies the following upper bound:
\[
\left|\ K(\xi,\eta)\ \right|\ \leq \ C_\chi \frac{1}{ 1+ |\xi-\eta|},
\]
with $C_\chi\equiv C\ \left(\sum_{j=0}^2\ \left\|\langle \zeta\rangle^j\ \partial^j_\zeta\chi\right\|_{L^1_\zeta}+\left\|\langle \zeta\rangle^2\ \chi\right\|_{L^1_\zeta}+\left\|\langle \zeta\rangle^2\ \chi\right\|_{L^\infty}\right)$. Therefore, one has the pointwise estimate
\[ \left|\int_\theta K(\eta,\theta) \mathcal{F}[g](\theta) \ d\theta \right| \leq \left\|\frac{1}{ 1+ |\eta-\theta|} \right\|_{L^2_\theta}\ \left\|\mathcal{F}[g] \right\|_{L^2} \leq \left\|\frac{1}{ 1+ |\cdot|} \right\|_{L^2_\theta}\ \left\|g \right\|_{L^2}. \]
Moreover, for $||\eta|-|k||>\delta_0$, one has\ 
 $\left(1-\varphi_{||\eta|-|k||<\delta_0}(\eta)\right)\ \ |\eta^2-k^2|^{-1}\ \in\ L^1.$
 Therefore, by Young's inequality,
 \[\left\|T^{III}_{far}[g] \right\|_{L^2} \leq C \left\|\frac{1}{ 1+ |\cdot|} \right\|_{L^2}^2 \left\|\left(1-\varphi_{||\eta|-|k||<\delta_0}(\eta)\right) \frac1{|\eta^2-k^2|} \right\|_{L^1}\left\|g \right\|_{L^2} \leq C_\chi \left\|g \right\|_{L^2}. \]
 
 \bigskip
 
 \noindent {\bf Bound on $T^{III}_{near}[g](\xi)$:}
 
 \begin{align}
T^{III,\varepsilon}_{near}[g](\xi)\ &\equiv\ \int_\eta K(\xi,\eta)\ \ \varphi_{||\eta|-|k||<\delta_0}(\eta)\ \ \frac1{\eta^2-k^2}\ \int_\theta K(\eta,\theta) \mathcal{F}[g](\theta) \ d\theta \ d\eta \nn\\
&\equiv \int \Lambda^\varepsilon(\xi,\eta)\ \frac{1}{\eta-k}\int_\theta K(\eta,\theta)\mathcal{F}[ g](\theta)\ d\theta d\eta\ \nn\\
&\qquad +\ \Lambda^\varepsilon(\xi,\eta)\ \frac{1}{\eta+k}\int_\theta K(\eta,\theta)\mathcal{F}[ g](\theta)\ d\theta d\eta\ ,\nn
\end{align}
with $\Lambda^\varepsilon(\xi,\eta)\ \equiv\
\frac{1}{2k}\ K(\xi,\eta)\ \varphi_{\varepsilon\le||\eta|-|k||<\delta_0}(\eta)$.

Note that by Lemma~\ref{lem:Kbound},
\begin{equation}\label{Lambda-boundIII}
\left|\ \Lambda^\varepsilon(\xi,\eta)\ \right|\ \le\ C_\chi \frac{1}{ 1+ |\xi-\eta|} \varphi_{\varepsilon\le||\eta|-|k||<\delta_0}(\eta).
\end{equation}

We bound the first term in the above expansion of $T^{III,\varepsilon}_{near}$. The second term is treated similarly. We have
\begin{align*}
 \int \Lambda^\varepsilon(\xi,\eta)\ \frac{\mathcal{F}[\chi g](\eta)}{\eta-k} d\eta\
 &= \mathcal{S}^\varepsilon(\xi)\ +\ \mathcal{E}^\varepsilon(\xi) +\ \mathcal{R}^\varepsilon(\xi), \ \ {\rm where}\ \\
\mathcal{S}^\varepsilon(\xi)\equiv & \Lambda^\varepsilon(\xi,k)\ \int\ \frac{1}{\eta-k}\int_\theta K(\eta,\theta)\mathcal{F}[ g](\theta)\ d\theta \ 
 \mathbf{1}_{\varepsilon\le |\eta-k|\le \delta_0/4}\ d\eta, \\
 \mathcal{E}^\varepsilon(\xi)\equiv &\int \frac{\Lambda^\varepsilon(\xi,\eta) - \Lambda^\varepsilon(\xi,k)}{\eta-k}\int_\theta K(\eta,\theta)\mathcal{F}[ g](\theta)\ d\theta \ \mathbf{1}_{\varepsilon\le |\eta-k|\le \delta_0/4}\ d\eta,\\
 \mathcal{R}^\varepsilon(\xi)\equiv & \int \Lambda^\varepsilon(\xi,\eta)\ \frac{1}{\eta-k}\int_\theta K(\eta,\theta)\mathcal{F}[ g](\theta)\ d\theta\ \mathbf{1}_{ |\eta-k|\ge \delta_0/4}\ d\eta.
\end{align*}

As in the proof of Proposition~\ref{prop:A2a}, the kernel of the integral operators defining $\mathcal{E}^\varepsilon$ and $\mathcal{R}^\varepsilon$ are non-singular, and we have uniformly in $\epsilon$:
\[\|\mathcal{E}^\varepsilon\|_{L^2} + \|\mathcal{R}^\varepsilon\|_{L^2}\ \leq\ C_\chi'\|g\|_{L^2}. \]

We treat the singular integral $\mathcal{S}^\varepsilon$ as follows. By antisymmetry of the function
 $(\eta-k)^{-1}\ \mathbf{1}_{\varepsilon\le |\eta-k|\le \delta_0/4}(\eta)$ we have
 \begin{align}
\mathcal{S}(\xi)&=\ \Lambda(\xi,k)\int \mathbf{1}_{\varepsilon\le |\eta-k|\le \delta_0/4}(\eta)\ \int_\theta \frac{K(\eta,\theta)}{\eta-k}\ \mathcal{F}[ g](\theta)\ d\theta \ d\eta \nn\\
& =\ 
  \int \mathbf{1}_{\varepsilon\le |\eta-k|\le \delta_0/4}(\eta)\ \int_\theta \frac{K(\eta,\theta)-K(k,\theta)}{\eta-k}\ \mathcal{F}[ g](\theta)\ d\theta  \ d\eta.
\nn\end{align}

Moreover, Lemma~\ref{lem:Kbound} leads to
\begin{align}
\left|\ \frac{K(\eta,\theta)-K(k,\theta)}{\eta-k}\ \right|\ &\le\
\left|\ \left. \partial_\eta\ K(\eta,\theta)\ \right|_{\eta=\tilde\eta\in \{\varepsilon\le |\eta-k|\le \delta_0/4\}}\ \right|\
\leq \ C_\chi' \frac{1}{1+|\eta-\theta|}.\end{align}

Therefore, by Cauchy-Schwarz inequality,
\begin{equation}
\left| \int_\theta \frac{K(\eta,\theta)-K(k,\theta)}{\eta-k}\ \mathcal{F}[ g](\theta)\ d\theta \ \right| \le\ C_\chi' \|g\|_{L^2},
\nn\end{equation}
from which it follows that 
\begin{equation}
\left|\ \mathcal{S}^\varepsilon(\xi)\ \right|\ \le\ |\Lambda^\varepsilon(\xi,k)| \ \|\mathbf{1}_{\varepsilon\le |\eta-k|\le \delta_0/4}\|_{L^1_\eta} \ \|g\|_{L^2}\ \le\ C_\chi' \frac{1}{ 1+ |\xi-k|}\ \|g\|_{L^2}.
\end{equation}

Thus we have\ 
$
 \left\|\ T^{III}_{near}[g]\ \right\|_2\ \le\ C_\chi'\ \|g\|_2 .
$
\ Using the estimates of $T^{III}_{far}[g]$ and $T^{III}_{near}[g]$ yields~\eqref{toproveIII2}. Therefore, $A_{III}$ is bounded from $L^2$ to $L^2$. This completes the proof of Proposition~\ref{prop:A3}, and hence the proof of Proposition~\ref{prop:TR}.
\end{proof}
%POTENTIEL V0

% \section{TO DO}
% 
% \begin{enumerate}
%\item commutator terms and wave operators -- {\bf DONE}
%\item estimate size of error in truncated transmission coefficient series after extracting and making explicit the %homogenization approximation plus boundary layer correctors through order $\epsilon^3$. {\bf DONE}
% \item numerics with a sequence of $\delta$ function approximation approaching $\delta(x)$
% to see transition in the scaling. {\bf DONE?}
% \item Does the Lippman Schwinger exp converge uniformly in $k\in\RR$?
% (what could break down -- resonances approaching axis..) {\bf We won't be addressing this here.}\\
% open question: resonance expansion for Green's function, which 
% controls $t(k)$ on $\RR$ and its poles in the LHP
% \end{enumerate}

% BIBLIOGRAPHIE
\bibliographystyle{siam} 
%\nocite{*}  
\bibliography{dw}

\end{document}